\begin{document}

\newcommand{\girth}{g}
\newcommand{\weight}{\textbf{w}}
\newcommand{\parChkMat}{\textbf{H}}
\newcommand{\baseMat}{\textbf{B}}
\newcommand{\shiftMat}[1]{\textbf{I}_{#1}}
\newcommand{\circThree}[3]{\begin{array}{ccc} #1 & #2 & #3
    \\ #3 & #1 & #2\\ #2 & #3 & #1\end{array}}
\newcommand{\x}[1]{x_{[#1]}}
\newcommand{\p}[1]{p_{[#1]}}
\newcommand{\Jrow}[1]{J_{[#1]}}
\newcommand{\Lcol}[1]{L_{[#1]}}
\newcommand{\baseSym}{c}
\newcommand{\sym}[3]{\baseSym_{#1}[#2,#3]}
\newcommand{\symThree}[4]{\baseSym_{#1}[#2,#3,#4]}
\newcommand{\symTwo}[2]{\baseSym[#1,#2]}
\newcommand{\symTil}[3]{\tilde{\baseSym}_{#1}[#2,#3]}

\newcommand{\indSym}{s}
\newcommand{\indSymBf}{{\bf \indSym}}
\newcommand{\ind}[1]{\indSym_{#1}}
\newcommand{\indTil}[1]{\tilde{\indSym}_{#1}}
\newcommand{\bfInd}[2]{{\bf \indSym}[#1,#2]}
\newcommand{\currLevel}{\tilde{k}}
\newcommand{\kronecker}{\raisebox{1pt}{\ensuremath{\:\otimes\:}}} 
\newcommand{\conMat}{{\bf T}}
\newcommand{\connectMat}[1]{\conMat_{[#1]}}
\newcommand{\connectMatEl}[2]{\conMat_{[#1]}[#2]}
\newcommand{\len}{\Lambda}
\newcommand{\coeffSet}{{\cal C}}

\newcommand{\path}{{\cal P}}
\newcommand{\ordSet}{{\cal O}}
\newcommand{\pathCoeff}{{\cal S}}
\newcommand{\move}[2]{\Delta_{#1}{[#2]}}

\newcommand{\setTree}{{\bf \cal T}}
\newcommand{\N}[3]{{\cal N}_{#1}(#2, #3)}
\newcommand{\costVecSym}{\Gamma}
\newcommand{\costSym}{\gamma}
\newcommand{\costB}[1]{\costVecSym_{#1}}
\newcommand{\costBneg}[1]{\costVecSym_{#1}^{-}}
\newcommand{\cost}[1]{\costSym_{#1}}
\newcommand{\costTil}[1]{\tilde{\costVecSym}_{#1}}

\newcommand{\costVecSymHier}[1]{\Gamma[#1]}
\newcommand{\costBHier}[1]{\costVecSym_{#1}}
\newcommand{\baseMatHier}[1]{\baseMat[#1]}

\newcommand{\wt}{{\rm wt}}
\newcommand{\nullTree}{\ast}

\newcommand{\gainMat}{\textbf{G}}

\newtheorem{defn}{Definition}
\newtheorem{example}{Example}
\newtheorem{algorithm}{Algorithm}
\newtheorem{thm}{Theorem}
\newtheorem{lemma}{Lemma}
\newtheorem{cor}{Corollary}

\newcommand{\tightArray}[2]{\left[ \! \begin{array}{c} \!\! #1 \!\! \\ \!\!
      #2 \!\! \end{array} \! \right]}

\title{Hierarchical and High-Girth QC LDPC Codes}

\author{ Yige~Wang,~\IEEEmembership{Member,~IEEE,}
  Stark~C.~Draper,~\IEEEmembership{Member,~IEEE,}
  Jonathan~S.~Yedidia,~\IEEEmembership{Senior~Member,~IEEE}
  \thanks{Y.~Wang is with the Mitsubishi Electric Research
    Laboratories, Cambridge, MA 02139 (yigewang@ieee.org).}
  \thanks{S.~C.~Draper is with the Dept.~of Electrical and Computer
    Engineering, University of Wisconsin, Madison, WI 53706
    (sdraper@ece.wisc.edu).}  \thanks{J.~S.~Yedidia is with the
    Mitsubishi Electric Research Laboratories, Cambridge, MA 02139
    (yedidia@merl.com).}}

\maketitle

\begin{abstract}
We present a general approach to designing capacity-approaching
high-girth low-density parity-check (LDPC) codes that are friendly to
hardware implementation.  Our methodology starts by defining a new
class of {\em hierarchical} quasi-cyclic (HQC) LDPC codes that
generalizes the structure of quasi-cyclic (QC) LDPC codes.  Whereas
the parity check matrices of QC LDPC codes are composed of circulant
sub-matrices, those of HQC LDPC codes are composed of a hierarchy of
circulant sub-matrices that are in turn constructed from circulant
sub-matrices, and so on, through some number of levels.  We show how
to map any class of codes defined using a protograph into a family of
HQC LDPC codes.  Next, we present a girth-maximizing algorithm that
optimizes the degrees of freedom within the family of codes to yield a
high-girth HQC LDPC code.  Finally, we discuss how certain
characteristics of a code protograph will lead to inevitable short
cycles, and show that these short cycles can be eliminated using a
``squashing'' procedure that results in a high-girth QC LDPC code,
although not a hierarchical one.  We illustrate our approach with
designed examples of girth-10 QC LDPC codes obtained from protographs
of one-sided spatially-coupled codes.
\end{abstract}

\section{Introduction}

Two broad classes of methods have emerged for the construction of
low-density parity-check (LDPC) codes~\cite{gallager, RyanLin}.  One
set of methods is based on highly random graph constructions, while
the other is based on structured algebraic constructions.  It is now
well-known that random constructions (see, e.g.,
\cite{Mackay,RichardCap,Luby,chung,ModernCodingTheory}) can produce
LDPC codes that closely approach the Shannon capacity.  However,
highly random constructions are not easy to implement in hardware as
the irregular connections between check and variable nodes in the code
graph imply high wiring complexity.  In actual implementations, more
structured constructions have been strongly preferred because they
result in much more practical wiring and more straightforward
parallelism in the decoders.

Quasi-cyclic LDPC (QC LDPC) codes are a particularly practical and
widely-used class of structured LDPC codes. These codes have a parity
check matrix which is broken into sub-matrices that have a circulant
structure. QC LDPC codes are featured in a variety of communications
system standards, such as IEEE 802.16e \cite{80216e},
DVB-S2\cite{dvbs2} and 802.11 \cite{80211}. In view of their
practicality, we focus in this paper on the design of QC LDPC codes that have good
decoding performance.

For nearly any application, it is important to optimize decoding
performance in the ``water-fall'' regime where the signal-to-noise
ratio (SNR) is relatively low. The standard way to do that for
irregular random constructions is to use ``density-evolution'' or
``EXIT chart'' techniques to obtain the degree distribution that
optimizes the code threshold in the asymptotic limit of long block
lengths \cite{ModernCodingTheory}. These
techniques can also be adapted to QC LDPC codes \cite{Liva}.

However, for some applications, optimizing water-fall performance is not sufficient, and
one must also avoid the ``error floors'' that plague many LDPC codes in the
higher SNR regime. An ``error floor'' 
in the performance curve means that the
decoding failure rate does not continue to decrease rapidly as the SNR increases. 
Eliminating or lowering error floors is 
particularly important for applications that have extreme reliability
demands, including magnetic recording and fiber-optic communication
systems.

In the past, QC LDPC codes have been constructed based on a wide
variety of mathematical ideas, including finite geometries, finite
fields, and combinatorial designs \cite{RyanLin, KouLinIT, Fan,
  TannerQuasi, Vasic02, LinQC, Sridhara, Okamura}.  Recently, there
has also been great interest in the class of ``convolutional''
\cite{TannerQC, Lentmaier 2010} or ``spatially-coupled''
\cite{Kudekar} LDPC codes.  These codes have much more structured than
traditional random constructions.  They have also been shown, using
density evolution techniques, to approach Shannon capacity closely, or
even provably to achieve it on the binary erasure channel (BEC)
\cite{Kudekar}.  These codes are significant here, because they can be
implemented using quasi-cyclic constructions, and they should thus be
able to achieve very good performance while retaining the practicality
of other structured QC LDPC codes.  In this paper, we will focus on
the design of QC LDPC codes based on structures that let them perform
near the Shannon limit in the waterfall regime (such as spatially
coupled codes) but we also aim for excellent error floor performance.

Error floor issues for LDPC codes are investigated in
\cite{ErrorFloorRi, Vasic}, which shows that error floors in belief
propagation (BP) LDPC decoders are generally caused by ``trapping
sets.'' A trapping set is a set of a small number of bits that
reinforce each other in their incorrect beliefs.  Trapping sets of
bits are invariably arranged in clustered short cycles in a code's
Tanner graph~\cite{TannerGraph}.  Therefore, one way to try to remove
trapping sets is to design the code's Tanner graph carefully so that
the dangerous clusters of short cycles do not exist.

An alternative, and at least conceptually simpler approach, is to
design codes with larger girths---the ``girth'' of a code is the
length of the shortest cycle in the code graph.  By removing short
cycles, we remove large swaths of potentially dangerous configurations
of cycles and, at one fell swoop, hopefully lower the error floor.
Motivated by this idea, in this paper, we focus on optimizing the
girth of QC LDPC codes that have also been optimized for waterfall
performance. In this way we hope to design a practical code that
simultaneously has good waterfall and error floor performance.

There has been considerable work on optimizing girth in LDPC codes.
In~\cite{HuPEG} a progressive-edge growth (PEG) algorithm is proposed
for random LDPC codes. The PEG technique is generalized to QC LDPC
codes in~\cite{LiKumar}.  Another approach to optimizing the girth of
QC LDPC codes is studied in~\cite{MarcQC}, where high-girth QC LDPC
codes are obtained using a random ``guess-and-test''
algorithm. Shortened array codes with large girth are proposed
in~\cite{Milenkovic}. However, after shortening, the resulting codes
no longer have a quasi-cyclic structure. In~\cite{Bocharova}, another
class of large girth QC LDPC codes is designed, where the methodology
is mainly for regular LDPC codes.

In this paper, we propose a hill-climbing search algorithm for optimizing girth
 that is more efficient than previous techniques. The hill-climbing algorithm
greedily adjusts an initial QC LDPC code (hopefully) to find a code of
short length that meets the specified code and girth parameters.
Since the algorithm is greedy, it can get stuck in local minima.
However, given a set of parameters, the algorithm finds QC LDPC codes
of shorter length and in less time than guess-and-test or PEG.

Codes with good water-fall performance inevitably have some
irregularity in the degree-distribution of the factor or variable
nodes of the code graph. For the case of QC LDPC codes, these
irregular distributions are most easily described in terms of
``protographs'' \cite{Thorpe}.  Protographs are variants of Tanner
graphs where bits and checks of the same type are represented by a
single ``proto-bit'' or ``proto-check.'' In the case of QC LDPC codes,
proto-bits can, for example, represent sets of bits belonging to the
same circulant sub-matrices.

The protographs that arise in codes that have been optimized for
waterfall performance typically have some pairs of proto-bits and
proto-checks that are connected by multiple edges. A straightforward
way to handle this would be to use QC LDPC codes where the circulant
matrices had rows and columns of weight greater than one. However, as
we shall see, this direct approach inevitably introduces short cycles
into the graph.

The tricky problem of creating QC LDPC codes with good girth and that
correspond to protographs optimized for waterfall performance is
solved in this paper by a somewhat complicated procedure.  First, we
need to introduce a new family of generalized QC LDPC codes, which we
call ``hierarchical'' QC LDPC (HQC LDPC) codes.  The parity check
matrices of these hierarchical codes consist of circulant
sub-matrices, which in turn consist of of circulant sub-sub-matrices,
and so on for multiple ``levels.''  We show that we can directly
transform any protograph with multiple edges between proto-checks and
proto-bits into a {\em two-level} HQC LDPC code with circulant
matrices with higher weight at second level.

It turns out that many different hierarchical QC LDPC codes correspond
to a particular protograph, and thus many degrees of freedom exist
following the ``direct'' transformation.  We use our hill-climbing
algorithm to choose from this family to get rid of as many short
cycles as possible.  However, HQC LDPC codes with weights greater than
one at higher levels will also automatically have some short cycles,
just as non-hierarchical QC LDPCs do.  Our hill-climbing algorithm can
do nothing about these ``inevitable'' cycles but it can, hopefully,
eliminate all short non-inevitable cycles.  To get ride of the
inevitable cycles we introduce a ``squashing'' procedure. The squashing
step destroys the hierarchical structure of the code, but the squashed
code nevertheless remains a QC LDPC code that corresponds to the
desired protograph.  The squashing procedure is computationally
trivial.  This makes the two-step procedure -- first HQC LDPC then
squash into a QC LDPC code -- much more computationally efficient than
directly applying our hill-climbing procedure to maximize the girth of
higher-weight QC LDPC codes with the desired protograph structure.
Thus hierarchical QC LDPC codes are a necessary intermediate stage in
the design of practical QC LDPC codes that will simultaneously have
good waterfall and error floor performance.

The rest of the paper will explain in much more detail the ideas
outlined above. We begin in Section~\ref{sec.defQC_LDPC} by reviewing
the standard construction of QC LDPC codes in terms of their parity
check matrices. Then in Section~\ref{sec.protographs}, we review the
standard Tanner graph representation of LDPC codes and the
``protograph'' representation of structured codes. In
Section~\ref{sec.findCycles} we show how short cycles can be
identified from the parity check matrix of a QC LDPC code.  We discuss
why the most direct transformation of the protographs of interesting
LDPC codes, such as spatially coupled codes, into QC LDPC codes will
lead to inevitable short cycles in the Tanner graph of the resulting
codes. This leads us to the heart of our paper, where we introduce
hierarchical QC LDPC codes that can be used to solve the problem of
inevitable short cycles. In Section~\ref{sec.defHierQC_LDPC} we
introduce the most general form of HQC LDPC codes and show that they
can be described both in terms of a multi-variate polynomial parity
check matrix in multiple variables and in terms of a tree structure.
In Section~\ref{sec.condForCycles}, we explain how to find cycles in
the Tanner graphs of HQC LDPC codes.  In Section~\ref{sec.girthMax} we
describe our hill-climbing algorithm for finding high girth QC LDPC
codes and HQC LDPC codes.  In Section~\ref{sec.pipeline}, we discuss
restricted two-level HQC LDPC codes, the direct transformation of
protographs into such codes, and the ``squashing'' procedure that
efficiently eliminates inevitable cycles.  Finally, in
Section~\ref{sec.results}, we exhibit high-girth QC LDPC codes that
simultaneously have good waterfall behavior (because they are
spatially-coupled codes) and have good error-floor behavior resulting
from their high girth (which in turn is a result of the fact that they
are squashed versions of HQC LDPC codes).  Many details and lemmas are
deferred to the appendices.

\section{Quasi-cyclic LDPC codes}
\label{sec.defQC_LDPC}

We begin by reviewing the construction of standard quasi-cyclic
low-density parity-check (QC LDPC) codes as previously described in the
literature \cite{RyanLin}. In section~\ref{sec.defHierQC_LDPC} we will
generalize these codes and introduce a novel {\em hierarchical} family
of QC LDPC codes.


Before considering the general case of standard QC LDPC codes, it is
helpful to start with an important special case, that we will call
``weight-I $(J,L)$ regular'' QC LDPC codes. The parity check matrix of
these codes consists of $J \cdot L$ sub-matrices, each of which is a
$p \times p$ circulant permutation matrix.

Let $\shiftMat{i, p}$ denote the circulant permutation matrix, or
``cyclic shift matrix,'' obtained by cyclically right-shifting a $p
\times p$ identity matrix by $i$ positions, where $0 \le i \le p-1$;
$\shiftMat{0,p}$ is thus the $p \times p$ identity matrix.  We often
suppress the dependence on $p$, writing $\shiftMat{i}$ instead of
$\shiftMat{i,p}$.  As an example, if $p = 4$, then
\begin{equation*}
\shiftMat{1} = \left[ \begin{array}{cccc} 0 & 1 & 0 & 0\\ 0 & 0 & 1
    & 0\\ 0 & 0 & 0 & 1\\ 1 & 0 & 0 & 0 \end{array} \right].
\end{equation*}

We can write the parity check matrix of a weight-I $(J,L)$ regular QC
LDPC code using $J$ rows and $L$ columns of $p \times p$ cyclic shift
sub-matrices:
\begin{equation}
\parChkMat = \left[
\begin{array}{llll} \shiftMat{i_{1,1}} & \shiftMat{i_{1,2}} & \cdots & \shiftMat{i_{1,L}} \\ 
\shiftMat{i_{2,1}} & \shiftMat{i_{2,2}} & \cdots & \shiftMat{i_{2,
    L}}\\ \vdots &&\ddots& \vdots\\ \shiftMat{i_{J,1}} &
\shiftMat{i_{J,2}} & \cdots & \shiftMat{i_{J,L}} \end{array}
\right]. \label{eq.parChkExAinit}
\end{equation}
The blocklength of such a code is $N=pL$.

Using $\shiftMat{k} = (\shiftMat{1})^k$, we can
re-write~(\ref{eq.parChkExAinit}) as
\begin{equation}
\parChkMat = \left[
\begin{array}{llll} (\shiftMat{1})^{i_{1,1}} & (\shiftMat{1})^{i_{1,2}} & \cdots & (\shiftMat{1})^{i_{1,L}} \\ 
(\shiftMat{1})^{i_{2,1}} & (\shiftMat{1})^{i_{2,2}} & \cdots &
  (\shiftMat{1})^{i_{2, L}}\\ \vdots &&\ddots&
  \vdots\\ (\shiftMat{1})^{i_{J,1}} & (\shiftMat{1})^{i_{J,2}} &
  \cdots & (\shiftMat{1})^{i_{J,L}} \end{array}
\right]. \label{eq.parChkExA}
\end{equation}

We can now abstractly represent $\parChkMat$ as a matrix whose entries
are powers of a dummy variable $x$:
\begin{equation}
\parChkMat(x) = \left[
\begin{array}{cccc} x^{i_{1,1}} & x^{i_{1,2}} & \cdots & x^{i_{1,L}} \\ 
x^{i_{2,1}} & x^{i_{2,2}} & \cdots & x^{i_{2, L}}\\ \vdots &&\ddots&
\vdots\\ x^{i_{J,1}} & x^{i_{J,2}} & \cdots & x^{i_{J,L}} \end{array}
\right]. \label{eq.polyParChkExA}
\end{equation}

The point of all these trivial re-writings will now become clear: we
can generalize such a matrix $\parChkMat(x)$ to a parity check matrix
whose entries are {\em polynomials} in $x$, giving us the {\em
  polynomial} parity check matrix of a standard QC LDPC code:
\begin{equation}
\parChkMat(x) = \left[
\begin{array}{cccc} h_{1,1}(x) & h_{1,2}(x) & \cdots & h_{1,L}(x) \\ 
h_{2,1}(x) & h_{2,2}(x) & \cdots & h_{2,L}(x)\\ \vdots &&\ddots&
\vdots\\ h_{J,1}(x) & h_{J,2}(x) & \cdots & h_{J,L}(x) \end{array}
\right], \label{eq.polyParChkGeneral}
\end{equation}
where 
\begin{equation}
h_{j,l}(x) = \sum_{s = 0}^{p-1} c_s[j,l] x^s
\label{eq.standardCoefficients}
\end{equation}
 for $1 \leq j
\leq J$, $1 \leq l \leq L$.

For {\em binary} QC LDPC codes, which will be our focus for the rest
of this paper, the polynomial coefficients $c_s[j,l]$ must all be 0
or 1.

\begin{example} 
\label{ex.QCLDPC}
 Let ${\cal C}$ be a length-$9$ QC LDPC code described by

\begin{equation}
\parChkMat \! = \!\left[\!\begin{array}{ccc|ccc|ccc}
1&0&0& 1&0&0& 1&0&0 \\
0&1&0& 0&1&0& 0&1&0 \\
0&0&1& 0&0&1& 0&0&1 \\ \hline
0&0&0& 1&0&0& 0&1&1 \\
0&0&0& 0&1&0& 1&0&1 \\
0&0&0& 0&0&1& 1&1&0 \\
\end{array}\!\right]. 
\end{equation}
For this code $J = 2$, $L =3$, and $p = 3$, and $\parChkMat$ can
equivalently be written as
\begin{equation}
\parChkMat = \left[ \begin{array}{ccc} \shiftMat{0} &
  \shiftMat{0} & \shiftMat{0} \\ {\bf 0} & \shiftMat{0} & \shiftMat{1} + \shiftMat{2} 
\end{array} \right].
\end{equation}
The polynomial version of the parity check matrix is
\begin{equation}
\parChkMat(x) = \left[ \begin{array}{ccc} x^0 & x^0 & x^0 \\ 0 & x^0 &
    x^1 + x^2
\end{array} \right] = \left[ \begin{array}{ccc} 1 & 1 & 1 \\ 0 & 1 &
    x^1 + x^2
\end{array} \right]. \label{eq.polyChkMatExA}
\end{equation}
\hfill \QED
\end{example}

In~\cite{vontobelRefs}, Smarandache and Vontobel classified QC LDPC
codes according to the maximum weight among the circulant sub-matrices
in their parity check matrix, or equivalently, according to the
maximum weight of the polynomials in their polynomial parity check
matrix. (The weight of a polynomial as simply the number of non-zero
terms in that polynomial.)  They defined a ``type-$M$'' QC LDPC code
as one for which the maximum weight among all polynomial entries
$h_{j,l}(x)$ in $\parChkMat(x)$ is $M$. We will change their
terminology slightly and call such a code a {\em weight-$M$} QC-LDPC
code

Since ${\rm wt}(h_{2,3}(x)) = 2$ in the code of Example
\ref{ex.QCLDPC}---that is, $h_{2,3}(x) = x^1+x^2$ is a binomial---and
because ${\rm wt}(h_{2,3}(x)) \geq {\rm wt}(h_{j,l}(x))$ for all $1
\leq j \leq J$, $1 \leq l \leq L$, the code in Example~\ref{ex.QCLDPC}
is a weight-II QC LDPC code.

For any QC LDPC code, we define the vector of weight sums
$\sum_{j=1}^{J} {\rm wt}(h_{j,l}(x))$ for $1 \leq l \leq L$, to be the
``column weight sum,'' ${\rm wt}_{\rm col}(\parChkMat(x))$, of
$\parChkMat(x)$.  We define the row weight sum ${\rm wt}_{\rm row}$ of
$\parChkMat(x)$ similarly.  Thus, the code of Example \ref{ex.QCLDPC}
has column and row weight sums
$${\rm wt}_{\rm col}(\parChkMat(x)) = [1 \; 2 \; 3]; \hskip 1.0em {\rm wt}_{\rm
  row}(\parChkMat(x)) = [3 \; 3].
$$

It should now be clear why we previously referred to codes of the form
of equation~(\ref{eq.parChkExAinit}) or (\ref{eq.polyParChkExA}) as
``weight-I'' codes, as all the entries in the polynomial parity check
matrix are monomials. The class of weight-I codes is more general than
that shown in equation~(\ref{eq.parChkExAinit}) though: some of the
cyclic shift sub-matrices could be replaced with all-zeros matrices.

As we often work with weight-I QC LDPC codes, and these codes are
particularly important in practice, we introduce some additional
useful notation for them.  We define the {\em base matrix} of a
weight-I QC LDPC code to be the $J \times L$ matrix of powers
(circulant shifts) that defines the code, i.e.,
$\log_x(\parChkMat(x))$ where logarithms are taken entry-by-entry, and
where we define $\log_x(0)$ to be $-1$, used to indicate an all-zero
sub-matrix.  For example, the base matrix corresponding to the parity
check matrix (\ref{eq.polyParChkExA}) is simply
\begin{eqnarray}\label{DefB}
\baseMat = \left[
  \begin{array}{cccc} i_{1,1} & i_{1,2} & \cdots & i_{1,L} \\ i_{2,1} & i_{2,2} & \cdots & i_{2,L} \\
    \vdots & & \ddots & \vdots \\ i_{J,1} & i_{J,2} & \cdots & i_{J,L}
      \end{array} \right].
\end{eqnarray}

\section{Graphical Representations of QC LDPC Codes}
\label{sec.protographs}

As is very well known, an LDPC code can either be represented by its
parity check matrix $\parChkMat$, or equivalently by its Tanner graph
\cite{TannerGraph}. A Tanner graph for an LDPC code is a bi-partite
graph consisting of ``variable'' nodes representing the codeword bits,
and ``check'' nodes representing the parity checks, where a variable
node is connected to a check node by an edge if and only if the
corresponding entry in $\parChkMat$ is nonzero. The degree of a node
is defined as the number of edges incident to that node.

A ``protograph,'' as introduced by Thorpe in \cite{Thorpe}, is a
template that can be used to derive a class of Tanner graphs.  Each
node in a protograph represents a ``type'' of node in a Tanner graph.
The nodes will all be duplicated $p$ times in the Tanner graph derived
from the protograph.

\begin{figure}
\centering
\includegraphics[width=3.0in]{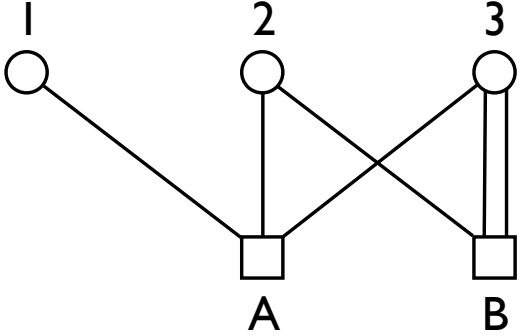}
\caption{A simple protograph with three types of variables and two types of checks.}
\label{fig:proto1}
\end{figure}

As an example, consider Fig.~\ref{fig:proto1}, which shows a simple
example of a protograph that has three types of variable nodes and two
types of check nodes. This protograph tells us that each check of type
A should be connected to one variable of each of the three types, and
each check of type B should be connected to one variable of type 2 and
two variables of type 3.  Similarly, each variable of type 1 should be
connected to one check of type A, and so on.

\begin{figure}
\centering
\includegraphics[width=3.2in]{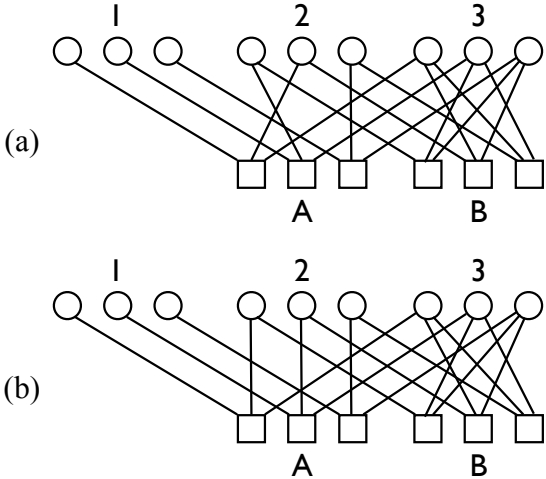}
\caption{Two Tanner graphs corresponding to the protograph shown in Fig.~\ref{fig:proto1}. The Tanner graph in (a) does not have a quasi-cyclic structure; the
one in (b) does, and in fact has the parity check matrix of the QC LDPC code given in 
Example~\ref{ex.QCLDPC}.}
\label{fig:proto2}
\end{figure}

Fig.~\ref{fig:proto2} shows two Tanner graphs derived from the
protograph of Fig. 1, with $p=3$. Note that there are many possible
Tanner graphs that one can construct that correspond to a particular
protograph, and they need not necessarily have a quasi-cyclic
structure. The Tanner graph shown in Fig.~\ref{fig:proto2}~(a) is not
quasi-cyclic. But it is always easy to construct a quasi-cyclic
version of any protograph.

In fact, protographs can equivalently be described by a ``connectivity
matrices.''  A connectivity matrix has a number of rows equal to the
number of types of checks in the protograph and a number of columns
equal to the number of types of variables. Each entry in the
connectivity matrix tells you how many edges there are connecting a
type of check node to a type of variable node in the protograph.  For
example, the connectivity matrix $C$ for the protograph in
Fig.~\ref{fig:proto1} would be
\begin{equation}
C = \left[ \begin{array}{cccc} 
1 & 1 & 1 \\
0 & 1 & 2 \end{array}\right].
\label{eq:Cmatrix}
\end{equation}

To derive a quasi-cyclic parity-check matrix $\parChkMat(x)$ 
from the template specified by a particular protograph, one can
simply replace each entry in the equivalent
connectivity matrix with a polynomial of weight equal
to the entry. We will call this procedure
 a ``direct transformation'' of a protograph into a QC LDPC code.
  
For example, the protograph in Fig.~\ref{fig:proto1} which has the connectivity
matrix $C$ given in (\ref{eq:Cmatrix}),
can be directly transformed into a QC LDPC code with parity check matrix
\begin{equation}
\parChkMat(x) = \left[ \begin{array}{cccc} 
x^a & x^b & x^c \\
0 & x^d & x^e + x^f \end{array}\right],
\end{equation}
where $a$, $b$, $c$, $d$, $e$ and $f$ are integer exponents between $0$ and $p-1$, 
with $e \ne f$.

There are many possible direct transformations of a protograph into a QC LDPC code, 
depending on what exponents one
chooses for the polynomials; 
one particular direct transformation would convert this protograph into the QC LDPC
code with parity check matrix
\begin{equation}
\parChkMat(x) = \left[ \begin{array}{cccc} 
x^0 & x^0 & x^0 \\
0 & x^0 & x^1 + x^2 \end{array}\right].
\end{equation}
which would correspond to the Tanner graph shown in Fig.~\ref{fig:proto2}~(b)
and the code given in Example~\ref{ex.QCLDPC}.

\section{Cycles in QC LDPC codes}
\label{sec.findCycles}

In this section we discuss how to identify cycles in QC LDPC codes
from their parity check matrices.  Each check node in the Tanner graph
of a code corresponds to a row in its parity check matrix, and each
variable node corresponds to a column.  A cycle is a path through
nodes in the Tanner graph, alternating between check and variable
nodes, that starts and ends at the same node.  In terms of the code's
parity check matrix, a cycle can be visualized as a sequence of
alternating vertical horizontal moves through the matrix starting and
ending on the same row of the matrix.  A vertical move (along a
column) corresponds to choosing a second edge connected to the same
variable node that will form the next step in the cycle.  A horizontal
move (along a row) corresponds to choosing two edges connected to the
same check node that form part of the path.

For QC LDPC codes there are efficient ways to describe sets of cycles
in terms of the code's polynomial parity check matrix.  In
Section~\ref{sec.findCyclesIntro} we introduce the basic ideas behind
identifying cycles in weight-I QC LDPC codes. In
Section~\ref{sec.findCyclesNonHier}, we show how to identify cycles in
QC LDPC codes of arbitrary weight. Then, in
Section~\ref{sec.inevitable}, we show that higher-weight QC LDPC codes
with certain characteristics inevitably have short cycles, and point
out that this poses an obstacle to constructing QC LDPC codes with
good girth and good waterfall performance---an obstacle that we will
overcome by introducing hierarchical QC LDPC codes.

\subsection{Finding cycles in weight-I QC LDPC codes}
\label{sec.findCyclesIntro}

To make the logic of the section introduction more concrete, consider
Fig.~\ref{fig.findCycle} which depicts the parity check matrix of a weight-I
QC LDPC code with parameters $J=4$, $L = 9$, and $p = 3$. We focus in
on the four $3 \times 3$ cyclic shift matrices (represented by the
black squares) $\shiftMat{a}$, $\shiftMat{b}$, $\shiftMat{c}$,
and $\shiftMat{d}$. Two choices for the parameters of these four
matrices are shown in the sub-figures: $a = 0$, $b = 1$, $c = 2$,
and $d = 1$ on the left, and $a = 0$, $b=c=d=1$ on the right.

\begin{figure}[htbp]
      \centering
      \includegraphics[scale=0.5,type=eps,ext=.eps,read=.eps]
                      {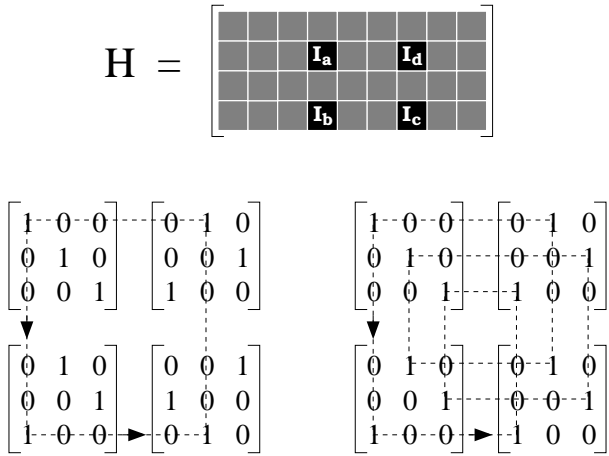}
      \caption{A parity-check matrix and four $3\times 3$ circulant
        permutation matrices ($\shiftMat{a}$, $\shiftMat{b}$,
        $\shiftMat{c}$ and $\shiftMat{d}$) selected from it.  One
        set of parameters (lower left, $a = 0$, $b = 1$, $c =
        2$, $d = 1$ ) results in a cycle of length four.  An
        alternate set (lower right, $a=0$, $b=c=d=1$) results
        in a cycle of length twelve.}
      \label{fig.findCycle}
\end{figure}

Consider any path through the base matrix of the code.  Due to the way
we generate the code's parity check matrix by replacement of each base
matrix entry by a $p \times p$ circulant matrix, a path through the
base matrix corresponds to $p$ paths through the Tanner graph of the
code.  For any of these paths through the Tanner graph to be a cycle,
the path must end at the same variable node from which it started.
For this to happen in a weight-I QC LDPC code, it is necessary for the
path through the base matrix to form a cycle, without passing through
any all-zeros matrices.  But this is not sufficient, since each cyclic
shift matrix corresponds to $p$ parity and $p$ variable nodes.  The
path could end up at a different variable node in the same cyclic
shift matrix and not complete a cycle.

The necessary and sufficient condition for cycles to exist is that
when the path through the base matrix returns to the starting entry,
it returns to the same column of the cyclic shift matrix from which it
started.  In the example of Fig.~\ref{fig.findCycle}, consider the
path through the base matrix starting at the entry labeled $a$, then
progressing through the entries labeled $b$, $c$, and $d$ in turn, and
terminating at the entry labeled $a$.  The corresponding path through
the parity check matrix, with parameter settings $a=0$, $b = 1$, $c =
2$, $d = 1$, is depicted in the left-hand example of
Fig.~\ref{fig.findCycle} and results in a cycle of length four.
However, with the slightly different choice of circulant shifts of the
right-hand example, a return to the same column of the cyclic shift
matrix occurs only after two more passes around the base matrix and an
overall cycle of length $12$.

We now specify the conditions on the $\{a,b,c,d\}$ that result in a
cycle (in fact in a set of $p$ cycles).  Calculate an alternating sum
of the shift indices associated with neighboring permutation matrices
along a given path, where every odd shift index is subtracted rather
than added.  For example, consider the left-hand path of
Fig.~\ref{fig.findCycle}.  The sum is $-a + b - c + d$.  Each
difference between neighboring shift indices in the sum corresponds to
the shift in what column (i.e., what variable node) of the cyclic
permutation matrices the path passes through.  Only if the differences
sum to zero (mod-$p$) at the end of the path will the path return to
the same variable node in the starting permutation matrix, thereby
forming a cycle.  For the example of Fig.~\ref{fig.findCycle}, the
condition for a length-four cycle to exist is:
\begin{equation}\label{p4cycle}
(-a + b - c + d) \; \; \textmd{mod} \; \; p  = 0,
\end{equation}
which is satisfied for $a = 0$, $b = 1$, $c = 2$, $d = 1$, but
is not satisfied by $a = 0$, $b = c = d = 1$.  

\subsection{Finding cycles in higher-weight QC LDPC codes}
\label{sec.findCyclesNonHier}

We now take a step up in complexity from weight-I QC LDPC codes, and consider 
the more involved example of the weight-II code of
Example~\ref{ex.QCLDPC} from Section~\ref{sec.defQC_LDPC}.  
Recall that this code
is defined by the $2 \times 3$ polynomial
parity-check matrix 
\begin{equation}
\parChkMat(x) = \left[ \begin{array}{ccc} x^0 & x^0 & x^0 \\ 0 & x^0 &
    x^1 + x^2
\end{array} \right]. \label{eq.polyChkMatExA2}
\end{equation}

In terms of the coefficients $c_s[j,l]$ defined by 
$h_{j,l}(x) = \sum_{s = 0}^{p-1} c_s[j,l] x^s$ (see (\ref{eq.standardCoefficients})),
we have that all the coefficients $c_s[j,l]$ are equal to
zero except for $c_s[j,l] = 1$ 
when $s = 0$ and $(j,l)$ equals $(1,1)$, $(1,2)$,
$(1,3)$ or $(2,2)$, and for $s=1$ or $s= 2$, when $(j,l) = (2,3)$.

Now, consider the following ordered series:
\begin{equation}
\ordSet = \{(1,2), (2,2), (2,3), (2,3), (2,3), (1,3)\} \label{eq.ordSerEx}
\end{equation}
where each pair $(j,l)$ in $\ordSet$ satisfies $1 \leq j \leq J = 2$
and $1 \leq l \leq L = 3$.  This ordered series specifies a sequence
of rectilinear moves through $\parChkMat(x)$.  These moves are
analogous to those in Fig.~\ref{fig.findCycle} with the important
distinction that if the polynomial in position $(j,l)$ has more than
one term (that is, $c_{s}[j,l]$ is non-zero for more than one value of
$s$), then the next pair in the sequence {\em can} be the same.  For
example, in~(\ref{eq.ordSerEx}) the third, fourth, and fifth pairs are
identical.

To specify a candidate cycle through the Tanner graph, we associate a
coefficient index $s$ with each pair $(j,l)$ in $\ordSet$, such that
$c_s[j,l] \ne 0$.  We denote this series of coefficient indices by
$\pathCoeff$. To ensure that each step in the series corresponds to
traversing a distinct edge in the Tanner graph we require the
following of neighboring pairs $(j^-, l^-)$ and $(j^+, l^+)$ in
$\ordSet$ and the corresponding neighboring coefficient indices $s^-$
and $s^+$ in $\pathCoeff$: if $(j^-, l^-) = (j^+, l^+)$, then the
corresponding indices $s^- \neq s^+$.

The candidate cycle will actually be a cycle if the alternating sum of
coefficient indices in $\pathCoeff$ modulo $p$ equals zero.

In our example, consider the
two following choices for the respective (ordered) sets of
coefficient indices:
\begin{align}
\pathCoeff_a = \{0,0,2,1,2,0\} \label{eq.pathCoeffEx}\\
\pathCoeff_b = \{0,0,1,2,1,0\}. \label{eq.pathCoeffExB}
\end{align}
Each of these choices corresponds to a cycle of length-$6$ through the
Tanner graph of the code, illustrated in Fig.~\ref{fig.weightIIcycles}.
The alternating sums modulo-$3$ can be verified to be equal to zero.
Respectively these sums are:
\begin{eqnarray*}
(- 0 + 0 - 2 + 1 - 2 + 0) \; \mbox{mod} \; 3 & =  (-3) \; \mbox{mod} \; 3 & =  0 \\
(- 0 + 0 - 1 + 2 - 1 + 0) \; \mbox{mod} \; 3 & =   \;\;\;(0) \; \mbox{mod} \; 3 & =  0.
\end{eqnarray*}

\begin{figure}[htbp]
      \centering
      \includegraphics[scale=0.5,type=eps,ext=.eps,read=.eps]
                      {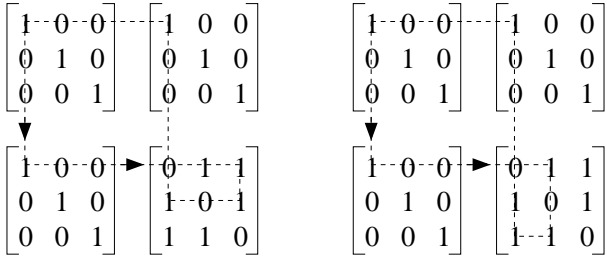}
      \caption{The two length-$6$ cycles through the Tanner graph of
        the weight-II QC LDPC code of Example~\ref{ex.QCLDPC}.}
      \label{fig.weightIIcycles}
\end{figure}

\subsection{Inevitable cycles in higher-weight QC LDPC codes}
\label{sec.inevitable}

Unfortunately, the logic described in the previous section implies that higher-weight
QC LDPC codes will inevitably contain short cycles. Let us begin with a 
straightforward and important theorem, already proven by 
Smarandache and Vontobel \cite{vontobelRefs}, 
that states that any weight-III QC LDPC code will inevitably
contain cycles of length six. To prove this, we note that we can choose
a cycle with an ordered series 
\begin{equation}
\ordSet = \{(j,l),(j,l),(j,l),(j,l),(j,l),(j,l)\} \label{eq.ordSerExB}
\end{equation}
of six identical entries such that each pair $(j,l)$
gives the row $j$  and column $l$ of the
same weight-III polynomial in the parity check matrix $\parChkMat(x)$. 
Suppose, without loss of generality, that the weight-III polynomial has the
form $x^a + x^b + x^c$. Then we can choose for the cycle 
an ordered set of coefficient indices
$\pathCoeff = \{a,b,c,a,b,c\}$ and we will find that 
\begin{equation}
(-a + b - c + a - b +c) \; \; \textmd{mod} \; \; p  = 0, \label{eq.automaticCycleEx}
\end{equation}
automatically for any $p$.

Smarandache and Vontobel also proved (see their Theorem 17) 
that if the parity check matrix
$\parChkMat(x)$ of a  weight-II QC LDPC code contains
two weight-two polynomials in the same row or the same column, that code will
inevitably have eight-cycles. Again, this is easy to verify using our approach.
Suppose for example that the two weight-2 polynomials are in the same row $j$ and two
different columns $l_1$ and $l_2$, and that the polynomial at $(j,l_1)$ is
$x^a + x^b$, while the polynomial at $(j,l_2)$ is $x^c + x^d$. We can find an eight-cycle
that has
the ordered series
\begin{equation}
\ordSet =
\{(j,l_1),(j,l_1),(j,l_2),(j,l_2),(j,l_1),(j,l_1),(j,l_2),(j,l_2)\} \label{eq.ordSerExC}
\end{equation}
and the ordered set of indices
\begin{equation}
\pathCoeff = \{a, b, c, d, b, a, d, c\} \label{eq.pathCoeffExC}
\end{equation}
so that we find
\begin{equation}
(-a + b - c + d - b + a - d + c) \; \; \textmd{mod} \; \; p  = 0,
\end{equation}
regardless of the value of $p$.

These inevitable six-cycles and eight-cycles at first sight appear to
put serious limitations on what protographs can be converted into
quasi-cyclic codes with high girth.  We noted in
Section~\ref{sec.protographs} that a protograph could be equivalently
described using a connectivity matrix, and that a parity check matrix
of a quasi-cyclic code could be derived from the connectivity matrix
by the ``direct transformation'' which replaces the entries of the
connectivity matrix by polynomials with weight equal to the entry. We
now see that if, for example, the protograph has a type of variable
that is connected to a type of check by three edges, a direct
transformation will inevitably lead to six-cycles in the obtained QC
LDPC code.

\begin{figure}[htbp]
      \centering
      \includegraphics[scale=0.3,type=eps,ext=.eps,read=.eps]
                      {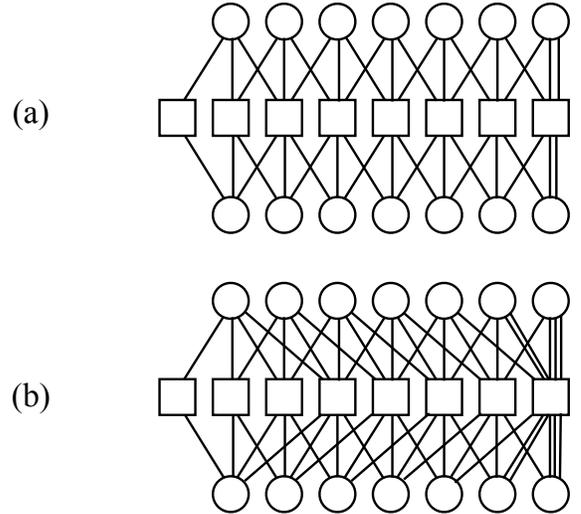}
      \caption{Protographs for ``one-sided'' spatially-coupled codes
        as described in \cite{Kudekar}. The QC LDPC code constructed by
        a direct transformation from the protograph in (a) will inevitably have
        eight-cycles because the check type at the right end is connected
        by two edges to the bit types above and below it. The QC LDPC
        code constructed by a direct transformation from the protograph in (b)
        will inevitably have six-cycles because there exist bits types
        at the right end that are connected by three edges to a check type.}
      \label{fig.onesided}
\end{figure}

Furthermore, protographs with higher edge weights are not particularly
exotic. Consider for example the protographs shown in Fig.~\ref{fig.onesided}, 
which are the protographs for ``one-sided''
spatially coupled codes as described by Kudekar et
al. \cite{Kudekar}. Notice that if we used a direct transformation to
convert these protographs into QC LDPC codes, the QC LDPC codes
corresponding to the protographs in Fig.~\ref{fig.onesided}~(a) would
inevitably have eight-cycles, while those in Fig.~\ref{fig.onesided}~(b) would inevitably have six-cycles.

It turns out that there do exist techniques to construct QC LDPC codes corresponding
to these protographs that have girth of 10 or greater, but to understand these
techniques, we need to make an apparent detour, and introduce {\em hierarchical}
QC LDPC codes.

\begin{figure*}
\begin{align}
\parChkMat(x) & = \left[ \begin{array}{c|c|c|c}
  \circThree{x^2}{0}{x^1+x^7} & \circThree{0}{x^7}{1+x^6} &
  \circThree{0}{0}{0} & \circThree{x^2}{x^5}{1} \\\hline 
  \circThree{0}{x^7}{1+x^6} & \circThree{x^2}{0}{x^1+x^7} &
  \circThree{x^2}{x^5}{1} & \circThree{0}{0}{0}
\end{array} \right] \label{eq.polyParChkB} \\ \nonumber\\
\parChkMat(x,y) & = \left[\begin{array}{c|c|c|c} x^2 + (x+x^7)y^2 &
    x^7y+ (1+x^6)y^2 & 0 & x^2 + x^5y + y^2\\ \hline
    x^7y+ (1+x^6)y^2 & x^2 + (x+x^7)y^2 & x^2 + x^5y + y^2 & 0
\end{array} \right] \label{eq.polyParChkTwoVarB} \\ \nonumber \\
\parChkMat(x,y,z) & = \Big[ x^2 + (x+x^7)y^2 + (x^7y+(1+x^6)y^2)z \; \Big|
  \; (x^2+x^5y+y^2)z\Big] \label{eq.polyParChkThreeVarB}
\end{align}
\end{figure*}

\section{Hierarchical QC LDPC codes}
\label{sec.defHierQC_LDPC}

We now introduce {\em hierarchical} QC LDPC codes (HQC LDPC codes),
motivated by the fact that these codes will ultimately enable us to
solve the problem of constructing QC LDPC codes corresponding to
protographs with multiple edges between check and variable types,
without creating inevitable short cycles in the Tanner graph of the
code.  However, because these codes may eventually have other
applications, we present their construction in a form that is actually
more general than we will need for the purpose of eliminating
inevitable short cycles.

A hierarchical QC LDPC code is formed from ``levels'' that each have a
quasi-cyclic structure. The structure can be specified in two
equivalent, complementary forms: one in terms of the polynomial parity
check matrices of these codes, and another in terms of the ``tree
structure'' of these codes.

\subsection{Parity check matrices of hierarchical QC LDPC codes}

Before fully defining HQC LDPC codes formally, it is easier to have a
concrete example in mind.

\vskip .1cm
\begin{example} \label{ex.hierQCLDPC}
Consider the polynomial parity check matrix specified in
equation~(\ref{eq.polyParChkB}) with $p=8$. Because the highest weight
of any of the polynomial entries is $2$, (e.g., $h_{1,3}(x) = x^1 +
x^7$), and because there are $12$ columns in the matrix, this is a
length-96 weight-II QC LDPC code.

But note that this parity check matrix has additional structure which
makes it a {\em hierarchical} QC LDPC code.  In particular, in this
example, each $3 \times 3$ sub-matrix of polynomials
in~(\ref{eq.polyParChkB}) has a circulant structure, as do both the
left-hand and right-hand sets of $2 \times 2$ sub-matrices of $3
\times 3$ sub-matrices.

Just as we use polynomials in the dummy variable $x$ to represent the
underlying circulant sub-matrices in a standard QC LDPC code, we can
use a bi-variate polynomial in the two dummy variables $x$ and $y$ to
represent both the circulant matrices represented by the variable $x$
in~(\ref{eq.polyParChkB}) as well as the circulant arrangements within
each $3 \times 3$ sub-matrix of polynomials in $x$.  The latter
circulant structure we represent using the dummy variable $y$. We can
further represent the $2 \times 2$ circulant structure of $3 \times 3$
circulant sub-matrices using the additional dummy variable $z$.

Thus, in equation~(\ref{eq.polyParChkTwoVarB}) we contract the $6
\times 12$ polynomial parity check matrix $\parChkMat(x)$ of
equation~(\ref{eq.polyParChkB}) into the $2 \times 4$ bi-variate
polynomial parity check matrix $\parChkMat(x,y)$. As we use this
example to illustrate many aspects of the ensuing discussion, please
make sure you think about and understand why, e.g., the upper right $3
\times 3$ sub-matrix in $\parChkMat(x)$ is represented by the
bi-variate polynomial $x^2 + x^5 y + y^2$ in $\parChkMat(x,y)$.

We can repeat the process to contract $\parChkMat(x,y)$ into the $1
\times 2$ tri-variate polynomial parity check matrix
$\parChkMat(x,y,z)$ given in equation~(\ref{eq.polyParChkThreeVarB}).

Each of the three contractions of the parity check matrix of this code
into the polynomial parity check matrices represented
by~(\ref{eq.polyParChkB}), (\ref{eq.polyParChkTwoVarB}),
and~(\ref{eq.polyParChkThreeVarB}), corresponds to a ``level'' in the
hierarchy of this 3-level HQC LDPC code. \hfill \QED
\end{example}
\vskip .1cm

In this example, we started with a polynomial parity check matrix
$\parChkMat(x)$, and contracted it first to $\parChkMat(x,y)$ and then
to $\parChkMat(x,y,z)$. When constructing an HQC LDPC code, it is
often more natural to go in the other direction---expanding a matrix
like $\parChkMat(x,y,z)$ into $\parChkMat(x)$ and then ultimately into
the full parity check matrix whose entries are ones and zeroes.  To
expand a polynomial matrix, we obviously need to know the size of the
circulant matrices at every level.

We now present a formal definition of the family of $K$-level
hierarchical QC LDPC codes which generalizes our example.

\vskip .1cm
\begin{defn} \label{def.hierQC_LDPC}
A hierarchical QC LDPC code with $K$ levels is defined by a 
$\Jrow{K}
\times \Lcol{K}$ multi-variate polynomial parity check matrix
$\parChkMat(\cdot)$ in $K$ variables. 
The entry in the $j$th row and
$l$th column of $\parChkMat(\cdot)$, $1 \leq j \leq \Jrow{K}$, $1 \leq
l \leq \Lcol{K}$ is a $K$-variate polynomial $h_{j,l}(\cdot, \ldots,
\cdot)$ over the $K$ variables, $\x{1}, \ldots, \x{K}$.  The
maximum exponent of any of these polynomials in $\x{k}$, $1 \leq k
\leq K$, is $\p{k}-1$.  The coefficient associated with the term
$\x{1}^{\ind{1}} \cdot \x{2}^{\ind{2}} \cdots \x{K}^{\ind{K}}$ where
$0 \leq \ind{k} \leq \p{k}-1$ for all $k$ is $\sym{\ind{1}, \ldots,
  \ind{K}}{j}{l}$.  With these definitions we defined the code by the
$\Jrow{K} \cdot \Lcol{K}$ polynomials
\begin{align}
h_{j,l}(\x{1}, \ldots, & \x{K}) = \nonumber \\ & \sum_{\ind{K} =
  0}^{\p{K}-1} \ldots \sum_{\ind{1} = 0}^{\p{1}-1} \sym{\ind{1},
  \ldots, \ind{K}}{j}{l} \left( \prod_{k=1}^{K}
\x{k}^{\ind{k}}\right).
\end{align}
The parity check matrix of such a code is obtained by replacing each
of the $\Jrow{K} \cdot \Lcol{K}$ entries of $\parChkMat(\x{1}, \ldots,
\x{K})$ with the sub-matrix
\begin{equation}
\sum_{\ind{K} = 0}^{\p{K}-1} \!\!\! \ldots \sum_{\ind{1} = 0}^{\p{1}-1}
\sym{\ind{1}, \ldots \ind{K}}{j}{l} \left(
\shiftMat{1,\p{K}}^{\ind{K}} \kronecker \ldots \kronecker
\shiftMat{1,\p{1}}^{\ind{1}}\right), \label{eq.constPolyParCheck}
\end{equation}
where $\kronecker$ denotes a Kronecker product.  Defining the
recursive relations $\Jrow{k-1} = \Jrow{k} \cdot \p{k}$ and
$\Lcol{k-1} = \Lcol{k} \cdot \p{k}$, where $0 \leq k \leq K$, the
parity check matrix thus constructed has $\Jrow{0} = \Jrow{K} \cdot
\prod_{k=1}^{K} \p{k}$ rows and $\Lcol{0} = \Lcol{K} \cdot
\prod_{k=1}^{K} \p{k}$ columns. \hfill \QED
\end{defn}
\vskip .1cm

While the definition of HQC LDPC codes holds more generally for codes
defined in fields other than GF(2), in this paper we exclusively
consider {\em binary} QC LDPC codes wherein all $\sym{\ind{1}, \ldots,
  \ind{K}}{j}{l}$ are binary.  We return to our previous example to
illustrate our definitions.

{\em Example~\ref{ex.hierQCLDPC} (continued):} The code of
this example is a three-level HQC LDPC code.  To cast this
example into the language of Definition~\ref{def.hierQC_LDPC} we first
identify $x$ with $\x{1}$, $y$ with $\x{2}$, and $z$ with $\x{3}$.

In this example $\p{1} = 8$, $\p{2} = 3$, $\p{3} = 2$. Therefore,
$\Jrow{3} = 1$, $\Lcol{3} = 2$; $\Jrow{2} = 2$, $\Lcol{2} = 4$;
$\Jrow{1} = 6$, $\Lcol{1} = 12$; and $\Jrow{0} = 48$, $\Lcol{0} = 96$.

We can rewrite, e.g., the term $h_{1,1}(x,y,z)$
of~(\ref{eq.polyParChkThreeVarB}) as
\begin{align*}
& h_{1,1}(\x{1}, \x{2}, \x{3}) \\ & = \x{1}^2 +
  \left(\x{1}+\x{1}^7\right)\x{2}^2 +
  \left(\x{1}^7\x{2}+\left(1+\x{1}^6\right)\x{2}^2\right)\x{3}\\ 
& = \sum_{\ind{3} = 0}^{1} \sum_{\ind{2} = 0}^{2} \sum_{\ind{1} =
    0}^{7} \sym{\ind{1}, \ind{2}, \ind{3}}{1}{1} \x{1}^{\ind{1}}
  \x{2}^{\ind{2}} \x{3}^{\ind{3}},
\end{align*}
where all coefficients $\sym{\ind{1},\ind{2},\ind{3}}{1}{1}$ are zero except
for $\sym{2,0,0}{1}{1}=\sym{1,2,0}{1}{1}=\sym{7,2,0}{1}{1} =
\sym{7,1,1}{1}{1} = \sym{0,2,1}{1}{1} = \sym{6,2,1}{1}{1} = 1$.  \hfill \QED
\vskip .1cm

Rather than expanding $\parChkMat(\x{1}, \ldots, \x{K})$ into a full
parity check matrix as in~(\ref{eq.constPolyParCheck}), one often
wants to generate the form given in
equation~(\ref{eq.polyParChkGeneral}) of the polynomial parity check
matrix $\parChkMat(\x{1})$ of a QC LDPC code {\em in one variable}.
To do this we use the construction of~(\ref{eq.constPolyParCheck}) for
all but the first level.  We replace each $h_{j,l}(\x{1},\ldots,
\x{K})$ with the polynomial matrix in $\x{1}$
\begin{align}
\sum_{\ind{K} = 0}^{\p{K}-1} \!\!\!\! \ldots \! \sum_{\ind{1} =
  0}^{\p{1}-1} \!  \! \sym{\ind{1}, \ldots, \ind{K}}{j}{l} \!\!
\left( \shiftMat{1,\p{K}}^{\ind{K}} \!\! \kronecker \! \cdots \!
\kronecker \shiftMat{1,\p{2}}^{\ind{2}} \right)
\x{1}^{\ind{1}}. \label{eq.constPolyParCheckB}
\end{align}
The matrix $\parChkMat(\x{1})$ is of size $\Jrow{1} \times \Lcol{1}$.
We return once more to our example to illustrate this idea.

{\em Example~\ref{ex.hierQCLDPC} (continued):} Consider the
final term of $h_{1,1}(\x{1}, \x{2}, \x{3})$, namely
$(1+\x{1}^6)\x{2}^2\x{3}$, corresponding to the non-zero coefficients
$\sym{0,2,1}{1}{1}$ and $\sym{6,2,1}{1}{1}$.  According to
equation~(\ref{eq.constPolyParCheckB}), The contribution of this term
to $\parChkMat(\x{1})$ is
\begin{equation*}
\sym{0,2,1}{1}{1} \left( \shiftMat{1,2} \kronecker
\shiftMat{1,3}^2\right)\x{1}^0 + \sym{6,2,1}{1}{1} \left( \shiftMat{1,2}
\kronecker \shiftMat{1,3}^2 \right) \x{1}^6,
\end{equation*}
where $\x{1}^0 = 1$, $\sym{1,2,1}{1}{1} = \sym{6,2,1}{1}{1} = 1$ and
\begin{equation}
\shiftMat{1,2} \kronecker \shiftMat{1,3}^2 = 
%
%
\left[ \begin{array}{cccccc}
    0&0&0&0&0&1\\ 0&0&0&1&0&0\\ 0&0&0&0&1&0\\ 0&0&1&0&0&0\\ 
1&0&0&0&0&0\\ 0&1&0&0&0&0 \end{array}\right]. \label{eq.6x6matrix}
\end{equation}
Referring back to the left-hand six-by-six sub-matrix of
$\parChkMat(x,y,z)$ in~(\ref{eq.polyParChkB}) we can confirm the
correctness of this pattern, as a $1+x^6$ term appears in each
of the non-zero entries in the matrix of equation (\ref{eq.6x6matrix}). 

Having worked this example, we can now see how the form of
equation~(\ref{eq.constPolyParCheckB}) nicely reveals the structure of
HQC LDPC codes.  Each row and each column of the matrix
$\shiftMat{1,\p{K}}^{\ind{K}} \kronecker \cdots \kronecker
\shiftMat{1,\p{2}}^{\ind{2}}$ has exactly one non-zero element.  If
the coefficient $\sym{\ind{1}, \ind{2}, \ldots, \ind{K}}{j}{l}$ is
non-zero, the permutation matrix $\shiftMat{1,\p{1}}^{\ind{1}}$
(equivalent to the term $x_{[1]}^{i_1}$) is added at the location of
each of these non-zero elements.  \hfill \QED

\vskip .1cm

Finally, we note that the polynomial parity check matrix of a
$K$-level HQC LDPC code can more generally be expanded
into a parity check polynomial $\parChkMat(\x{1}, \ldots,
\x{\tilde{K}})$ in $\tilde{K}$ variables where $\tilde{K} < K$.  We
call this the ``level-$\tilde{K}$'' polynomial parity check matrix of
the code.  We derive this matrix by expanding out all but the last
$\tilde{K}$ levels.  Replace each $h_{j,l}(\x{1},\ldots, \x{K})$ with
the polynomial matrix in $\x{1}, \ldots, \x{\tilde{K}}$
\begin{align*}
\sum_{\ind{K} = 0}^{\p{K}-1} \!\!\!\! \ldots \! \sum_{\ind{1} =
  0}^{\p{1}-1} \!  \! \sym{\ind{1}, \ldots, \ind{K}}{j}{l} \!\!
\left( \shiftMat{1,\p{K}}^{\ind{K}} \!\! \kronecker \! \cdots \!
\kronecker \shiftMat{1,\p{\tilde{K}+1}}^{\ind{\tilde{K}+1}} \right)
\prod_{k=1}^{\tilde{K}} \x{k}^{\ind{k}}.
\end{align*}
The matrix $\parChkMat(\x{1}, \ldots, \x{\tilde{K}})$ has dimension
$\Jrow{\tilde{K}} \times \Lcol{\tilde{K}}$.

\subsection{Tree structure of HQC LDPC codes}
\label{sec.connectMat}

We now show that we can alternately describe an HQC LDPC code by
specifying the code's {\em tree structure}.  The tree structure of any
HQC LDPC code is defined by a matrix of {\em labeled trees}, defined
in Definition~\ref{def.connTree}.  These labeled trees quite naturally
reveal the hierarchical structure of the code.  We will show that
there is a complete equivalence between
Definition~\ref{def.hierQC_LDPC} of the last section and the
definitions of this section.  We can start with
Definition~\ref{def.hierQC_LDPC} and easily find the unique set of
labeled trees that specify the code or, starting from a tree
structure, find the unique HQC LDPC code that has that structure.

The reasons to consider this alternate description are two-fold.  First, the
representations of this section help reveal the
hierarchical structure within the algebraic description of
Definition~\ref{def.hierQC_LDPC}.  Second, we will 
use {\em unlabeled} trees to define a family of HQC LDPC codes, and then
will want to search for a labeling within that family
to optimize girth.

The basic observation that motivates the following definitions is that
the non-zero terms of the polynomials that define any HQC
LDPC code have a {\em hierarchical clustering} that can be represented by a
labeled tree.  We
formally define such a {\em labeled tree} as follows.

\begin{defn} \label{def.connTree}
A {\em labeled tree} $\conMat$, corresponding to an entry in the
$\Jrow{K} \times \Lcol{K}$ multi-variate polynomial parity check
matrix $\parChkMat(\cdot)$ in $K$ variables defining a $K$-level HQC
LDPC code, is a depth-$K$ tree.  The root node of the tree is the
single node at the top ($K$th) level.  Each node at level $k$, $1 \leq
k \leq K$, has a number of edges connecting it to nodes in the next
level down.  The number of edges must be an integer in the set $\{1,
\ldots, \p{k}-1\}$.

Each edge below a node at level $k$ is labeled by an integer in the
set $\{0, 1, \ldots, \p{k}-1\}$.  Edges are termed ``siblings'' if
they share the same parent (i.e., are connected to the same node at
the higher level).  The edge labels of sibling nodes are constrained
to be distinct. We refer to the edges below the lowest nodes as
``leaves.'' We will have need to index the edges at each level of the
tree, so use $|\conMat[k]|$ to denote the number of edges in $\conMat$
at level $k$, i.e., the set of edges that have a parent node at level
$k$.
\QED
\end{defn}

The code discussed in Example~\ref{ex.hierQCLDPC} is characterized by
the matrix of two labeled trees shown in Figure~\ref{fig.trees}. The
left-hand tree characterizes the polynomial $h_{1,1}(x,y,z)$ and the
right-hand tree characterizes $h_{1,2}(x,y,z)$, both specified
in~(\ref{eq.polyParChkThreeVarB}).  Before understanding how these
labeled trees relate to the structure of the code we note that for
this code $\p{1} = 8$, $\p{2} = 3$ and $\p{3} = 2$, and node and edge
labels are within the ranges specified by
Definition~\ref{def.connTree}.

The next definition relates these trees to
the structure of the code.

\begin{figure}
      \centering
      \includegraphics[scale=0.4,type=eps,ext=.eps,read=.eps]{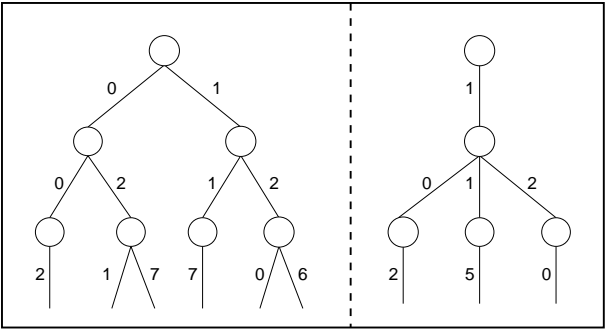}
      \caption{Example of the tree structure of a family of
        three-level hierarchical QC LDPC codes.  The left-hand tree is
        $\conMat_{1,1}$, the right-hand tree is $\conMat_{1,2}$.}
      \label{fig.trees}
\end{figure}

\begin{defn} \label{def.connStruct}
The {\em tree structure} of a $K$-level HQC LDPC code is specified by
a matrix of labeled trees $\setTree = \{\conMat_{j,l}\}$, $1 \leq j
\leq \Jrow{K}$, $1 \leq l \leq \Lcol{K}$.  To each leaf of
$\conMat_{j,l}$
%
%
we associate a single non-zero coefficient $\sym{\ind{1}, \ldots,
  \ind{K}}{j}{l}$ in a one-to-one manner.  If the edge labels on the
unique path from the leaf to the root node are $e_1, \ldots, e_K$ then
the non-zero coefficient associated with the leaf is $\sym{e_1,
  \ldots, e_K}{j}{l} = 1$.

In certain cases (corresponding to all-zero polynomials) we want to
define a ``null'' tree.  This is a non-existent tree 
(and therefore no edges exist so all coefficients are zero).  We use
the special symbol $\nullTree$ to denote the null tree.  E.g.,
$\conMat_{2,1} = \nullTree$ for the code specified
in~(\ref{eq.polyChkMatExA}).
\hfill \QED
\end{defn}

The number of edges below level $K$ of tree $\conMat_{j,l}$ indicates
the number of {\em distinct} powers of $\x{K}$ that appear in
$h_{j,l}(\x{1}, \ldots, \x{K})$.  Each node at level $K-1$ corresponds
to one of these terms.  The number of edges below each of the nodes at
level $K-1$ indicates the number of distinct powers of $\x{K-1}$
associated with that term, and so on down the tree.  The number of
leaves in the tree equals the number of terms in the polynomial
$h_{j,l}(\x{1}, \ldots, \x{K})$.  The maximum number of leaves below
any of the lowest level nodes (across all $(j,l)$ pairs) tells us the
weight of the code (weight-I, weight-II, etc.).  The edge labels
indicate the exponents that define the non-zero polynomials.

We can also define a more fine-grained ``weight at level $k$'' of a
hierarchical code by the maximum number of edges below any of the
nodes at level $k$. A hierarchical code can have different weights at
different levels; for example, the code from
Example~\ref{ex.hierQCLDPC} with tree structure shown in
Figure~\ref{fig.trees} is weight-II at level 1 (the lowest level),
weight-III at level 2, and weight-II at level 3.

The following lemma shows that the two ways of conceptualizing
HQC LDPC codes (Definition~\ref{def.hierQC_LDPC} or
Definition~\ref{def.connStruct}) are equivalent.

\begin{lemma}
There is a one-to-one mapping between any HQC LDPC codes
as defined in Definition~\ref{def.hierQC_LDPC} and a tree structure,
as defined in Definition~\ref{def.connStruct}. \hfill \QED
\end{lemma}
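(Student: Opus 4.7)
The plan is to establish the bijection entry-by-entry. Both Definition~\ref{def.hierQC_LDPC} and Definition~\ref{def.connStruct} specify a $K$-level HQC LDPC code by giving an independent object (respectively a $K$-variate polynomial $h_{j,l}$, or a labeled tree $\conMat_{j,l}$) at each position $(j,l)$ of a $\Jrow{K} \times \Lcol{K}$ matrix. So it suffices to exhibit, for each fixed $(j,l)$, a bijection between the set of binary $K$-variate polynomials $h(\x{1}, \ldots, \x{K})$ with $\deg_{\x{k}} h \le \p{k}-1$ and the set of depth-$K$ labeled trees (augmented by the null tree $\nullTree$, which handles the zero polynomial).

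First I would build the map from polynomials to trees. The zero polynomial maps to $\nullTree$. For a nonzero $h_{j,l}$, I group its nonzero monomials by the exponent $\ind{K}$ of $\x{K}$; the distinct values of $\ind{K}$ that appear with nonzero coefficient become the labels of the edges below the root (level $K$). Within each such group, I recursively group by the exponent $\ind{K-1}$ of $\x{K-1}$ to produce the edges at level $K-1$ below that node, and so on, until at level $1$ every remaining group consists of a single monomial, which becomes a leaf. Because I always group by \emph{distinct} exponents, sibling edges automatically receive distinct labels in $\{0, \ldots, \p{k}-1\}$, as required by Definition~\ref{def.connTree}. The inverse direction is straightforward: given $\conMat_{j,l}$, each leaf determines a unique root-to-leaf path with labels $(e_1, \ldots, e_K)$, and I set $\sym{e_1, \ldots, e_K}{j}{l} = 1$, with all other coefficients zero; the polynomial is then recovered as in Definition~\ref{def.hierQC_LDPC}.

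Finally I would verify that these maps are mutual inverses. Going polynomial $\to$ tree $\to$ polynomial: every nonzero coefficient $\sym{\ind{1}, \ldots, \ind{K}}{j}{l} = 1$ in the original polynomial produces, by the grouping construction, exactly one leaf whose root-to-leaf labels are $(\ind{1}, \ldots, \ind{K})$, and that leaf in turn reconstructs the same coefficient. Going tree $\to$ polynomial $\to$ tree: the recovered nonzero coefficients are in one-to-one correspondence with the leaves, and regrouping them by distinct exponents at successive levels reproduces the original tree.

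The only delicate point is the uniqueness side of the second composition. Without the sibling-distinctness constraint, two sibling subtrees labeled with the same edge label could either be merged or split without changing the underlying polynomial, and the correspondence would collapse to many-to-one. The sibling-distinctness clause in Definition~\ref{def.connTree}, combined with the fact that the canonical grouping procedure always groups by \emph{distinct} exponents level by level, is exactly what forces the regrouping to reproduce the original tree; this is the one bookkeeping step that warrants a careful inductive check on the level $k$ working from $K$ down to $1$.
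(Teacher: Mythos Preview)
Your proposal is correct and follows essentially the same approach as the paper: the paper's ``cluster the terms using the distributive law into the least-factored form'' is exactly your recursive grouping by distinct exponents from level $K$ down to level $1$, and the inverse direction (reading off root-to-leaf label sequences) is identical. Your write-up is in fact more careful than the paper's brief sketch, since you explicitly handle the zero-polynomial/null-tree case and isolate the role of the sibling-distinctness constraint in making the tree $\to$ polynomial $\to$ tree composition injective.
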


\begin{proof} 
We first show that any HQC LDPC code has a tree structure that can be
read off from the form of the polynomials that make up its polynomial
parity-check matrix. To see this, start with
Definition~\ref{def.hierQC_LDPC}.  The $\Jrow{K} \Lcol{K}$ polynomials
each define one labeled tree.  Using the distributive law, we cluster
the terms of each polynomial as much as possible (i.e., into the
least-factored form of the polynomial).  The resulting (hierarchical)
clustering of terms specifies a labeled tree.

Conversely, we now show that any set of labeled trees can be uniquely
mapped to an HQC LDPC code.  Starting with the set of labeled trees, we
first solve for the non-zero coefficients by concatenating edge labels
on all paths from distinct leaves to the root.  Using the resulting
set of non-zero coefficients in Definition~\ref{def.hierQC_LDPC}
specifies the code.  
\end{proof}

\begin{example}
To understand the structure on the code imposed by the tree topology,
consider again the two trees shown in Fig.~\ref{fig.trees}. 
By ``tree topology,'' 
we simply mean the unlabeled versions of these trees.  Each 
unlabeled tree has three levels and
there are two of them.  From this we infer that these unlabeled trees specify a
family of three-level HQC LDPC codes where $\Jrow{3} = 1$
and $\Lcol{3} = 2$.  Since the maximum number of leaves below a node at the first level
is two, these trees specifies a family of weight-II QC LDPC codes.

Now focus on the left-hand tree.  To simplify notation, let us again
use $x$ for $\x{1}$, $y$ for $\x{2}$, and $z$ for $\x{3}$.  Since the
number of leaves is six, we deduce that $h_{1,1}(x,y,z)$ has six
terms, i.e.,
\begin{equation*}
h_{1,1}(x, y, z) = \sum_{i=1}^6 x^{a_i} y^{b_i} z^{c_i},
\end{equation*}
where, using $\p{1} = 8$, $\p{2} = 3$ and $\p{3} = 2$, $0 \leq a_i
\leq 7$, $0 \leq b_i \leq 2$, and $0 \leq c_i \leq 1$.  Since the root
node has two edges, we deduce that these six terms are clustered into
two sets of polynomials defined by $c_1 = c_2 = c_3$ and $c_4 = c_5 =
c_6$, thus
\begin{equation*}
(x^{a_1} y^{b_1} + x^{a_2} y^{b_2} + x^{a_3}
  y^{b_3})z^{c_1} + (x^{a_4} y^{b_4} + x^{a_5}
  y^{b_5} + x^{a_6} y^{b_6}) z^{c_4}.
\end{equation*}
where $c_1 \neq c_4$.  (Since $c_1$ and $c_4$ are both binary, without
loss of generality we could set $c_1 = 0$ and $c_4 = 1$ at this
point.)  Now from the second level in the tree we deduce that the
terms in $z^{c_1}$ group into two sets, one with two terms so $b_2 =
b_3$.  The same happens with the terms in $z^{c_4}$ where $b_5 = b_6$.
This tells us that the polynomials compatible with this tree have the
form
\begin{equation}
(x^{a_1} y^{b_1} + (x^{a_2} + x^{a_3})
  y^{b_2})z^{c_1} + (x^{a_4} y^{b_4} + (x^{a_5} +
  x^{a_6}) y^{b_5}) z^{c_4},\label{eq.choosePowers}
\end{equation}
where $c_1 \neq c_4$, $b_1 \neq b_2$, $b_4 \neq b_5$,
$a_2 \neq a_3$ and $a_5 \neq a_6$ (but, e.g., $b_1 = b_4$ is allowed).
\hfill \QED
\end{example}

One can now see that the topology of the unlabeled version of
the trees of Fig.~\ref{fig.trees}
specifies a family of HQC LDPC codes, of which the code
considered in Example~\ref{ex.hierQCLDPC}, and specified
in~(\ref{eq.polyParChkThreeVarB}), is one member.  As the last
example illustrates, many degrees of freedom remain within the
specified family.  In particular these are the choice of the $a_i$,
$b_i$ and $c_i$ in~(\ref{eq.choosePowers}), subject to the constraints
$c_1 \neq c_4, b_1 \neq b_2, \ldots, a_5 \neq a_6$.  In the algorithms
of Section~\ref{sec.girthMax}, were we maximize the girth of our codes,
we search among these degrees of freedom, keeping the code's
unlabeled tree structure fixed.

Finally, we note that in a non-hierarchical weight-I QC LDPC code, the
trees in $\setTree$ are quite simple.  Each is either the null tree or
a tree that consists of a single root node with
a single leaf below it.
No leaf has a sibling so no constraints are placed on the choice of
edge labels.

\section{Cycles in Hierarchical QC LDPC codes}
\label{sec.condForCycles}

We now state the necessary and sufficient conditions on the polynomial
parity check matrix of an HQC LDPC code for that code to have a cycle
of a particular length.  These conditions generalize those specified
by Fossorier in~\cite{MarcQC} for weight-I QC LDPC codes. They are
also formalizations and generalizations of the examples we gave for
higher-weight QC LDPC codes in Section \ref{sec.findCyclesNonHier};
the main important new requirement compared to those examples is that
our cycles now need to be cycles at all levels of the hierarchy
simultaneously.

\subsection{Finding cycles in HQC LDPC codes}

We start by defining a path (or ``candidate cycle'') through a
$K$-variate polynomial parity check matrix.

\begin{defn} \label{def.path}
A length-$2\len$ path $\path$ through a $K$-variate $\Jrow{K} \times
\Lcol{K}$ polynomial parity check matrix matrix $\parChkMat(\cdot)$ of
an HQC LDPC code is specified by two sets, $\ordSet$ and $\pathCoeff$,
i.e., $\path = \{\ordSet, \pathCoeff\}$.

The first set $\ordSet$ is an ordered series
\begin{equation} \label{cycleseries}
\ordSet = \{ (j_1, l_1), \! (j_2, l_2), \! (j_3,l_3), \cdots, \!
(j_{2 \len}, l_{2 \len})\}
\end{equation}
such that 
\begin{enumerate}
\item[(i)] $1 \leq j_t \leq \Jrow{K}$ and $1 \leq l_t \leq \Lcol{K}$ for all $t$, $1 \leq t \leq 2 \Lambda$,
\item[(ii)] $j_{2\len} = j_{1}$ ,
\item[(iii)] $j_t = j_{t+1}$ for $t \in \mathbb{Z}_{\rm even}$ (even integers),
\item[(iv)] $l_t = l_{t+1}$ for $t \in \mathbb{Z}_{\rm odd}$ (odd integers),
\item[(v)] $|\coeffSet[j, l]| > 0$ for all $(j,l) \in \ordSet$, 
where the set $\coeffSet[j,l]$ is defined to be the set
of coefficients in the polynomial in the $j$th row and $l$th column of
$\parChkMat(\cdot)$ that are non-zero:
\begin{equation}
\coeffSet[j, l] = \{\sym{\ind{1}, \ldots, \ind{K}}{j}{l}:
\sym{\ind{1}, \ldots, \ind{K}}{j}{l} \neq 0\}. 
\end{equation}
\end{enumerate}

The second set $\pathCoeff$ is a set of length-$K$ vectors of coefficient indices
\begin{equation}
\pathCoeff = \{{\bf s}[j_1,l_1], {\bf s}[j_2,l_2], \ldots,
{\bf s}[j_{2 \Lambda},l_{2 \Lambda}]\} \label{eq.coeffSet}
\end{equation}
where, as implied by the notation, $(j_t,l_t) \in \ordSet$ for all $t,
1 \leq t \leq 2 \Lambda$, and $|\pathCoeff| = |\ordSet|$.
Furthermore,
\begin{enumerate}
\item[(vi)] the $k$th coordinate $s_k[j,l]$ of ${\bf s}[j,l]$ satisfies 
$0 \leq s_k[j,l] \leq \p{k}-1$ for all $(j,l) \in \ordSet$,
\item[(vii)] $\sym{{\bf s}[j,l]}{j}{l} \in \coeffSet[j, l]$ for all $(j,l)
  \in \ordSet$, where $\sym{{\bf s}[j,l]}{j}{l}$ is a compact notation for
$\sym{\ind{1}, \ldots, \ind{K}}{j}{l}$.
\item[(viii)] if consecutive elements of $\ordSet$ are identical, i.e., $(j_t,
  l_t) = (j_{t+1}, l_{t+1})$ for some $t$, $1 \leq t \leq 2 \Lambda$,
  then ${\bf s}[j_t,l_t] \neq {\bf s}[j_{t+1}, l_{t+1}]$.
\end{enumerate}
\end{defn}

The above definition generalizes those definitions made and used in
Sections~\ref{sec.findCyclesIntro} and~\ref{sec.inevitable} for
finding cycles in higher-weight QC LDPC codes.  In those sections the
ordered set $\ordSet$ and coefficient indices $\pathCoeff$ were first
introduced and their characteristics were described.  For examples of
$\ordSet$ see~(\ref{eq.ordSerEx}), (\ref{eq.ordSerExB}),
and~(\ref{eq.ordSerExC}), and for those of $\pathCoeff$
see~(\ref{eq.pathCoeffEx}), (\ref{eq.pathCoeffExB}),
and~(\ref{eq.pathCoeffExC}).  These examples illustrate the reasoning
behind criteria (1)--(8) in the definition above.

We now state the conditions for a length-$2 \len$ path $\path =
\{\ordSet, \pathCoeff\}$ actually to correspond to length-$2 \len$
cycles in the Tanner graph.  Consider the following alternating sums,
one for each $k$, $1 \le k \le K$:
\begin{equation}
\Sigma[k] = \sum_{t=1}^{2 \Lambda} (-1)^t s_k[j_t,l_t]. \label{def.pathSum}
\end{equation}
As reflected in the following theorem, these sums are the generalization
of the sum in~(\ref{p4cycle}) to HQC LDPC codes.

\begin{thm} \label{thm.condPathIsCycle}
A path length-$2\len$ path $\path = \{\ordSet, \pathCoeff\}$ through
the $K$-variate $\Jrow{K} \times \Lcol{K}$ polynomial parity check
matrix matrix $\parChkMat(\cdot)$ correspond to length-$2 \len$ cycles
in the Tanner graph if and only if {\em for every} $k$, $1 \leq k \leq
K$,
\begin{eqnarray}\label{GirthCon}
 \Sigma[k] \; \textmd{mod} \; \; \p{k}  = 0.
\end{eqnarray}
\hfill \QED
\end{thm}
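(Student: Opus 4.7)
The plan is to generalize the sub-index tracking argument of Section~\ref{sec.findCyclesIntro} (for weight-I single-level codes) to all $K$ levels simultaneously. The key structural observation, read directly from the expansion~(\ref{eq.constPolyParCheck}), is that every non-zero coefficient $\sym{\ind{1},\ldots,\ind{K}}{j}{l}$ contributes a single permutation matrix $\shiftMat{1,\p{K}}^{\ind{K}} \kronecker \cdots \kronecker \shiftMat{1,\p{1}}^{\ind{1}}$ inside the $(j,l)$ block, whose unique $1$ in row $(r_K,\ldots,r_1)$ lies at column $(r_K + \ind{K}\!\!\mod\!\p{K},\ldots,r_1 + \ind{1}\!\!\mod\!\p{1})$. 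Because the induced shift factorizes coordinate-wise across levels, so does the closure condition, producing one independent congruence for each $k$.

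First I would fix an arbitrary multi-index $r^{(1)} = (r^{(1)}_K,\ldots,r^{(1)}_1)$ for the check node attached to the first edge at base-matrix position $(j_1,l_1)$ with coefficient vector $\bfInd{j_1}{l_1}$; the variable node across that edge then has multi-index $c^{(1)}_k = r^{(1)}_k + s_k[j_1,l_1] \!\mod \p{k}$. Propagating along the path by cases: when $t$ is odd, conditions (iv) and (vii) force $e_t$ and $e_{t+1}$ to share the same variable, so $c^{(t)}=c^{(t+1)}$ and hence
\begin{equation*}
r^{(t+1)}_k \;\equiv\; r^{(t)}_k + s_k[j_t,l_t] - s_k[j_{t+1},l_{t+1}] \pmod{\p{k}};
\end{equation*}
when $t$ is even, conditions (iii) and (vii) force the shared node to be a check, so $r^{(t+1)} = r^{(t)}$. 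Condition (viii) prevents two consecutive entries of $\ordSet$ from collapsing to the same Tanner edge, so the resulting sequence of $2\len$ edges is a genuine walk.

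Next, by condition (ii) together with the sharing pattern at the wrap-around, closure of the walk means that the check visited after edge $e_{2\len}$ must coincide with the check visited before edge $e_1$, i.e., $r^{(2\len)}_k \equiv r^{(1)}_k \pmod{\p{k}}$ for every $k$. Telescoping the two update rules above --- only odd $t$ contribute --- yields
\begin{equation*}
r^{(2\len)}_k - r^{(1)}_k \;\equiv\; \sum_{t=1}^{2\len} (-1)^{t+1} s_k[j_t,l_t] \;=\; -\Sigma[k] \pmod{\p{k}},
\end{equation*}
so the closure condition is exactly $\Sigma[k] \equiv 0 \pmod{\p{k}}$ at every level $k$. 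This establishes the ``only if'' direction. For the ``if'' direction, once all $K$ congruences hold the same propagation, started from \emph{any} $r^{(1)} \in \prod_k \mathbb{Z}_{\p{k}}$, closes up consistently and yields $\prod_k \p{k}$ length-$2\len$ closed walks in the Tanner graph.

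The main obstacle I foresee is bookkeeping rather than conceptual: keeping the Kronecker-product orientation, the modular arithmetic, and the row/column (check/variable) interpretation consistent across all $K$ levels, and being precise that conditions (v) and (vii) guarantee the edges actually exist in the Tanner graph while (viii) excludes trivial U-turns. Once the one-level case is written down cleanly (essentially Fossorier's result in~\cite{MarcQC}), the extension to arbitrary $K$ requires only the factorization of the Kronecker product, which is already built into the construction~(\ref{eq.constPolyParCheck}).
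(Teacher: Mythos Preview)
Your argument is correct and rests on the same structural fact as the paper's proof: the Kronecker-product expansion~(\ref{eq.constPolyParCheck}) makes the row/column shift factorize across levels, so closure of the walk is equivalent to $K$ independent modular congruences. The difference is presentational. The paper argues \emph{recursively}, peeling off one level at a time: it checks condition~(\ref{GirthCon}) at level $K$, observes that failure there means the path lands in a different column of the level-$(K\!-\!1)$ polynomial matrix (hence in a disjoint block of variable nodes), and then recurses downward. You instead fix the full multi-index $(r_K,\ldots,r_1)$ from the outset and telescope the coordinate-wise shifts in one pass. Your version makes the arithmetic explicit and dispatches both directions simultaneously; the paper's version emphasizes the nested block structure and why each level's condition is \emph{separately} necessary. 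Either presentation is fine, and your telescoping computation is arguably the cleaner way to make the ``if'' direction precise, which the paper handles only by the final clause ``cycles exist.''
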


\begin{proof}
First consider the case where $K=1$, i.e., a {\em non-hierarchical} QC
LDPC code for which~(\ref{GirthCon}) corresponds to Fossorier's
condition.  Recall the logic of Section~\ref{sec.findCycles}.  In this
setting if condition~(\ref{GirthCon}) is {\em not} satisfied, then the
column of the polynomial parity check matrix from which the path
originates is distinct from the one on which the path terminates.
Since distinct columns of the polynomial parity check matrix
correspond to distinct sets of variable nodes in the Tanner graph,
this means that if~(\ref{GirthCon}) is not satisfied the path does not
correspond to a set of cycles.

In general, what condition~(\ref{GirthCon}) is helping us to
understand is whether, in the expanded parity check matrix at the {\em
  next lower level}, the path through the polynomial parity check
matrix corresponds to a set of path through the parity check matrix
that all correspond to cycles in the Tanner graph.  In the case of a
non-hierarchical QC LDPC code there is only one level of expansion,
from the polynomial parity check matrix to the parity check matrix.
However, in an HQC LDPC code there are multiple levels of expansion.

Now consider HQC LDPC codes where $K > 1$.  Given any path consider
whether condition~(\ref{GirthCon}) holds for $k = K$.  If the
condition does not hold then, similar to Fossorier's logic, the path
through the parity-check matrix at the next lower level, i.e., through
the level-$(K-1)$ polynomial parity check matrix, will not start and
end in the same column.  In the hierarchical setting each column at
level $K-1$ corresponds to a set of variable nodes.  However, due to
the way we expand out the parity-check matrix using Kronecker products
in Definition~\ref{def.hierQC_LDPC}, the sets of variable nodes
corresponding to distinct columns of the level-$k$ polynomial parity
check matrix for any given $k$ are non-intersecting.  A
path that originates and terminates in distinct subsets of the
variable nodes cannot correspond to a set of cycles.  Thus,
if~(\ref{GirthCon}) does not hold for $k=K$, the path cannot
correspond to a set of cycles.

On the other hand, if~(\ref{GirthCon}) is satisfied for $k = K$ then
cycles {\em may} exist, depending on what happens at the lower levels.
Using the same argument we recurse down the levels from $k = K$ to $k
= 1$.  If there is any $k$ for which~(\ref{GirthCon}) is not satisfied
then the path originates from and terminate at distinct variable nodes
and therefore does not correspond to a set of cycles.  However,
if~(\ref{GirthCon}) is satisfied for all $k$, $1 \leq k \leq K$, then
the path originates and terminates on the same variable node and
cycles exists. 
\end{proof}

We immediately have the following theorem.
\begin{thm} \label{thm.girthCond}
A necessary and sufficient condition for a $K$-level hierarchical QC
LDPC code to have girth at least $2(\len + 1)$ is the following.  For
all paths through the polynomial parity check matrix of length at most
$2 \len$ (path length at least four and at most $2 \len$),
condition~(\ref{GirthCon}) does not hold for at least one $k$, $1 \leq k \leq
K$. \hfill \QED
\end{thm}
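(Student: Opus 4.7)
The plan is to derive Theorem~\ref{thm.girthCond} as an immediate corollary of Theorem~\ref{thm.condPathIsCycle} by taking its contrapositive and quantifying over all candidate path lengths. First I would recall that the Tanner graph of any LDPC code is bipartite, so every cycle has even length of at least four. Hence asserting that the code has girth at least $2(\len+1)$ is equivalent to asserting that no cycle of length $2\len'$ exists in the Tanner graph for any $\len'$ with $2 \leq \len' \leq \len$.

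Next I would apply Theorem~\ref{thm.condPathIsCycle} in contrapositive form at each such length. By that theorem, a length-$2\len'$ path $\path = \{\ordSet, \pathCoeff\}$ through $\parChkMat(\cdot)$ corresponds to length-$2\len'$ cycles in the Tanner graph if and only if $\Sigma[k] \bmod \p{k} = 0$ for every $k$ in the range $1 \leq k \leq K$. Negating both sides, no length-$2\len'$ cycle exists in the Tanner graph if and only if every length-$2\len'$ path $\path$ through $\parChkMat(\cdot)$ violates condition~(\ref{GirthCon}) for at least one value of $k$.

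Finally I would conjoin these equivalences over all path lengths $2\len'$ with $2 \leq \len' \leq \len$. The resulting statement is exactly that of Theorem~\ref{thm.girthCond}: girth at least $2(\len+1)$ holds if and only if, for every path of length between four and $2\len$ through $\parChkMat(\cdot)$, condition~(\ref{GirthCon}) fails for at least one $k$. The conjunction direction is trivial; the reverse direction is also clean because if a short cycle exists, restricting to its length gives a path at which the theorem's hypothesis must already fail.

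The only subtlety, rather than a genuine obstacle, is verifying that Definition~\ref{def.path} enumerates \emph{every} candidate cycle through the Tanner graph and not merely some subset. This requires checking that the structural constraints in the definition precisely encode the basic requirements of a Tanner-graph walk: alternation between variable and check nodes (conditions (iii)--(iv) on how consecutive elements of $\ordSet$ share row or column index), distinctness of traversed edges when two consecutive polynomial entries coincide (condition (viii)), and the requirement that each traversed edge actually exists in the expanded parity check matrix (conditions (v) and (vii)). Once this correspondence between admissible paths in Definition~\ref{def.path} and candidate cycles in the Tanner graph is in hand, no path of length at most $2\len$ is missed, and the theorem follows as stated.
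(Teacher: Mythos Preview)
Your proposal is correct and matches the paper's approach exactly: the paper presents Theorem~\ref{thm.girthCond} with the single remark ``We immediately have the following theorem,'' treating it as a direct corollary of Theorem~\ref{thm.condPathIsCycle} via precisely the contrapositive-and-quantify argument you spell out. Your added discussion of the subtlety concerning Definition~\ref{def.path} is a reasonable elaboration the paper leaves implicit.
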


\subsection{Examples}

We now give examples of two paths through the polynomial parity check
matrix of the code of Example~\ref{ex.hierQCLDPC}.  In the first we
describe a path that corresponds to cycles through the Tanner graph.
We first consider the code as a QC LDPC
code (ignoring its hierarchical structure) and use Fossorier's
condition to verify the existence of cycles.  We then look at the same
code from a hierarchical perspective to illustrate
Theorem~\ref{thm.girthCond}.  In the second example we consider a path through
the same code that does not correspond to a cycle through the Tanner
graph.

\example {\em (Cycle in an HQC LDPC code)}
\label{sec.exampleOfCycles}
Consider again the polynomial parity check matrices $\parChkMat(x)$
and $\parChkMat(x,y)$, respectively specified
in~(\ref{eq.polyParChkB}) and~(\ref{eq.polyParChkTwoVarB}).  First
consider the non-hierarchical description of the code specified by
$\parChkMat(x)$.  A cycle of length-four exists traversing the path
$\path = \{\ordSet, \pathCoeff\}$ where
\begin{equation*}
\ordSet = \{(0,0), (1,0), (1,5), (0,5)\}.
\end{equation*}
This corresponds to, in order, the four polynomials
\begin{align*}
\begin{array}{lll}
x^2 & = & \sym{2}{0}{0} \, x^2,\\
x + x^7 & = & \sym{1}{1}{0} \, x + \sym{7}{1}{0} \, x^7,\\
x^7 & = & \sym{7}{1}{5} \, x^7, \\
1 + x^6 & = & \sym{0}{0}{5} \, x^0 + \sym{6}{0}{5} \, x^6.
\end{array}
\end{align*}
Selecting out $\sym{2}{0}{0}$, $\sym{1}{1}{0}$, $\sym{7}{1}{5}$ and
$\sym{0}{0}{5}$ means we choose
\begin{equation*}
\pathCoeff = \{2, 1, 7, 0\}.
\end{equation*}
We calculate the sum in~(\ref{def.pathSum}) to be
\begin{equation}
\Sigma[1] \; {\rm mod} \; 8 = (-2 + 1 -7 + 0) \; {\rm mod} \; 8 = 0,
\end{equation}
where $\p{1} = 8$ for this code.  This example confirms, in the
general notation, the test for cycles in non-hierarchical QC LDPC
codes already discussed in Sec.~\ref{sec.findCyclesIntro}.

Now, consider the same cycle from the hierarchical perspective.  With
respect to the two-level representation $\parChkMat(x,y)$
of~(\ref{eq.polyParChkTwoVarB}) the same cycle through the Tanner
graph corresponds to the ordered series
\begin{equation*}
\ordSet = \{(0,0), (0,0), (0,1), (0,1)\}.
\end{equation*}
Now we have polynomials $x^2 + (x + x^7) y^2$ and $x^7 y + (1 + x^6)
y^2$ which, respectively, are
\begin{equation*}
\begin{array}{lll}
\sym{2,0}{0}{0} \, x^2 + \sym{1,2}{0}{0} \,
x y^2 + \sym{7,2}{0}{0} \, x^7 y^2,
\end{array}
\end{equation*}
and
\begin{equation*}
\begin{array}{lll}
\sym{7,1}{0}{1} \, x^2 + \sym{0,2}{0}{1} \,
x y^2 + \sym{6,2}{0}{1} \, x^7 y^2.\\
\end{array}
\end{equation*}
The same cycles correspond to the coefficient indices
\begin{equation*}
\pathCoeff = \left\{ \left[\begin{array}{c}2\\0\end{array} \right],
\left[\begin{array}{c}1\\2\end{array} \right],
\left[\begin{array}{c}7\\1\end{array} \right],
\left[\begin{array}{c}0\\2\end{array} \right]
\right\}.
\end{equation*}
Note that the first sub-index of each coefficient corresponds to the
sub-index of the coefficients selected at the one-level view.  The
alternating sums along the path are
\begin{align*}
& \Sigma[1] \; {\rm mod} \; 8 = (-2 + 1 - 7 + 0) \; \; {\rm mod} \; 8 = 0 \\
& \Sigma[2] \; {\rm mod} \; 3 = (-0 + 2 - 1 + 2) \; \; {\rm mod} \; 3 = 0
\end{align*}
where $\p{1} = 8$ and $\p{2} = 3$ for this code.  While we do not work
out the example for the three-level representation $\parChkMat(x,y,z)$
of~(\ref{eq.polyParChkThreeVarB}), we note that the ordered traversed
by this cycle would be $\ordSet = \{(0,0), (0,0), (0,0),
(0,0)\}$. \hfill \QED

\example {\em (Non-cycle in an HQC LDPC code)} We now provide an
example of a path through $\parChkMat(x,y)$ for which $\Sigma[1] = 0
\; \textmd{mod} \; \p{1}$ but $\Sigma[2] \neq 0 \; \textmd{mod} \;
\p{2}$.  Let the ordered set be $\ordSet = \{(0,0), (1,0), (1,1),
(0,1)\}$.  of~(\ref{eq.polyParChkTwoVarB}) be $(0,0)$, $(1,0)$,
$(1,1)$, $(0,1)$.  This corresponds to polynomials $x^2 + (x+x^7)
y^2$, $x^7y + (1+x^6)y^2$, $x^2 + (x+x^7)y^2$, and $x^7y +
(1+x^6)y^2$.  We select the set of set coefficient indices to be
\begin{equation*}
\pathCoeff = \left\{ \left[\begin{array}{c}2\\0\end{array} \right],
\left[\begin{array}{c}6\\2\end{array} \right],
\left[\begin{array}{c}2\\0\end{array} \right],
\left[\begin{array}{c}6\\2\end{array} \right]
\right\},
\end{equation*}
from which we can verify that $\Sigma[1] = 0 \; \textmd{mod} \; 8$ but
$\Sigma[2] \neq 0 \; \textmd{mod} \; 3$.  Hence while
condition~(\ref{GirthCon}) holds at level one, it does not hold at
level two.  Referring to the expanded $\parChkMat(x)$
in~(\ref{eq.polyParChkB}) one can confirm this conclusion using the
logic of Sec.~\ref{sec.findCyclesIntro}.  In particular, $x^6$ is
located in the sixth column of the first row of $\parChkMat(x)$, while
the polynomial $x^2$ traversed by the path is located in the fifth
column of the fifth row of $\parChkMat(x)$. \hfill \QED

\subsection{Inevitable cycles in HQC LDPC codes}
\label{sec.HQCLDPCinevitable}

Since HQC LDPC codes are a generalization of QC LDPC codes, they also
have inevitable cycles.  In this section we describe how the logic and
results of Sec.~\ref{sec.inevitable} regarding inevitable cycles
extend to HQC LDPC codes.  We illustrate the logic for specific
examples of HQC LDPC codes that we will use in our design pipeline
presented in Sec.~\ref{sec.pipeline}.

Recall that in Sec.~\ref{sec.inevitable} we discussed two classes of
inevitable cycles.  We first saw that there will inevitably be cycles
of length six in any weight-III QC LDPC code.  We also saw that the
code will have eight-cycles if the polynomial parity check matrix
$\parChkMat(\cdot)$ of a weight-II QC LDPC code contains two
weight-two polynomials in any row or in any column.  

We analogously find that there will inevitably be cycles of length six
for an HQC LDPC code if any labeled tree $\conMat_{j,l}$ in the
tree matrix defining the code has
{\em three leaves}.  There will inevitably 
be eight-cycles if, in any row or in any column
of the matrix of labeled trees defining the HQC LDPC code, there is a {\em pair} of
labeled trees {\em both having two leaves}.

The logic behind these statements is almost identical to the earlier
case.  We describe it completely for the first situation.  We pick a
length-six ordered series $\ordSet$ equal to~(\ref{eq.ordSerExB}),
i.e., $\ordSet = \{(j,l),(j,l),(j,l),(j,l),(j,l),(j,l)\}$, where
$(j,l)$ is the index of the labeled tree $\conMat_{j,l}$ that has
three leaves.  Let the three length-$K$ coefficient vectors correspond
to the three leaves be ${\bf s}_a, {\bf s}_b, {\bf s}_c$ and select
the coefficient set $\pathCoeff = \{{\bf s}_a, {\bf s}_b, {\bf s}_c,
{\bf s}_a, {\bf s}_b, {\bf s}_c\}$.  Then, because each element is
both an even and an odd element of the set, (\ref{GirthCon}) is
satisfied for every $k$, just as it was in the QC LDPC example
of~(\ref{eq.automaticCycleEx}).  The logic for automatic eight-cycles
follows from the analogous extensions of the choices made
in~(\ref{eq.ordSerExC}) and~(\ref{eq.pathCoeffExC}).

We now illustrate these points about inevitable cycles for a subclass
of two-level HQC LDPC codes that are described solely by labeled trees
with weight-one at the bottom level.  That is, none of the leaves of the
trees have siblings.  In Sec.~\ref{sec.restricted} we name such codes
{\em restricted two-level HQC LDPC codes}.  An example of such a tree
is given in Fig.~\ref{fig.restricted}
\begin{figure}
\centering
\includegraphics[width=3.0in]{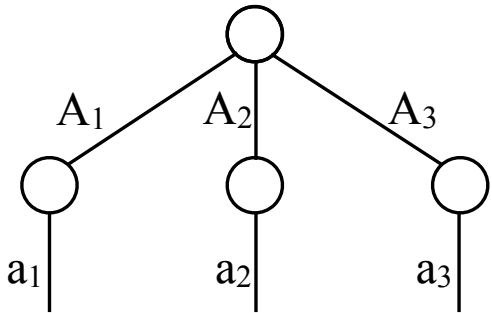}
\caption{The labeled trees in a restricted two-level HQC LDPC code
  will all have two levels, with each node at the bottom level having
  exactly one leaf below it.}
\label{fig.restricted}
\end{figure}

\example {\em (Inevitable length-six cycle in HQC LDPC
  codes)} \label{ex.inevitableSixCycle} First consider any code
containing a tree of the type illustrated in
Fig.~\ref{fig.restricted}.  This code has three leaves and so,
according to our discussion, the code must contain six cycles.
Without loss of generality, let such a tree be located in row $j$ and
column $l$ of the parity check matrix $\parChkMat(x,y)$.  The
polynomial has the
form
\begin{equation*}
x^{a_1} y^{A_1} + x^{a_2} y^{A_2} + x^{a_3} y^{A_3}.
\end{equation*}
As discussed above, choose the ordered series $\ordSet$ to be
\begin{equation}
\ordSet = \{(j,l),(j,l),(j,l),(j,l),(j,l),(j,l)\} 
\end{equation}
and the ordered set of coefficient vectors
to be
\begin{equation}
\pathCoeff = \left\{ \tightArray{a_1}{A_1}\!,
\tightArray{a_2}{A_2}\!,
\tightArray{a_3}{A_3}\!,
\tightArray{a_1}{A_1}\!,
\tightArray{a_2}{A_2}\!,
\tightArray{a_3}{A_3}\right\}. \label{eq.sixCycleInevPath}
\end{equation}
Cycles inevitably exist because
\begin{align*}
(-a_1 + a_2 - a_3 + a_1 - a_2 +a_3) \; \;  \textmd{mod} \; \; p_{[1]}  & = 0,\\ 
(-A_1 + A_2 - A_3 + A_1 - A_2 +A_3) \; \;  \textmd{mod} \; \; p_{[2]}  & = 0,
\end{align*}
regardless of the values of the coefficients or of $p_{[1]}$ or
$p_{[2]}$. 
\hfill \QED

\example {\em (Inevitable length-eight cycle in HQC LDPC
  codes)} \label{ex.inevitableEightCycle} Now suppose that the parity
check matrix of a restricted two-level HQC LDPC code contains two
labeled trees in the same row or column where both trees are similar to
the one depicted in Fig.~\ref{fig.restricted}, but with only two
leaves each.  

Suppose that the two weight-two polynomials are in
the same row $j$ but in two different columns $l_1$ and $l_2$.  Let
the polynomial at $(j,l_1)$ be $x^{a_1} y^{A_1} + x^{a_2} y^{A_2}$ and
let the polynomial at $(j,l_2)$ be $x^{b_1} y^{B_2} + x^{b_2}
y^{B_2}$. Consider the same ordered series as
in~(\ref{eq.ordSerExC}), i.e., $\ordSet =
\{(j,l_1),(j,l_l),(j,l_2),(j,l_2),(j,l_1),(j,l_1),(j,l_2),(j,l_2)\}$,
and choose the ordered set of coefficient indices to be
\begin{equation}
\pathCoeff = \left\{ \tightArray{a_1}{A_1}\!, \tightArray{a_2}{A_2}\!,
\tightArray{b_1}{B_1}\!, \tightArray{b_2}{B_2}\!,
\tightArray{a_2}{A_2}\!, \tightArray{a_1}{A_1}\!,
\tightArray{b_2}{B_2}\!, \tightArray{b_1}{B_1}
\right\}. \label{eq.eightCycleInevPath}
\end{equation}
Eight cycles are inevitable because
\begin{align*}
(- \! a_1 \! + \! a_2 \! - \! b_1 \! + \! b_2 \! - \! a_2 \! + \! a_1
  \! - \! b_2 \! + \! b_1) \; \textmd{mod} \; p_{[1]} & = 0,\\ (- \!
  A_1 \! + \! A_2 \! - \! B_1 \! + \! B_2 \! - \!  A_2 \! + \! A_1 \!
  - \! B_2 \! + \! B_1) \; \textmd{mod} \; p_{[2]} & = 0,
\end{align*}
regardless of the values of the coefficients or of $p_{[1]}$ or
$p_{[2]}$. 
\hfill \QED
\section{Maximizing the girth of QC LDPC codes}
\label{sec.girthMax}

In this section we present the ideas behind our girth-maximizing
algorithms for QC LDPC and for HQC LDPC codes.  The latter is a
generalization of the former.  These algorithms can rid the codes of
all non-inevitable cycles.  In Sec.~\ref{sec.pipeline} we will describe a
secondary procedure for ridding the codes of their inevitable cycles.
As the details of the algorithms are somewhat involved, we choose only
to describe the basic ideas in the main text, and defer to the
appendices the details.  The overall algorithms
are described in Appendix~\ref{sec.girthMaxAlgs} while in
Appendices~\ref{app.maxGirth}--\ref{app.costHQCLDPC} we describe the
subroutines that contain much of the computational complexity (and
descriptive intricacies).

In Section~\ref{sec.genHillIdea} we describe the general idea of the
algorithms, which applies both to QC and to HQC LDPC codes.  Then, in
Section~\ref{sec.hillNonHier} we give more detail for the case of
weight-I QC LDPC codes.  The discussion of the generalization to HQC
LDPC codes (which includes higher-weight QC LDPC codes as a special
case) is deferred to the appendices.

\subsection{Girth maximization using hill climbing}
\label{sec.genHillIdea}

The general idea of our algorithm (for both QC and HQC LDPC codes) is
as follows.  We start by specify the desired tree topology of the code
by specifying a set of $|\setTree|$ {\em unlabeled} trees.  We
initialize our algorithm with a code chosen randomly from the ensemble
of codes that have the desired tree topology.  This means that we
randomly assign labels to the trees $\setTree$ subject to the
constraints that sibling edges must have distinct labels.

Our algorithm iteratively updates a sequence of edge labels.  At each
iteration it changes the single edge label to the value that effects
the greatest reduction in a cost function.  The cost function we use
depends on the number of cycles in the current code that have length
less than the desired girth.  Shorter cycles are weighted to be more
costly than longer cycles. The algorithm terminates when either (a)
the current values of all coefficients give zero cost (and thus the
code has the desired girth), or (b) when we can no longer change any
single coefficient to a value that further reduces the cost (and thus
the number of undesired cycles).  When the tree topology of the code
implies the existence of inevitable cycles (b) will always be the
stopping criterion.  Updates are performed subject to the sibling
constraint on edge labels.  This preserves the tree topology of the
code and thus, e.g., the protograph structure of the code is an
invariant under the updates.  We note that a change in a single edge
label will, in general, have a trickle-down effect on a number of code
coefficients (equal to the number of leaves in the tree that are a
descendent of that edge).

The main challenge in implementing the algorithm lies in book-keeping:
tracking how many cycles of each length the current code contains, and
what the resulting number of cycles will be if each edge label is
changed to each of its other possible value. The calculation becomes
particularly involved when one searches for codes of girth $10$ (which
is the largest girth for which we have so far implemented our
algorithm) because of the many possible ways that eight-cycles can
form.

\subsection{Girth maximizing algorithm for QC LDPC codes}
\label{sec.hillNonHier}

In this section we present the main algorithmic ideas in the
simplified setting of weight-I QC LDPC codes.  This simplification also reduces
notation.  For the duration of this section, we set $\p{1} = p$, $\Lcol{1}
= L$, $\Jrow{1} = J$.  Further, path elements are scalars so
$\bfInd{j}{l} = \indSym[j,l]$.  In a weight-I QC LDPC each tree
$\conMat_{j,l}$ has a single edge and $\sym{\indSym}{j}{l} \neq 0$ for
at most one value of $\indSym$ (if $\conMat_{j,l} = \nullTree$ then
$\sym{\indSym}{j}{l} = 0$ for all $\indSym$).  The set of {\em other}
possible edge labels are the set of $z$, $0 \leq z \leq p-1$, such
that $z \neq \indSym$ (there are no sibling edges so there are no
further constraints on the choice of $z$).

We now define a set of cost vectors, each of which tracks the cost (in
terms of the weighted sum of the number of cycles) of changing any
edge label to each of its other possible values.  In particular, for
each edge in each $\conMat_{j,l} \neq \nullTree$ we
define
\begin{equation}
\costB{j,l}=[\cost{0}, \cost{1}, \cdots,
  \cost{p-1}], \label{def.costVec}
\end{equation}
where $\costB{j, l}(z)$ is the cost we pay for assigning
$\sym{z}{j}{l} = 1$ for each value of $z$ for $0 \leq z \leq p-1$.  If
the desired code girth is $\girth$ then the cost $\costB{j,l}$ is a
linear function of the number of cycles of each length that results
from each possible choice for $z$.  The weight vector $\weight=[w_2,
  w_3, \cdots, w_{\girth/2-1}]$ defines the cost function, where
$w_{\len}$ is the cost assigned to each length-$2\len$ cycles.

It is useful to visualize the set of cost vectors as a matrix of
vectors.  For example, a regular $(3,6)$ LDPC code can be represented
as
\begin{eqnarray}
\left[
  \begin{array}{llllllll} \costB{1,1} & \costB{1,2} & \costB{1,3} & \costB{1,4} & \costB{1,5} &  
\costB{1,6} \\ \costB{2,1} & \costB{2,2} & \costB{2,3} & \costB{2,4} &
\costB{2,5} & \costB{2,6} \\ \costB{3,1} & \costB{3,2} & \costB{3,3} &
\costB{3,4} & \costB{3,5} & \costB{3,6}\end{array} \right]. \label{eq.costMatrix}
\end{eqnarray}

Given a parity check matrix $\parChkMat$ and desired girth $g$, the
cost vectors are calculated via the following argument. We consider
the set of all possible and distinct length-$2\len$ paths per
Def.~\ref{def.path}, i.e.,
\begin{equation*}
\path_{\len} = \{ \path\} = \{\ordSet, \pathCoeff\}\; \mbox{s.t.} \;
|\ordSet| = |\pathCoeff| = 2 \len \; \mbox{for all} \; \path \in
\path_{\len},
\end{equation*}
for $\len = 1, \ldots, \girth/2$.  For each path $\path \in
\path_{\len}$ and each $(j_t, l_t) \in \ordSet$ we consider the
corresponding coefficient $\indSym[j_t,l_t] \in \pathCoeff$.  Assuming
all other {\em distinct} coefficients $\indSym[j_{t'}, l_{t'}]$ for
$t' \neq t$ are kept fixed we note the ``guilty'' value(s) of
$\indSym[j_t,l_t]$ to be the value $z$, $0 \leq z \leq p-1$, such that
if $\indSym[j_t,l_t]$ were changed to $z$, then
condition~(\ref{GirthCon}) would be satisfied.  In other words, a
cycle would result.

For example, for a potential six-cycle, we know that a cycle will
exist if and only if $- \indSym[j_1,l_1] + \indSym[j_2,l_2] -
\indSym[j_3,l_3] + \indSym[j_4,l_4] - \indSym[j_5,l_5] +
\indSym[j_6,l_6] \;\textmd{mod} \; p = 0$.
Suppose, for example, that the current
summed value of $- \indSym[j_1,l_1] + \indSym[j_2,l_2] -
\indSym[j_3,l_3] + \indSym[j_4,l_4] - \indSym[j_5,l_5] +
\indSym[j_6,l_6] \; \textmd{mod} \; p$ is equal to one.  Then, the
guilty values for $\indSym[j_1,l_1]$, $\indSym[j_3,l_3]$, and
$\indSym[j_5,l_5]$ would be one less than their respective current
values, and the guilty values for $\indSym[j_2,l_2]$,
$\indSym[j_4,l_4]$, and $\indSym[j_6,l_6]$ would be one greater than
their respective current values.

Computing ``guilty'' values is relatively uncomplicated for paths
consisting of $2\len$ {\em distinct} elements.  It becomes more
complicated if some elements of the path appear more than once.  This
can occur in potential eight-cycles and occurs, e.g., in the second
example of Fig.~\ref{fig.findCycle}.  In such cases, we must keep in
mind that when such coefficients are changed, the contribution to
alternating sum can double, triple (e.g., in the length-12 cycle of
Fig.~\ref{fig.findCycle} because the path passes through each
sub-matrix three times), or contribute even more times.  Alternately,
repeated elements can also cancel (if they enter modulated by both
$+1$ and $-1$), not contributing at all to the sum.  We deal with this
complexity in Appendix~\ref{app.maxGirth} by defining the
``multiplicity'' $\kappa$ of a path element; used in the cost
calculating algorithms subsequently specified in
Appendices~\ref{app.costQCLDPC} and~\ref{app.costHQCLDPC}.

\section{Design pipeline for high-girth QC LDPC codes}
\label{sec.pipeline}

In this section we describe our design procedure for high-girth QC
LDPC codes.  We want to be able to map any interesting protograph into
a high-girth QC code.  As mentioned earlier, the protographs that
motivate us have multiple edges between variable types and check nodes
(such as those shown in Fig.~\ref{fig.onesided}) and thus QC LDPC codes
created from them using a direct transformation would suffer from
inevitable cycles.  

In this section we show how to map such a
protograph into an ``inflated'' HQC LDPC code structure, on which
we can use the girth maximizing
algorithm of Sec.~\ref{sec.girthMax} to remove all non-inevitable
cycles.  We then show how the resulting HQC LDPC codes can be
``squashed'' down to yield a non-hierarchical QC LDPC code which no
longer contains the inevitable cycles and which is a member
of the family of codes described by our protograph.  The subclass of HQC LDPC codes
with with we work are the {\em restricted two-level HQC LDPC codes},
already mentioned briefly in the examples of
Sec.~\ref{sec.HQCLDPCinevitable}.

The outline of the section is as follows. In Sec.~\ref{sec.restricted}
we fully define the class of restricted two-level HQC LDPC codes.
In Sec.~\ref{sec.protorestricted} we show how to directly transform
any protograph into such a code. In Sec.~\ref{sec.squashingProcedure}, we describe
the squashing procedure, and finally in Sec.~\ref{sec.squashing}, we explain
the full design pipeline, including ``inflating'' the connectivity matrix corresponding
to the protograph, directly transforming the inflated connectivity matrix
into a family of restricted two-level HQC LDPC codes, maximizing the girth over
that family, and squashing the resulting HQC LDPC code.

\subsection{Restricted two-level HQC LDPC codes}
\label{sec.restricted}

As ``restricted two-level'' implies, the hierarchy in restricted
two-level HQC LDPC codes has only two levels.  The additional
``restriction'' is that the weight of the first (lowest) level must be
one. In terms of the tree structure description of these codes, the
labeled trees will all have a form like that shown in
Fig.~\ref{fig.restricted}, with the nodes at the bottom level each
having exactly one leaf, i.e., leafs have no siblings. In comparison,
there are leaves in left-hand tree of Fig.~\ref{fig.trees} that do
have siblings. Nodes at the second level can have an arbitrary number
of edges.

The fact that these codes have two levels means that they are
described by a polynomial parity check matrix in two dummy variables
$\parChkMat(x,y)$. The restriction to the lowest level having weight
one means that any weight-$w$ polynomial in the matrix
$\parChkMat(x,y)$ must have the form
\begin{equation}
\label{eq.restrictedform}
x^{a_1} y^{A_1} + x^{a_2} y^{A_2} + ... + x^{a_w} y^{A_w}
\end{equation}
where all the $A_i$ exponents must be distinct. As usual, the
exponents are integers which range between $0$ and $p_{[1]}-1$ for the
$x$ exponents and $0$ and $p_{[2]}-1$ for the $y$ exponents.

Because the weight at the lowest level is restricted to be one, these
codes, when described as standard QC LDPC codes, look like weight-I QC
LDPC codes, whose base matrix is composed of circulant sub-matrices of
size $p_{[2]}$ by $p_{[2]}$. In~\cite{vontobelRefs} Smarandache and
Vontobel briefly introduce a further restricted class of such QC LDPC
codes (they also required that the codes be weight-II at the second
level and that $p_{[2]}=2$) which they term ``type-I QC codes based on
double-covers of type-II QC codes.''

\subsection{Transforming protographs into Restricted Two-Level HQC LDPC Codes}
\label{sec.protorestricted}

Recall that in Sec.~\ref{sec.protographs}, we introduce a ``direct
transformation'' to convert a protograph into an ordinary QC LDPC
code. The direct transformation replaces the connectivity matrix
equivalent to the protograph with a polynomial parity check matrix
$\parChkMat(x)$ whose polynomial entries had weight equal to the
entries in the connectivity matrix. A completely analogous direct
transformation exists for converting protographs into restricted
two-level HQC LDPC codes. One replaces the connectivity matrix with a
bi-variate polynomial parity check matrix $\parChkMat(x,y)$ whose
polynomial entries each have the restricted form
of~(\ref{eq.restrictedform}) and have weight equal to the
entries in the connectivity matrix.

For example, the connectivity matrix corresponding to the protograph
depicted in Fig.~\ref{fig:proto1} is
\begin{equation}
C = \left[ \begin{array}{cccc} 
1 & 1 & 1 \\
0 & 1 & 2 \end{array}\right]. \label{eq.mapToConnectMat}
\end{equation}
This matrix is directly transformed into a two-level restricted HQC
LDPC code with  polynomial parity check matrix
\begin{equation}
\parChkMat(x,y) = \left[ \begin{array}{cccc} 
x^a y^A & x^b y^B & x^c y^C \\
0 & x^d y^D & x^e y^E + x^f y^F \end{array}\right],
\end{equation}
where $a$, $b$, $c$, $d$, $e$, and $f$ are integer exponents between
$0$ and $p_{[1]}-1$, and $A$, $B$, $C$, $D$, $E$, and $F$ are integer
exponents between $0$ and $p_{[2]}-1$ that satisfy $E \ne F$.

\subsection{Squashing sets of trees to eliminate inevitable cycles}
\label{sec.squashingProcedure}

Because restricted two-level HQC LDPC codes are weight-I at the lowest level,
they can also be considered weight-I QC LDPC codes, and can therefore be described
in terms of their base matrix. \footnote{Recall
  from Sec.~\ref{sec.defQC_LDPC} that the base matrix is the matrix of
  powers of the polynomial parity check matrix expressed in a single dummy
  variable.}
In this section we develop a technique that selectively removes rows or columns
from the base matrix describing a restricted two-level HQC LDPC code 
in a way that eliminates all inevitable six- and
eight-cycles from the corresponding Tanner graphs of the code.  There are
two underlying assumptions in this section.  First, that via a
girth-maximization procedure the base matrix entries involved have already been
optimized to eliminate all non-inevitable cycles.  Second, we
concentrate on restricted two-level HQC LDPC codes where $\p{2} = 4$, which implies
that the base matrix is composed of circulant sub-matrices of size four by four.

There are two situations we will want to consider.  Respectively they will
correspond to Ex.~\ref{ex.inevitableSixCycle}
and~\ref{ex.inevitableEightCycle} of Sec.~\ref{sec.HQCLDPCinevitable}.
The full connection to these examples will only become clear in the next section, when
we explain our ``inflation'' procedure, which has the effect of placing pairs of similarly
structured
four by four sub-matrices on top of each other (or besides each other).

The first situation involves a polynomial of weight 3 in the polynomial parity
check matrix \parChkMat(x,y), which after inflation will be converted
into two polynomials of weight 3, e.g., $h_{1,1}(x,y)$ and $h_{2,1}(x,y)$,  
with identical $y$ exponents, in the
same column  of the polynomial parity check matrix.
Assuming a restricted two-level code
with $\p{2} = 4$, the corresponding sub-matrices in the base matrix
would respectively
look something like
\begin{equation}
\begin{array}{c} \left[ \begin{array}{cccc} 
a &  b  &  c  & -1 \\
-1 &  a  &  b  &  c \\
c  &  -1 &  a  &  b \\
b  &  c  & -1  &  a\end{array}\right] \vspace{2ex}\\
\left[ \begin{array}{cccc} 
d  &  e  &  f  & -1 \\
-1 &  d  &  e  &  f \\
f  &  -1 &  d  &  e \\
e  &  f  & -1  &  d\end{array}\right]
\end{array}, \label{eq.squashSixCycle}
\end{equation}
where we recall that $-1$ represents the $\p{1} \times \p{1}$
all-zeros matrix.  

The second situation involves four polynomials of weight 2 arranged
rectilinearly, e.g., $h_{1,1}(x,y)$, $h_{2,1}(x,y)$, $h_{1,2}(x,y)$
and $h_{2,2}(x,y)$,.  Furthermore, after inflation, 
the $y$ exponents of the polynomials in the
same column will have the same exponents, so that 
the corresponding sub-matrices would look
something like
\begin{equation}
\begin{array}{cc}
\left[ \begin{array}{cccc} 
a &  b  &  -1  & -1 \\
-1 &  a  &  b  &  -1 \\
-1  &  -1 &  a  &  b \\
b  &  -1  & -1  &  a\end{array}\right] & 
\left[ \begin{array}{cccc} 
-1 & c  &  d  &  -1  \\
-1 & -1 &  c  &  d  \\
d & -1  &  -1 &  c  \\
c & d  &  -1  & -1  \end{array}\right] \vspace{2ex} \\
\left[ \begin{array}{cccc} 
e &  f  &  -1  & -1\\
-1 &  e  &  f  & -1 \\
-1  &  -1 &  e  & f \\
f  &  -1  & -1  & e \end{array}\right] & 
\left[ \begin{array}{cccc} 
-1 & g &  h  &  -1  \\
-1 & -1 &  g  &  h  \\
h & -1  &  -1 &  g  \\
g & h  &  -1  & -1 \end{array}\right] 
\end{array}. \label{eq.squashEightCycle}
\end{equation}

By the results of Sec.~\ref{sec.HQCLDPCinevitable} the first situation
contains six-cycles within each sub-matrix and the second situation
contains inevitable eight-cycles between the pair of sub-matrices in each
row and in each column.  We argue that if we ``squash'' the two
matrices in the first example---by stacking the first two rows of the
upper matrix on the last two rows of the lower matrix---then the matrix produced
\begin{equation}
\left[ \begin{array}{cccc} 
a &  b  &  c  & -1 \\
-1 &  a  &  b  &  c \\
f  &  -1 &  d  &  e \\
e  &  f  & -1  &  d\end{array}\right] \label{eq.postSquashSixCycle}
\end{equation}
contains no six-cycles.  Similar if we squash the matrices in the second
example then the resulting pair of matrices
\begin{equation}
\left[ \begin{array}{cccc} 
a &  b  &  -1  & -1 \\
-1 &  a  &  b  &  -1 \\
-1  &  -1 &  e  &  f \\
f  &  -1  & -1  &  e\end{array}\right] \hspace{1em}
\left[ \begin{array}{cccc} 
-1 & c  &  d  & -1 \\
-1 & -1 &  c  &  d  \\
h & -1  &  -1 &  g  \\
g & h  &  -1  & -1  \end{array}\right] \label{eq.postSquashEightCycle}
\end{equation}
contains no eight-cycles.

Since by assumption there were no non-inevitable six- or eight-cycles
between the original matrices, to show our assertion we need solely to
demonstrate that the squashing procedure removes all inevitable
cycles.  We argue this based on the following lemma, proved in
Appendix~\ref{app.squash}.
\begin{lemma} \label{lemma.squash} \hspace{1em}\\
\vspace{-3ex}
\begin{itemize}
\item[(i)] Any inevitable six-cycle within a polynomial of the form
  $x^{a_1} y^{A_1} + x^{a_2} y^{A_2} + x^{a_3} y^{A_3}$ traverses
  three distinct rows and three distinct columns of the corresponding
  base matrix.

\item[(ii)] Any inevitable eight-cycles between a pair of polynomials
  of the form $x^{a_1} y^{A_1} + x^{a_2} y^{A_2}$ and $x^{b_1} y^{B_1}
  + x^{b_2} y^{B_2}$ located in the same row (column) of the
  polynomial parity check matrix traverses three distinct rows
  (columns) of the corresponding base matrix.  \hfill \QED
\end{itemize}
\end{lemma}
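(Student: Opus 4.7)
The plan is to trace each inevitable cycle template explicitly through the $4 \times 4$ sub-matrices that comprise the base matrix, exploiting the Kronecker-product expansion of $\parChkMat(x,y)$ given in Definition~\ref{def.hierQC_LDPC} with $\p{2}=4$. Recall that for a restricted-form polynomial, every term $x^{a_i} y^{A_i}$ places a non-zero base matrix entry at position $(r, (r + A_i) \bmod 4)$ for every $r \in \{0,1,2,3\}$. Consequently, a horizontal move along term $i$ immediately followed by a vertical move along term $j$ of the same polynomial shifts the check row index by $(A_i - A_j) \bmod 4$; the row (and column) positions visited by any cycle are therefore specified by small integer linear combinations of $y$-exponent differences modulo $4$.

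\paragraph{Part (i).} Starting from a check at row $\rho_0$ of the $4 \times 4$ sub-matrix and tracing the six-cycle of Example~\ref{ex.inevitableSixCycle} with $\pathCoeff$ cycling through the three terms, the three check rows visited come out to $\rho_0$, $\rho_0+(A_1-A_2)$, $\rho_0+(A_3-A_2)$, and the three variable columns visited come out to $\rho_0+A_1$, $\rho_0+A_3+(A_1-A_2)$, $\rho_0+A_3$, all modulo $4$. Because $A_1,A_2,A_3$ are pairwise distinct elements of $\{0,1,2,3\}$, every pairwise difference $A_i - A_j$ is non-zero modulo~$4$, so both triples consist of three pairwise distinct values, giving the claim.

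\paragraph{Part (ii).} Applying the same tracing to the eight-cycle of Example~\ref{ex.inevitableEightCycle} between weight-two polynomials in the same block row produces the four check rows
\[
\rho_0,\quad \rho_0+\alpha,\quad \rho_0+\alpha+\beta,\quad \rho_0+\beta \pmod 4,
\]
with $\alpha := A_1 - A_2 \not\equiv 0$ and $\beta := B_1 - B_2 \not\equiv 0$. The collapse to three distinct values rests on a structural property of the inflation procedure of Section~\ref{sec.squashing}: the paired polynomials inherit $y$-exponent differences lying in $\{1,3\} \subset \mathbb{Z}/4\mathbb{Z}$, so either $\alpha \equiv \beta$ or $\alpha + \beta \equiv 0 \pmod 4$. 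In the former case the second and fourth row values coincide; in the latter the first and third coincide; the three surviving rows remain pairwise distinct because $\alpha$ and $2\alpha$ take different non-zero values in $\mathbb{Z}/4\mathbb{Z}$ whenever $\alpha$ is odd. The same-column statement is obtained by transposing the trace, swapping the roles of rows and columns and of the two polynomial blocks.

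\paragraph{Expected main obstacle.} The technical crux is justifying that the inflation procedure indeed constrains $\alpha$ and $\beta$ to be odd modulo~$4$; this is not a property of generic restricted two-level HQC LDPC codes but a deliberate feature of the design pipeline, and a generic weight-two pairing with, say, $\alpha = \beta = 2$ would give only two distinct rows, while generic $\alpha,\beta$ of opposite parity would give four. The cleanest route is either (a) to restate the lemma explicitly in terms of the inflation output, or (b) to prove, as an equally serviceable substitute, the weaker claim that the rows (columns) traversed by any inevitable cycle cannot be contained within either of the two halves $\{0,1\}$ or $\{2,3\}$ that the squashing procedure of Section~\ref{sec.squashingProcedure} splits the sub-matrix into---which is precisely the property the downstream argument needs to conclude that~(\ref{eq.postSquashSixCycle}) and~(\ref{eq.postSquashEightCycle}) are free of inevitable short cycles.
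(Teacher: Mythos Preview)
Your route differs from the paper's: the paper argues part~(i) purely combinatorially, pairing the six coefficients of~(\ref{eq.sixCycleInevPath}) into three row-classes and invoking only the fact that a valid weight-I path at the base-matrix level must change row at every vertical step, which forces all three row-classes to be pairwise distinct (and likewise for columns). You instead compute the row and column indices explicitly as affine functions of the $y$-exponent differences modulo~$4$. Both arguments are correct for~(i); the paper's is shorter and works for any $\p{2}$, while yours is more transparent.

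For part~(ii) your explicit computation pays off: you correctly expose an edge case that the paper's proof mishandles. The paper groups the eight coefficients of~(\ref{eq.eightCycleInevPath}) into four row-classes $R_1,R_2,R_3,R_4$, observes $R_1\neq R_2\neq R_3\neq R_4\neq R_1$, and then does a two-case split on whether $R_1=R_3$. In the second case it claims $R_2\neq R_4$ ``as long as $\p{2}>2$,'' but in your notation $R_2=R_4$ is equivalent to $2\alpha\equiv 0\pmod{\p{2}}$, which for $\p{2}=4$ is satisfied whenever $\alpha=2$. Your observation that $\alpha=\beta=2$ yields only \emph{two} distinct rows is therefore a genuine counterexample to the lemma as stated, and the paper's $\p{2}>2$ justification is insufficient.

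Your proposed fix~(a), however, is not supported by the text: the inflation step of Section~\ref{sec.squashing} only forces duplicated rows or columns to share their $y$-exponents, and the girth-maximization step treats the remaining $y$-exponents as free; nothing in the pipeline forces $\alpha,\beta$ to be odd. Your fix~(b) is the right repair. The weaker claim that the inevitable cycle cannot be confined to either half $\{0,1\}$ or $\{2,3\}$ of the $4\times 4$ block holds in every case (in the problematic case $\alpha=\beta=2$ the two rows are $\rho$ and $\rho+2$, one in each half), and it is exactly the property the squashing argument of Section~\ref{sec.squashingProcedure} actually uses.
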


Now, consider the squashing of the matrices
in~(\ref{eq.squashSixCycle}) into the matrix
in~(\ref{eq.postSquashSixCycle}).  Note that the latter matrix has
only {\em two} rows from each of the matrices
in~(\ref{eq.squashSixCycle}).  However, by
Lemma~\ref{lemma.squash}-(i) all inevitable cycles pass through three
rows.  Therefore, the matrix in~(\ref{eq.postSquashSixCycle}) does not
contain any inevitable six-cycles.

Next, consider the squashing of the matrices
in~(\ref{eq.squashEightCycle}) into the matrices
in~(\ref{eq.postSquashEightCycle}).  First we a show that the
squashing procedure eliminate the automatic cycles between pairs of
matrices arising from pairs of weight-2 polynomials on the same row of the
polynomial parity check matrix.  This follows from
Lemma~\ref{lemma.squash}-(ii) which tells us that these eight-cycles
traverse three distinct rows, because only two rows of each of the
matrices is retained.  Next consider the inevitable cycles between
pairs of matrices arising from pairs of weight-2 polynomials 
in the same column of the polynomial parity check matrix.  Since we squash vertically,
parts of all columns of the base matrix are retained.  However, if one
examines~(\ref{eq.postSquashEightCycle}) one sees that the second and
fourth columns of the left-hand matrix only includes contributions
from the upper left-hand and bottom-left-hand matrices
of~(\ref{eq.squashEightCycle}), respectively.  The remaining
inevitable cycles from~(\ref{eq.squashEightCycle}) therefore cannot
include these columns.  But, that leaves only two columns in the
left-hand matrix and by Lemma~\ref{lemma.squash}-(ii) we know that
these inevitable cycles require three columns.  Therefore the
inevitable cycles have been eliminated.  The same logic holds for the
right-hand side of~(\ref{eq.postSquashEightCycle}).

Note that for the above logic regarding eight-cycles to hold it is
important that the $y$-exponents of the two matrices to be squashed
together (those in the same column) are the same.  Thus, e.g.,
$\conMat_{1,1}$ $\conMat_{2,1}$ correspond to any two polynomials of
the form $x^ay^A + x^b y^B$ and $x^e y^A + x^f y^B$.  Note also that
the same squashing procedure would work in the horizontal direction as
long as the matrices on the same row have the same $y$-exponents.  The
logic is the same with the argument for rows and columns reversed.

\subsection{Design procedure for high-girth codes}
\label{sec.squashing}

We now turn to demonstrating how to construct a weight-I QC LDPC code
that does not have any six-cycles or eight-cycles.  We first sketch
the procedure, depicted in Fig.~\ref{fig.designPipeline} and then
illustrate the details with a worked design example.

\begin{figure}
\psfrag{&inputs}{inputs:  $\p{1}$,}
\psfrag{&params}{connectivity matrix $C$}
\psfrag{&inflate}{1) Inflate}
\psfrag{&directMap}{2) Direct transform}
\psfrag{&maxGirth}{3) Maximize girth}
\psfrag{&squash}{4) Squash}
\psfrag{&outputs}{code}
\centering
\includegraphics[width=8.5cm]{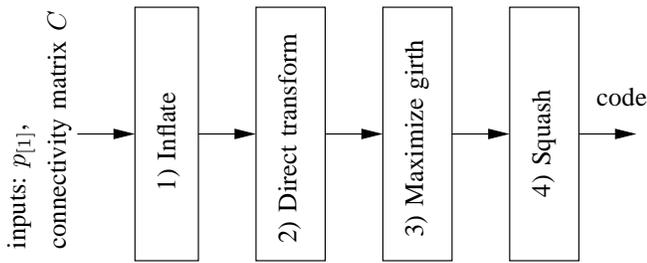}
\caption{The design procedure to produce high-girth codes.  The inputs
  are a protograph and the dimension of the first-level circulant
  matrices. The output is a weight-I QC LDPC code.}
\label{fig.designPipeline}
\end{figure}

Roughly speaking the procedure will start with a desired protograph
and code parameter $\p{1}$ (our procedure assumes $\p{2} = 4$).  We
first map the protograph into a connectivity matrix,
cf.~(\ref{eq.mapToConnectMat}).  Depending on the weight and relative
locations of the entries in the connectivity matrix, we ``inflate''
the connectivity matrix.  We then use the direct mapping of
Sec.~\ref{sec.protorestricted} to produce a polynomial parity check
matrix for a restricted two-level HQC LDPC code.  Next, using our max
girth algorithm we eliminate all non-inevitable six- and eight-cycles.
Finally, we use the squashing procedure of
Sec.~\ref{sec.squashingProcedure} to eliminate inevitable cycles.  Of
course, the way in which we inflate the code must be compatible with
the way we squash the code to produce a valid parity check matrix that
meets the parameters of interest.  It should be emphasized that the
LDPC code resulting from this procedure will be a QC LDPC and not a
{\em hierarchical} QC LDPC code, although the final structure will be
quite similar to that of an HQC LDPC code.

{\bf 1) Inflate connectivity matrix:} As indicated, the procedure first
produces the connectivity matrix $C$ of the protograph, which we
assume has no entries greater than 3. (We make no effort here to deal
with inevitable cycles caused by weights greater than 3). The
``inflation'' procedure works as follows.  We fist mark for
duplication each row of the matrix with two or more elements of value
2 or greater or a single element of value 3.  We also mark for duplication each column
that has two or more elements of value 2 or greater. Then we inflate
$C$ to produce a new connectivity matrix $C'$ in which each of the
rows in $C$ marked for duplication are duplicated.  We then inflate
again to produce $C''$ from $C'$ by duplicating each of the marked
columns. As will be evident when we get to squashing, we must track in
the matrices $C'$ and $C''$ which rows and columns are duplicated
versions of each other.  The following example illustrates the
inflating procedure.

\begin{example}
\label{ex.squashing}
Suppose we start with a protograph that has the connectivity matrix
\begin{equation}
C = \left[ \begin{array}{cccc} 
3 & 2 & 1 \\
0 & 2 & 1 \\\end{array}\right].
\label{eq:Cmatrix2}
\end{equation}
The first row in this connectivity matrix contains an element with
value 3 (and also two elements of value 2 or greater), so we mark it,
and we also mark the second column because it has two elements with
value 2 or greater. Duplicating the first row, we obtain
\begin{equation}
C' = \left[ \begin{array}{cccc} 
3 & 2 & 1\\
3 & 2 & 1\\
0 & 2 & 1\end{array}\right].
\label{eq:Cmatrix2fat}
\end{equation}
Now duplicating the second column, we obtain the inflated connectivity
matrix
\begin{equation}
C'' = \left[ \begin{array}{cccc} 
3 & 2 & 2 & 1\\
3 & 2 & 2 & 1\\
0 & 2 & 2 & 1\end{array}\right].
\label{eq:Cmatrix2inflated}
\end{equation}
In $C''$, the first and second rows, and also the second and third
columns, are tracked as duplicated versions of each other. \hfill \QED
\end{example}
\vskip .1cm

{\bf 2) Directly transformation $C''$ into $\parChkMat''(x,y)$:} Next we
make a direct transformation of the inflated connectivity matrix $C''$
into the polynomial parity check matrix $\parChkMat''(x,y)$ for a
two-level restricted HQC LDPC code with $p_{[2]}=4$.  We perform this
transformation under one additional restriction.  The restriction is
that the $y$ exponents in pairs of duplicated rows or pairs of
duplicated columns must be identical to each other. The value of
$p_{[1]}$ is left as a design parameter.

{\em Example~\ref{ex.squashing} (continued):} The inflated
connectivity matrix $C''$ is directly transformed into a polynomial
parity check matrix $\parChkMat''(x,y)$, yielding the form
\begin{equation*}
\parChkMat''(x,y)=
\left[ \begin{smallmatrix} 
x^a y^A +  x^b y^B + x^c y^C & x^d y^D + x^e y^E & x^f y^D + x^g y^E & x^h y^H\\
x^i y^A +  x^j y^B + x^k y^C & x^l y^D + x^m y^E & x^n y^D + x^o y^E & x^p y^H\\
0 &  x^q y^Q + x^r y^R & x^s y^Q + x^t y^R & x^u y^U \end{smallmatrix}\right].
\end{equation*}
Notice that the $y$ exponents in the first and second row and in the
second and third columns of this matrix have been restricted to be
identical to each other. Otherwise, all the exponents are free
parameters that satisfy $0 \le a_i \le p_{[1]}-1$ for any $x$ exponent
$a_i$ and $0 \le A_i \le p_{[2]}-1 = 3 $ for any $y$ exponent
$A_i$. \hfill \QED
\vskip .1cm

{\bf 3) Maximize the code's girth:} In the next step we apply the
girth-maximization algorithm of Sec.~\ref{sec.girthMax} to produce a
set of $x$-exponents $a_i$ and $y$-exponents $A_i$ such that no short
cycles exist except those that are inevitable.  Of course, the
hill-climbing algorithm of Sec.~\ref{sec.girthMax} is just one
possible approach.  Other algorithms could be used in its place. The
polynomial parity-check matrix $\parChkMat''(x,y)$ obtained in this
manner can be converted into an equivalent base matrix $\baseMat''$
for a weight-I QC LDPC code.

{\em Example~\ref{ex.squashing} (continued):} Using our
girth-maximizing algorithm, we find that with $p_{[1]}=200$ the
following choices for the $x$ and $y$ exponents in $\parChkMat''(x,y)$
will create no six-cycles or eight-cycles except for inevitable short
cycles:
\begin{equation*}
\left[ \begin{smallmatrix} 
x^{128} y^1 +  x^{69} y^2 + x^{118} y^3 & x^{11} y^2 + x^{121} y^3 & x^{170} y^2 + x^{109} y^3 & x^{38} y^3\\
x^{63} y^1 +  x^{156} y^2 + x^{38} y^3 & x^{186} y^2 + x^{183} y^3 & x^{52} y^2 + x^{146} y^3 & x^{43} y^3\\
0 &  x^{100} y^0 + x^{104} y^1 & x^{187} y^0 + x^{50} y^1 & x^{59} y^1 \end{smallmatrix}\right].
\end{equation*}
The code with the above polynomial parity check matrix is equivalent
to a standard weight-I QC LDPC code with base matrix $\baseMat''$
given by
\begin{equation}
\left[ \begin{smallmatrix}
-1 & 128 & 69 & 118 & -1 & -1 & 11 & 121 & -1 & -1 & 170 & 109 & -1 & -1 & -1 & 38 \\
118 & -1 & 128 & 69 & 121 & -1 & -1 & 11 & 109 & -1 & -1 & 170 & 38 & -1 & -1 & -1 \\
69 & 118 & -1 & 128 & 11 & 121 & -1 & -1 & 170 & 109 & -1 & -1 & -1 & 38 & -1 & -1 \\
128 & 69 & 118 & -1 & -1 & 11 & 121 & -1 & -1 & 170 & 109 & -1 & -1 & -1 & 38 & -1 \\
-1 & 63 & 156 & 38 & -1 & -1 & 186 & 183 & -1 & -1 & 52 & 146 & -1 & -1 & -1 & 43 \\
38 & -1 & 63 & 156 & 183 & -1 & -1 & 186 & 146 & -1 & -1 & 52 & 43 & -1 & -1 & -1 \\
156 & 38 & -1 & 63 & 186 & 183 & -1 & -1 & 52 & 146 & -1 & -1 & -1 & 43 & -1 & -1 \\
63 & 156 & 38 & -1 & -1 & 186 & 183 & -1 & -1 & 52 & 146 & -1 & -1 & -1 & 43 & -1 \\
-1 & -1 & -1 & -1 & 100 & 104 & -1 & -1 & 187 & 50 & -1 & -1 & -1 & 59 & -1 & -1 \\
-1 & -1 & -1 & -1 & -1 & 100 & 104 & -1 & -1 & 187 & 50 & -1 & -1 & -1 & 59 & -1 \\
-1 & -1 & -1 & -1 & -1 & -1 & 100 & 104 & -1 & -1 & 187 & 50 & -1 & -1 & -1 & 59 \\
-1 & -1 & -1 & -1 & 104 & -1 & -1 & 100 & 50 & -1 & -1 & 187 & 59 & -1 & -1 & -1 
\end{smallmatrix}\right].
\label{eq.Binflated}
\end{equation}
Notice that the base matrix $\baseMat''$ is composed of $4$ by $4$
circulant sub-matrices.  \hfill \QED
\vskip .1cm

{\bf 4) Squash the base matrix to remove inevitable cycles:} We now have
a base matrix $\baseMat''$ corresponding to the inflated connectivity
matrix $C''$. The next steps in our procedure will remove columns and
rows from $\baseMat''$ to obtain a base matrix corresponding to our
original connectivity matrix $C$.

First, we note that each column of the connectivity matrix $C''$
corresponds to four columns in the base matrix $\baseMat''$. 
In the next step of our procedure,
we focus on the columns that have been marked as duplicates in $C''$.
 We
retain the left two columns and remove the right two columns
in $\baseMat''$ from the four that
 correspond to the left column of a duplicated pair in 
$C''$, and also remove the left two columns but 
retain the right two columns in $\baseMat''$ from
the four that correspond
to the right column of a duplicated pair in $C''$. We call the 
thinned-out base matrix that is obtained from this procedure $\baseMat'$.

{\em Example~\ref{ex.squashing} (continued):} 
Recall that the second and third columns of $C''$ given
in equation (\ref{eq:Cmatrix2inflated})
have been marked as
duplicates of each other. So to obtain $\baseMat'$ from the base matrix $\baseMat''$
given in equation (\ref{eq.Binflated}), we retain the left two columns from the second
four in $\baseMat'$, and the right two columns from the third four in $\baseMat'$,
so that $\baseMat'$ is given by

\begin{equation*}
\baseMat' = 
\left[ \begin{smallmatrix}
-1 & 128 & 69 & 118 & -1 & -1 & 170 & 109 & -1 & -1 & -1 & 38 \\
118 & -1 & 128 & 69 & 121 & -1 & -1 & 170 & 38 & -1 & -1 & -1 \\
69 & 118 & -1 & 128 & 11 & 121 & -1 & -1 & -1 & 38 & -1 & -1 \\
128 & 69 & 118 & -1 & -1 & 11 & 109 & -1 & -1 & -1 & 38 & -1 \\
-1 & 63 & 156 & 38 & -1 & -1 & 52 & 146 & -1 & -1 & -1 & 43 \\
38 & -1 & 63 & 156 & 183 & -1 & -1 & 52 & 43 & -1 & -1 & -1 \\
156 & 38 & -1 & 63 & 186 & 183 & -1 & -1 & -1 & 43 & -1 & -1 \\
63 & 156 & 38 & -1 & -1 & 186 & 146 & -1 & -1 & -1 & 43 & -1 \\
-1 & -1 & -1 & -1 & 100 & 104 & -1 & -1 & -1 & 59 & -1 & -1 \\
-1 & -1 & -1 & -1 & -1 & 100 & 50 & -1 & -1 & -1 & 59 & -1 \\
-1 & -1 & -1 & -1 & -1 & -1 & 187 & 50 & -1 & -1 & -1 & 59 \\
-1 & -1 & -1 & -1 & 104 & -1 & -1 & 187 & 59 & -1 & -1 & -1 
\end{smallmatrix}\right].
\end{equation*}
\hfill \QED
\vskip .1cm

Now note that each row in the connectivity matrix $C'$ corresponds to four rows
in the base matrix $\baseMat'$.
In the final step of our procedure, we focus on the rows that have
been marked as duplicates in $C'$. We retain the top two rows
in $\baseMat'$ from the four that correspond to the top row in a duplicated pair in $C'$,
and we retain the bottom two rows in $\baseMat'$ from the four that correspond
to the bottom row of a duplicated pair in $C'$. We call the base matrix obtained by this
further thinning-out procedure $\baseMat$; this is the base matrix that will
correspond to our original connectivity matrix $C$.

{\em Example~\ref{ex.squashing} (continued):} 
The first and second rows of $C'$ given in equation (\ref{eq:Cmatrix2fat}) have
been marked as duplicates. That means that we should retain the top two rows
of the first group of four rows from $\baseMat'$, and the bottom two rows
from the second group of four rows. Thus, we obtain
\begin{equation*}
\baseMat = 
\left[ \begin{smallmatrix}
-1 & 128 & 69 & 118 & -1 & -1 & 170 & 109 & -1 & -1 & -1 & 38 \\
118 & -1 & 128 & 69 & 121 & -1 & -1 & 170 & 38 & -1 & -1 & -1 \\
156 & 38 & -1 & 63 & 186 & 183 & -1 & -1 & -1 & 43 & -1 & -1 \\
63 & 156 & 38 & -1 & -1 & 186 & 146 & -1 & -1 & -1 & 43 & -1 \\
-1 & -1 & -1 & -1 & 100 & 104 & -1 & -1 & -1 & 59 & -1 & -1 \\
-1 & -1 & -1 & -1 & -1 & 100 & 50 & -1 & -1 & -1 & 59 & -1 \\
-1 & -1 & -1 & -1 & -1 & -1 & 187 & 50 & -1 & -1 & -1 & 59 \\
-1 & -1 & -1 & -1 & 104 & -1 & -1 & 187 & 59 & -1 & -1 & -1 
\end{smallmatrix}\right].
\end{equation*}

Notice that the code defined by the final base matrix $\baseMat$ is not a hierarchical
QC LDPC code, because that base matrix is constructed from 4 by 4 sub-matrices that
are not circulant. Still, the code is a member of the 
class defined by the original protograph. In our example, each group of four
rows and four columns in the base matrix defines a type of check or bit.
So in our example, from the structure of
$\baseMat$, each check of the first type will be connected to three bits
of the first type, and two bits of the second type, and so on, 
just as required by the protograph.

In fact, any code
defined by a base matrix of a form similar to our $\baseMat$, for example of the
form
\begin{equation}
\baseMat = 
\left[ \begin{smallmatrix}
-1 & a_1 & a_2 & a_3 & -1 & -1 & a_4 & a_5 & -1 & -1 & -1 & a_6 \\
a_7 & -1 & a_8 & a_9 & a_{10} & -1 & -1 & a_{11} & a_{12} & -1 & -1 & -1 \\
a_{13} & a_{14} & -1 & a_{15} & a_{16} & a_{17} & -1 & -1 & -1 & a_{18} & -1 & -1 \\
a_{19} & a_{20} & a_{21} & -1 & -1 & a_{22} & a_{23} & -1 & -1 & -1 & a_{24} & -1 \\
-1 & -1 & -1 & -1 & a_{25} & a_{26} & -1 & -1 & -1 & a_{27} & -1 & -1 \\
-1 & -1 & -1 & -1 & -1 & a_{28} & a_{29} & -1 & -1 & -1 & a_{30} & -1 \\
-1 & -1 & -1 & -1 & -1 & -1 & a_{31} & a_{32} & -1 & -1 & -1 & a_{33} \\
-1 & -1 & -1 & -1 & a_{34} & -1 & -1 & a_{35} & a_{36} & -1 & -1 & -1 
\end{smallmatrix}\right],
\label{eq.Barbitrary}
\end{equation}
where the $a_i$ parameters are arbitrary, would also be a member of
the class defined by our protograph.  \hfill \QED

So the question might be raised, why not simply try to find suitable
parameters for a weight-I QC LDPC defined by a base matrix like that
in equation (\ref{eq.Barbitrary}) directly, instead of using the squashing procedure?
This question will be answered in more detail in Section \ref{sec.results}, but the
short answer is that the squashing procedure is more practical because it enforces 
useful additional structure in the base matrix, and thus normally involves far
fewer parameters for the hill-climbing algorithm to optimize. When one tries
to optimize over more parameters, there is a greater chance that the hill-climbing
algorithms will get stuck in an unfortunate local optimum.

\section{Numerical results}
\label{sec.results}

In this section we present a set of numerical results illustrating
our design methodology and associated performance results.  In
Sec.~\ref{sec.werSims} we present performance results for a pair of
girth-10 one-sided spatially-coupled codes and compare them to those
of girth-6 codes.  In Sec.~\ref{sec.girthMaxVsGuessAndTest} we give a
sense of the effectiveness of the hill-climbing approach to girth
maximization.  We do this by comparing the time required to find a
code of a certain girth by hill-climbing and by the guess-and-test
algorithm~\cite{MarcQC}.  Finally, in Sec.~\ref{sec.effectSquashing}
we demonstrate the effectiveness of the squashing procedure by
comparing to other candidate approaches.

\subsection{Performance of girth-10 QC LDPC codes}
\label{sec.werSims}

In this section we present word-error-rate (WER) and bit-error-rate
(BER) results for a pair of girth-10 one-sided spatially-coupled
codes.  We plot analogous results for girth-6 codes for comparison.
The first code is a rate-$0.45$ length-$8000$ QC LDPC code.  The
protograph structure of the code is a lengthened version of the one
depicted in Fig.~\ref{fig.onesided}(a).  As in that protograph, each
variable has degree three and check nodes have degree six, four or
two.  The protograph of the code we present has 20 variable nodes and
11 check nodes (in contrast the protograph in
Fig.~\ref{fig.onesided}(a) has 14 variable and 8 check nodes).  In
other words $L_{[2]} = 20$ and $J_{[2]} = 11$. The connectivity matrix
of the code is
 \begin{equation*}
C_1=
\left[ \begin{smallmatrix}
1  & 1  & 0  & 0  & 0  & 0  & 0  & 0  & 0  & 0  & 0  & 0  & 0  & 0  & 0  & 0  & 0  & 0  & 0  & 0\\
1  & 1  & 1  & 1  & 0  & 0  & 0  & 0  & 0  & 0  & 0  & 0  & 0  & 0  & 0  & 0  & 0  & 0  & 0  & 0\\
1  & 1  & 1  & 1  & 1  & 1  & 0  & 0  & 0  & 0  & 0  & 0  & 0  & 0  & 0  & 0  & 0  & 0  & 0  & 0\\
0  & 0  & 1  & 1  & 1  & 1  & 1  & 1  & 0  & 0  & 0  & 0  & 0  & 0  & 0  & 0  & 0  & 0  & 0  & 0\\
0  & 0  & 0  & 0  & 1  & 1  & 1  & 1  & 1  & 1  & 0  & 0  & 0  & 0  & 0  & 0  & 0  & 0  & 0  & 0\\
0  & 0  & 0  & 0  & 0  & 0  & 1  & 1  & 1  & 1  & 1  & 1  & 0  & 0  & 0  & 0  & 0  & 0  & 0  & 0\\
0  & 0  & 0  & 0  & 0  & 0  & 0  & 0  & 1  & 1  & 1  & 1  & 1  & 1  & 0  & 0  & 0  & 0  & 0  & 0\\
0  & 0  & 0  & 0  & 0  & 0  & 0  & 0  & 0  & 0  & 1  & 1  & 1  & 1  & 1  & 1  & 0  & 0  & 0  & 0\\
0  & 0  & 0  & 0  & 0  & 0  & 0  & 0  & 0  & 0  & 0  & 0  & 1  & 1  & 1  & 1  & 1  & 1  & 0  & 0\\
0  & 0  & 0  & 0  & 0  & 0  & 0  & 0  & 0  & 0  & 0  & 0  & 0  & 0  & 1  & 1  & 1  & 1  & 1  & 1\\
0  & 0  & 0  & 0  & 0  & 0  & 0  & 0  & 0  & 0  & 0  & 0  & 0  & 0  & 0  & 0  & 1  & 1  & 2  & 2
\end{smallmatrix} \right].
\label{eq.code1Conn}
\end{equation*}
Setting $p_{[2]}=4$ and using our design approach (girth maximization
and squashing) we found a girth 10 QC LDPC code with $p_{[1]} =
100$. The code length is $L_{[2]}\times p_{[2]}\times p_{[1]} = 8000$
and its base matrix, $B_1$, is specified in
App.~\ref{sec.baseMatrixApp}.

The second code is a rate-$1/3$ length-$24000$ QC LDPC code.  The
protograph structure of the second code is a shortened version of the
structure depicted in Fig.~\ref{fig.onesided}(b).  As in that
protograph the variables are all of degree four.  There are six
variable nodes and four check node, i.e., $L_{[2]} = 6$ and $J_{[2]} =
4$ (in contrast the protograph in Fig.~\ref{fig.onesided}(b) has 14
variable and 8 check nodes). The connectivity matrix of the code
 \begin{equation}
C_2=\left[ \begin{matrix}
1 & 1 & 0 & 0 & 0 & 0\\
1 & 1 & 1 & 1 & 0 & 0\\
1 & 1 & 1 & 1 & 1 & 1 \\
1 & 1 & 2 & 2 & 3 & 3
\end{matrix}\right].
\label{eq.code2Conn}
\end{equation}
Again we use $p_{[2]}=4$ and find a girth 10 QC LDPC code with
$p_{[1]} = 1000$. This code's length is $L_{[2]}\times p_{[2]}\times
p_{[1]} = 24000$. The base matrix, $B_2$, of this code is also
specified in App.~\ref{sec.baseMatrixApp}.

In Figs.~\ref{N8000_WER_BER} and~\ref{N24000_WER_BER} we plot the
respective error rate performance for the two codes on the binary
symmetric channel (BSC).  For purposes of comparison we plot analogous
results for three randomly generated girth-6 QC LDPC codes.  These
codes have the same length, same rate, and same non-zero positions in
the base matrix as the girth-10 codes to which they are compared.

\begin{figure}[htbp]
      \centering
      \includegraphics[width=8.75cm,type=eps,ext=.eps,read=.eps]{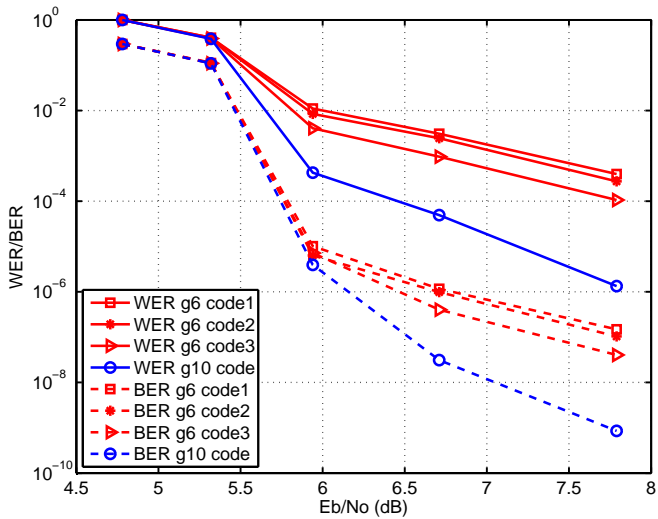}
      \caption{Word- and bit-error rate plots for the rate-$0.45$,
        length-$8000$ girth-6 and girth-10 QC LDPC codes.}
      \label{N8000_WER_BER}
\end{figure}

\begin{figure}[htbp]
      \centering
      \includegraphics[width=8.75cm,type=eps,ext=.eps,read=.eps]{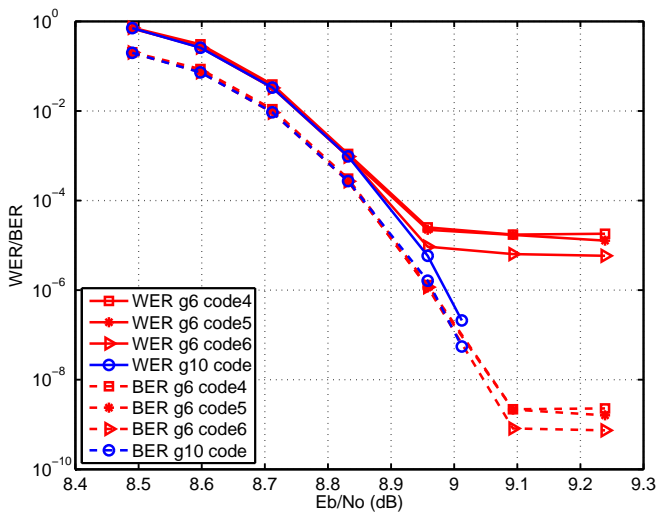}
      \caption{Word- and bit-error rate plots for the rate-$1/3$,
        length-$24000$ girth-6 and girth-10 QC LDPC codes.}
      \label{N24000_WER_BER}
\end{figure}

In all cases, we plot the WER and BER as a function of the
signal-to-noise ratio (SNR), using the Gallager-B decoding
algorithm~\cite{gallager} running for a maximum of $200$ iterations to
guarantee the convergence of decoding.  While there is a significant
difference between the error rates of a standard sum-product decoder
and Gallager-B, the performance trends of Gallager-B and sum-product
are mostly quite similar.  That said, computational complexity is our
main reason to plot results for Gallager-B rather than sum-product.
The error floor of Gallager-B occurs at a higher WER and thus is
easier to attain.  In addition, the Gallager-B algorithm runs very
fast.  This further helps to collect useful statistics about the error
floor regime.

In the plots the SNRs are calculated assuming that the BSC results
from hard-decision demodulation of a binary phase-shift keying (BPSK)
$\pm 1$ sequence transmitted over an additive white Gaussian noise
(AWGN) channel.  The resulting relation between the crossover
probability $p$ of the equivalent BSC-$p$ and the SNR of the AWGN
channel is $ p = \mbox{Q}\left(\sqrt{2 R \cdot 10^{SNR/10}}\right),
\nonumber $ where $R$ is the rate of the code and $\mbox{Q}(\cdot)$ is
the Q-function.

Figure~\ref{N8000_WER_BER} plots the results of the rate-$0.45$
length-$8000$ codes and illustrates the general improvement to error
floor behavior provided by larger girth.  At the highest SNR (around
7.8 dB) the WERs and BERs of the girth-10 code is about two orders of
magnitude larger than those of the girth-6 codes.  Furthermore, the
three girth-6 codes plotted do show some variability in their error
rates.  This illustrates that the error floor is not only a function
of girth, though higher girth certainly helps.

Figure~\ref{N24000_WER_BER} plots the results of the rate-$1/3$
length-$24000$ codes and illustrates some of the same points as were
made for the shorter code, as well as some new ones.  First, we note
that at these lengths the error floor effect is very abrupt,
initiating just below $9$ dB.  Again, higher girth yields a marked
improvement, most clearly seen in the WER plots.  And again, as also
noted in Fig.~\ref{N8000_WER_BER}, we see some variability in the
girth-6 codes.

A new observation comes from observing that the SNRs at which the
error floor of the girth-6 codes becomes noticeable is different for
the WER and BER plots.  It occurs at a higher SNR for BER.  First
consider the highest SNR at which we have results for the girth-6
codes, roughly $9.25$ dB.  Here the difference between WER and BER is
about four orders of magnitude.  Recalling that the codes are of
length $24000$ this means that the post-decoding error patterns in
this regime consist of only a few erroneous bits, consistent with
failures caused by small trapping sets.  In contrast, if we consider
the last data point prior to the error floor, at around $8.95$ dB, the
ratio between word- and bit-error rate is only about one order of
magnitude.  This implies that the BER is still dominated by much
heavier weight error patterns, consistent with the decoder being in
the waterfall regime.  Now, consider the final data point for the
girth-10 code that we were able to obtain at just over $9$ dB.  While
by considering the WER plots of the girth-6 codes we confirm that
those codes are already in their error floor regime, the same is not
true of the girth-10 code, for which the difference between its WER
and BER is less than an order of magnitude.

\subsection{Effectiveness of the girth maximization algorithm}
\label{sec.girthMaxVsGuessAndTest}

In this section we develop a sense of how much the hill-climbing type
of girth maximization algorithm presented in Sec.~\ref{sec.girthMax}
helps in finding high girth codes.  We compare our algorithm to the
baseline guess-and-test algorithm~\cite{MarcQC}.  To understand
guess-and-test, consider a regular weight-I QC LDPC code specified by
a $J\times L$ base matrix and a desired girth.  Guess-and-test fixes
all entries in the first row and the first column of the base matrix
to be zero.  The rest it chooses independently and uniformly between
$0$ and $p-1$.  This process is continued until a set is found such
that the condition for the existence of a cycle, e.g.,
(\ref{p4cycle}), is verified not to hold for all cycles shorter than
the desired girth.  The problem with guess-and-test is that it is
time-consuming and doesn't exploit the structure of the cycles in its
search, in contrast to our hill-climbing algorithm.

To make an informative comparison with guess-and-test, we define
``success rate'' to be the fraction of times that a run of either
algorithm (guess-and-test or hill-climbing) results in a base matrix
that has the desired girth and circulant matrices size $\p{1}$.
Figure~\ref{CompareJ3L9} depicts the success rate of the
guess-and-test and hill climbing in generating girth-$8$ weight-I
regular QC LDPC codes with base matrices of size $3 \times 9$. We
observe that for the guess-and-test to find a parity check matrix with
girth-$8$ at a circulant size $\p{1}=50$ we need, on average, to test
$10^6$ random matrices.  In contrast, hill climbing has near certain
success.

\begin{figure}[htbp]
      \centering
      \includegraphics[width=8.75cm,type=eps,ext=.eps,read=.eps]{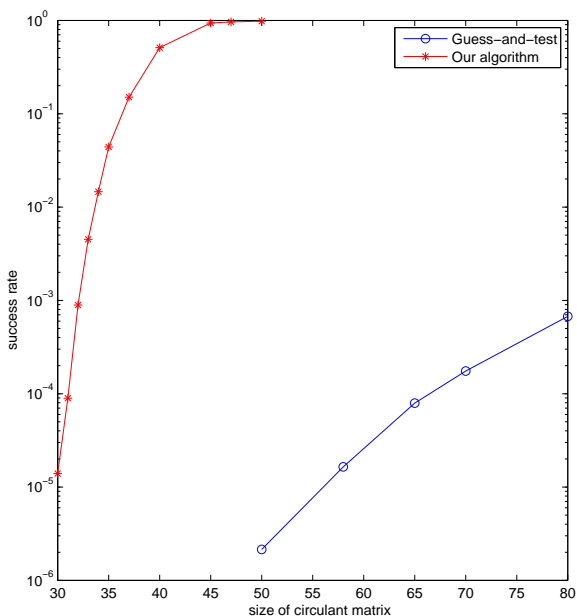}
      \caption{Comparison of the success rate of guess-and-test and
        hill climbing in finding a weight-I girth-8 regular QC LDPC
        code when base matrix dimensions are $3 \times 9$.}
      \label{CompareJ3L9}
\end{figure}

\subsection{Effectiveness of the squashing procedure}
\label{sec.effectSquashing}

We conclude our discussion of numerical examples by discussing the
computational motivations for the squashing procedure.  Recall that in
Sec.~\ref{sec.squashing} we raised the following question.  Why do we
not simply try directly to find suitable parameters for a weight-I QC
LDPC code, rather than constructing an HQC code and using the
squashing procedure? We now show that it is much harder to find a
suitable code using the direct method.

To show this we present results on the following experiment.  First we
construct several protographs with structures similar to
Fig.~\ref{fig.onesided}(a) with the number of check nodes ranging from
three to nine. We set $p_{[2]} = 4$ which means that the number of
rows in the corresponding base matrices ranges from 12 to 36. For each
protograph, we construct girth-10 QC LDPC codes with $p_{[1]} = 100$
using the girth maximization algorithm and the squashing procedure.
We also try to construct girth-10 weight-I QC LDPC codes with base
matrices having the same size and same non-zero positions as those
obtained from the squashing procedure using the direct method.  The
same hill climbing algorithm is applied to this design problem as is
used in conjunction with the HQC LDPC approach.  We record the time of
designing ten codes for each configuration. Figure~\ref{timing}
depicts the average time required to construct one girth-10 QC LDPC
code using each of these two schemes.

\begin{figure}[htbp]
      \centering
      \includegraphics[width=8.75cm,type=eps,ext=.eps,read=.eps]{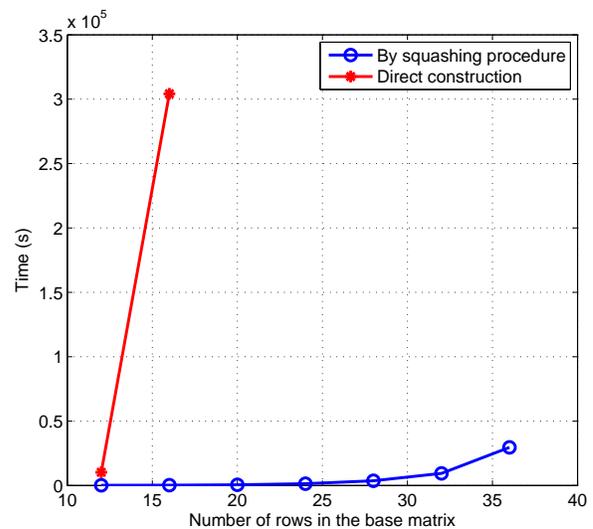}
      \caption{Average time of constructing one girth-10 QC LDPC code
        with the direct method and the squashing procedure.}
      \label{timing}
\end{figure}

For both schemes, the time required to find a girth-10 code increases
with the number of rows in the base matrix.  When the squashing
procedure is used, we can find a suitable base matrix in reasonable
time even for large base matrices (large number of rows).  In
contrast, when using the direct method, we have to spend an extremely
long time searching even for a small base matrices. From this
comparison, we conclude that the squashing method is quite a bit more
efficient.
\section{Conclusion}
\label{sec.conclusion}

In this paper we present a methodology for designing high-girth QC
LDPC codes that match a given protograph structure.  In developing our
methodology, we introduce a new class of {\em hierarchical} QC LDPC
codes and explain how to determine the girth of such codes. The
hierarchical QC LDPC codes can be represented using parity check
matrices over multi-variate polynomials, or in terms of a tree
structure. We show that higher-weight versions of hierarchical codes
suffer from inevitable cycles in analogous ways to non-hierarchical QC
LDPC codes, but that a straightforward {\em squashing} procedure can
remove these cycles.  We introduce a hill-climbing procedure to
eliminate the non-inevitable cycles from the code, and subsequently
remove the inevitable cycles by squashing.  Thus the main use of the
hierarchical codes in this paper is to reduce the number of free
parameters in the codes in an effort to make the girth maximization
procedure computationally tractable and fast, while knowing that the
inevitable cycles can be removed by squashing.  In our numerical
results we illustrate the computational advantage of the hill-climbing
and squashing procedures in comparison with other standard approaches.

We demonstrate our concepts and design procedure for the case of
one-sided spatially-coupled QC LDPC codes.  We present designs for two
such codes, of different rates and block lengths, both of girth-10.
We compare their performance to girth-6 codes and observe a
significant decrease in the error floor.  We note that the second
code, whose variable nodes are of weight four, does not demonstrate
any error floor tendencies down to a WER of about $10^{-7}$, i.e., the
slope of the WER as a function of SNR is still
steepening. Computational effort limited us from simulating lower
WERs.  But we note that the Gallager-B algorithm we chose to simulate
displays much higher error floors than the standard sum-product or
min-sum algorithms.  (In fact, this is why we choose to simulate this
algorithm.)  Given that the class of one-sided spatially coupled codes
has already been theoretically shown to have excellent waterfall
performance, we believe the evidence presented strongly indicates that
the techniques introduced herein can produce practical codes with very
good performance in both the waterfall and error floor regimes.

\subsection{Girth maximizing algorithms}
\label{sec.girthMaxAlgs}

In this appendix we present our girth maximizing algorithms.  As
discussed in the text the objective of these algorithms is to remove
all {\em non-inevitable} cycles from the quasi-cyclic codes.  We first
present our algorithm for weight-1 QC LDPC codes, and then for general
heavier-weight or HQC LDPC codes.  We do this for simplicity of
explanation as the latter algorithm is a generalization of the former.

\vskip 0.1 in

\noindent{\bf Algorithm 1:} Weight-I QC LDPC code construction

{\bf (i) Set-up and code initialization:} Specify the desired girth
$\girth$, matrix dimension $p$, and $\setTree$.

For each pair $(j,l)$ such that ${\bf T}_{j,l} \neq \nullTree$, pick a
value $z$ independently and uniformly from $\{0,\ldots,p-1\}$.
Initialize the code with $\sym{z}{j}{l} = 1$ (and $\sym{z'}{j}{l} = 0$
for all $z' \neq z$).

{\bf (ii) Calculate cost vector of current code:} Use Subroutine~1,
described in Appendix~\ref{app.costQCLDPC}, to calculate the cost
vectors of the current code, i.e., $\Gamma = \{\costB{j,l}\}$.  Then
for each element of $\Gamma$ we calculate the change in edge label
that most reduces cost, and the resulting cost, respectively:
\begin{align*}
\tilde{z}_{j,l} = \mathop{\arg \min}_{z:\, 0 \leq z \leq p-1}
\costB{j,l}(z),\\ \costTil{j,l} = \min_{z:\, 0 \leq z \leq p-1}
\costB{j,l}(z).
\end{align*}
Recalling that $\indSym[j,l]$ is the value of the of the coefficient
of the current code $\sym{\indSym}{j}{l}$, let
\begin{eqnarray}
\costBneg{j,l} = \costB{j,l}(\indSym[j,l]) \nonumber
\end{eqnarray}
be the cost of the coefficient if it remains unchanged.  

{\bf(iii) Identify best coefficient to change:} Identify the coefficient
to change that would most greatly reduces the cost, i.e.,
\begin{equation*}
(j_{\max}, l_{\max}) = \mathop{\arg \max}_{(j,l):\, 1 \leq j \leq J,
    \,1 \leq l \leq L, \, \conMat_{j,l} \neq \nullTree}
  \costBneg{j,l}-\costTil{j,l},
\end{equation*}
where we break ties randomly.  There are two possible outcomes.

\begin{itemize}
\item[(a)] If
  $\costBneg{j_{\max},l_{\max}}-\costTil{j_{\max},l_{\max}} > 0$,
  we update the code by setting
\begin{equation*}
\sym{\tilde{z}_{j_{max},l_{max}}}{j_{\max}}{l_{\max}} = 1,
\end{equation*}
and
\begin{equation*}
\sym{\indSym}{j_{\max}}{l_{\max}} = 0.
\end{equation*}
We iterate by now returning to Step (ii).

\item[(b)] If
  $\costBneg{j_{\max},l_{\max}}-\costTil{j_{\max},l_{\max}} = 0$,
  the algorithm terminates.\\
\end{itemize}

{\bf (iii) Terminate algorithm:} There are two possible termination
conditions.
\begin{itemize}
\item[(a)] If $\costB{j,l}(\indSym[j,l])=0$ for all $(j,l)$ such that
  $\conMat_{j,l} \neq \nullTree$, then we have found a code that
  satisfies the desired parameters.
\item[(b)] Else if there is a $(j,l)$ such that
  $\costB{j,l}(\indSym[j,l])\neq 0$ the algorithm has converged to a
  local minimum.
\end{itemize}

We now present the generalized algorithm for heavier-weight QC LDPC
and HQC LDPC codes.  In contrast to the first algorithm, the trees
$\conMat_{j,l} \in \setTree$ that define these codes have more than
one edge.  Therefore, for each edge of each tree we define a cost
vector.  We index the cost vectors both by their level in the tree and
by their position within each level, as well as by $j$ and $l$, thus
\begin{equation*}
\costBHier{j,l,i,k}=[\cost{0}, \cost{1}, \cdots,
  \cost{\p{k}-1}]
\end{equation*}
for $1 \leq i \leq |\conMat_{j,l}[k]|$ and $1 \leq k \leq K$ where we
recall that $|\conMat_{j,l}[k]|$ is the number of edges at level $k$
in $\conMat_{j,l}$.

\vskip 0.1 in
\noindent{\bf Algorithm 2:} Hierarchical QC LDPC code construction

{\bf (i) Set-up and code initialization:} Specify the desired girth
$\girth$, matrix dimension $p$, and $\setTree$.

For each pair $(j,l)$ such that ${\bf T}_{j,l} \neq \nullTree$,
randomly initialize the values for each edge label (while obeying the
requirement that sibling edges must have distinct labels).  Probably
the most straightforward way to do this is to work down the tree from
level $K$ to the first level, picking the edge labels for each set of
sibling edges at level $k$ uniformly without replacement from $\{0,
\ldots, \p{k}-1\}$.  Given the initial edge labels, compute all
non-zero code coefficients, i.e., those associated with each leaf.

{\bf (ii) Calculate cost vector of current code:} Use Subroutine~2,
described in Appendix~\ref{app.costHQCLDPC}, to calculate the cost
vectors of the current code, i.e., $\Gamma = \{\costBHier{j,l,i,k}\}$.
Then for each element of $\Gamma$ we calculate the change in edge
label that most reduces cost, and the resulting cost, respectively:
\begin{align*}
\tilde{z}_{j,l,i,k} & = \mathop{\arg \min}_{z: 0 \leq z \leq \p{k}-1}
\costB{j,l,i,k}(z), \\
\costTil{j,l,i,k} &= \min_{z: 0 \leq z \leq \p{k}-1}
\costB{j,l,i,k}(z).
\end{align*}

Recalling that $\ind{k}[j,l]$ is the value of the of the $k$th
coordinate of the current code coefficient $\sym{\indSymBf}{j}{l}$, let
\begin{eqnarray}
\costBneg{j,l,i,k} = \costB{j,l,i,k}(\ind{k}[j,l]) \nonumber
\end{eqnarray}
be the cost if the coefficient value at the $k$th level remains
unchanged.

{\bf(iii) Identify best edge label to change:} Identify the edge label
to change that would most greatly reduces the cost, i.e.,
\begin{equation*}
(j_{\max}, l_{\max},  i_{\max}, k_{\max}) = \hspace{-3em} \mathop{\arg
    \max}_{\begin{footnotesize} \begin{array}{ll} (j,l,i,k) : & \hspace{-1em}
        1 \leq j \leq \Jrow{K},  1 \leq l \leq \Lcol{K}\\ 
        & \hspace{-1em} \conMat_{j,l} \neq \nullTree,  1 \leq i \leq
        |\conMat_{j,l}[k]|\\ & \hspace{-1em} 1 \leq k \leq K
      \end{array} \end{footnotesize}} 
 \hspace{-3em} \costBneg{j,l,i,k}-\costTil{j,l,i,k},
\end{equation*}
where we break ties randomly.  There are two possible outcomes.

\begin{itemize}
\item[(a)] If $\costBneg{j_{\max},l_{\max}, i_{\max},
  k_{\max}}-\costTil{j_{\max},l_{\max}, i_{\max}, k_{\max}} > 0$, we
  update the code by setting the value of the $i_{\max}$th edge at the
  $k_{\max}$th level of $\conMat_{j_{\max}, l_{\max}}$ equal to
  $\tilde{z}_{j_{\max},l_{\max},i_{\max},k_{\max}}$.

We iterate by now returning to Step (ii).

\item[(b)] If $\costBneg{j_{\max},l_{\max}, i_{\max},
  k_{\max}}-\costTil{j_{\max},l_{\max}, i_{\max}, k_{\max}} = 0$, the algorithm
  terminates.\\
\end{itemize}

{\bf (iv) Terminate algorithm:} There are two possible termination
conditions.
\begin{itemize}
\item[(a)] If for all $(j,l,i,k)$ we have $\costB{j,l,i,k}(z)=0$ when
  $z$ is set to equal the current label of the $i$th edge at level $k$
  in tree $\conMat_{j,l}$, then we have found a code that satisfies
  the desired parameters.
\item[(b)] Else there is a $(j,l,i,k)$ such that
  $\costB{j,l,i,k}(z)\neq 0$ and the algorithm has converged to a
  local minimum.
\end{itemize}

\subsection{The multiplicity of a path element}
\label{app.maxGirth}

Recall from the discussion of Section~\ref{sec.hillNonHier} that the
determination of guilty values becomes complicated when there are
repeated elements in a path.  To aid in dealing with these repeated
elements, in this appendix, we define the ``multiplicity'' of each
path element. This definition is needed for for the cost vector
calculation subroutines of both QC and HQC LDPC codes, described in
Appendices~\ref{app.costQCLDPC} and~\ref{app.costHQCLDPC},
respectively.

\begin{defn} \label{def.multiplicity}
Given a path $\path = \{\ordSet, \pathCoeff\}$, any coefficient in
$\pathCoeff$ is said to be {\em repeated} $r$ times if there are $r$
elements of $\path$, indexed by $i_1, \ldots, i_{r}$, for which
$(j_{i_1}, l_{i_1}) = (j_{i_2}, l_{i_2}) = \ldots = (j_{i_r},
l_{i_r})$ and for which $\bfInd{j_{i_1}}{l_{i_1}} = \ldots =
\bfInd{j_{i_r}}{l_{i_r}}$.  The {\em multiplicity} $\kappa$ of the
  element is computed as
\begin{equation}
\kappa = \sum_{t = 1}^r (-1)^{i_t}. \label{def.kappa}
\end{equation}
For path elements where $|\kappa| > 1$, $i_1$ is termed the {\em first
  occurrence} of the element. \hfill \QED
\end{defn}
The multiplicity can be a positive integer, a negative integer, or
zero.  When a path element has multiplicity zero the value of the
coefficient has no effect on whether (that particular) path
corresponds to a cycle.

\subsection{Cost calculation subroutine for weight-I QC LDPC codes}
\label{app.costQCLDPC}

In this appendix we present the subroutine for the calculation of the
cost vectors of a weight-I QC LDPC code.  In other words, given a set
of labeled trees we calculate the matrix specified
in~(\ref{eq.costMatrix}).

\noindent{\bf Subroutine 1:} 

The subroutine takes as inputs the current tree structure $\setTree$
(i.e., the set of labeled trees or, equivalently, the current parity
check matrix $\parChkMat$), the desired girth $g$, and a vector of
costs $\weight$.

{\bf (i) Define helper variables:} Define $x_{j,l,z}^{(\len)}$ to be
the number of cycles of length-$2\len$ that would result if edge label
$\indSym[j,l]$ were set to equal value $z$.  In other words, the code
was modified to be one in which $\sym{z}{j}{l} = 1$ and
$\sym{z'}{j}{l} = 0$ for all $z' \neq z$.  Initialize all
$x_{j,l,z}^{(\len)} = 0$.

{\bf (ii) Iterate through path lengths, paths, and path elements:}
Consider in turn: (a) each path length $\len$ where $2 \leq \len \leq
\girth/2-1$ and $\girth$ is the desired girth; (b) each path of length
$\len$, $\path \in \path_{\len}$ where $\path = \{\ordSet,
\pathCoeff\}$ and $|\ordSet| = |\pathCoeff| = 2 \Lambda$; and (c) the
first occurrence of each path element (indexed by $t, 1 \leq t \leq 2
\len$) in $\path$ that has non-zero multiplicity.  

{\bf (iii) Calculate guilty values and adjust helper variables:} Let
$s[j_\tau,l_\tau]$ be the first occurrence of a path element of
multiplicity $\kappa \neq 0$.  We want to compute the set of possible
values for $s[j_\tau, l_\tau]$ that would satisfy the condition for
the existence of a cycle.  Recall from~(\ref{GirthCon}) that a cycle
exists for the current path values if
\begin{equation}
\sum_{t=1}^{2 \Lambda} (-1)^{t} s[j_t,l_t] \;
\textmd{mod} \; p = 0. \label{eq.doesCycleExist}
\end{equation}
To check if a value $\beta \in \{0, \ldots, p-1\}$ to which $s[j_\tau,
  l_\tau]$ could be changed would satisfy~(\ref{eq.doesCycleExist}),
we subtract the contribution of the current value of $s[j_\tau,
  l_\tau]$, add in the contribution of the candidate value $\beta$,
and see if the result is equal to zero.  That is, we check whether or
not the relation
\begin{equation*}
\left\{\sum_{t=1}^{2 \Lambda} (-1)^{t} s[j_t,l_t] - \kappa s[j_\tau,
  l_\tau] + \kappa \beta \right\}\; \textmd{mod} \; p = 0
\end{equation*}
holds.  Equivalently, we ask is
\begin{equation}
\kappa \beta \equiv \kappa s[j_\tau, l_\tau] - \sum_{t=1}^{2 \Lambda}
(-1)^{t} s[j_t,l_t], \label{eq.condOnBeta}
\end{equation}
where the congruence is modulo-$p$?

For each value of $\beta$, $0 \leq \beta \leq p - 1$
satisfying~(\ref{eq.condOnBeta}) we increment
$x_{j_\tau,l_\tau,\beta}^{(\len)}$ as
\begin{equation*}
x_{j_\tau,l_\tau,\beta}^{(\len)} =
x_{j_\tau,l_\tau,\beta}^{(\len)}+1.
\end{equation*}

{\em Remarks:} By only computing the $\beta$ for the first occurrence
of each path element, we avoid double-counting the contribution to
cycles of elements with $|\kappa| > 1$.  Allowing $\kappa$, defined
in~(\ref{def.kappa}), to take on either positive or negative values
lets the multiplicity of the element indicate its ``aggregate
polarity'', i.e., whether it enters the sum~(\ref{eq.doesCycleExist})
as a positive or a negative contribution.  Since the calculations of
$\beta$ in~(\ref{eq.condOnBeta}) are over a ring, multiple values of
$\beta$ can satisfy the condition.\footnote{If, however, you restrict
  $p$ to be prime, which we do not, then the calculations would be
  over a field and there would be a unique solution $\beta$.  We do
  not choose to do this due to the greater limitation on the possible
  resulting block lengths of the code.}  However, at most there are
  $|\kappa|$ such values of $\beta$.  This is because the set of
  satisfying values of $\beta$ forms a coset of $Z_p$ with respect to
  the subgroup $\{\beta \; \mbox{s.t.} \; \kappa \beta \equiv 0\}$,
  the cardinality of which is upper bounded by $\kappa$.  Finally, we
  note that if $|\kappa| = 1$, a $\beta$
  satisfying~(\ref{eq.condOnBeta}) exists and it is the unique such
  $\beta$.

{\bf (iv) Compute cost vectors:} After considering all paths lengths
$\len$, $2 \leq \len \leq \girth/2-1$, all $\path \in \path_{\len}$, and
all elements of each path $\path$, calculate the cost vectors
element-by-element as
\begin{eqnarray*}
\costB{j,l}(z) = \sum_{\len=2}^{\girth/2-1} x_{j, l, z}^{(\len)} \cdot
w_{\len}. 
 \end{eqnarray*}

\subsection{Cost calculation subroutine for HQC LDPC codes}
\label{app.costHQCLDPC}

We now present the subroutine used to calculate the cost vectors of a
general HQC LDPC code.

\vskip 0.1 in
\noindent{\bf Subroutine~2:} 

The subroutine takes as inputs the current tree structure $\setTree$
(i.e., set of labeled trees or, equivalently, the current parity check
matrix $\parChkMat$), the desired girth $g$, and a vector of costs
$\weight$.

{\bf (i) Define helper variables:} Define $x_{j,l,i,z}^{(\len)}[k]$ to
be the number of cycles of length-$2\len$ that would result if the
$i$th edge at level $k$ in $\conMat_{j,l}$ were set to equal value
$z$, $0 \leq z \leq \p{k}-1$.

{\em Remark:} Modification of a single edge has in a hierarchical code
will, in general, change a number of code coefficients.  In
particular, all coefficients associated with leaves that are
descendents of that edge will change in their $k$th coordinate.  These
coefficients will change from ones in which
\begin{align*}
\sym{\ind{1}, \ldots,
  \ind{k-1}, \ind{k}, \ind{k+1}, \ldots \ind{K}}{j}{l} & = 1 \hspace{1em} \mbox{and}\\
\sym{\ind{1}, \ldots, \ind{k-1}, z, \ind{k+1}, \ldots \ind{K}}{j}{l}
& = 0
\end{align*}
to ones in which
\begin{align*}
\sym{\ind{1}, \ldots, \ind{k-1}, z, \ind{k+1}, \ldots \ind{K}}{j}{l} &
= 1 \hspace{1em} \mbox{ and}\\ \sym{\ind{1}, \ldots, \ind{k-1},
  \ind{k}, \ind{k+1}, \ldots \ind{K}}{j}{l} & = 0.
\end{align*}
 Initialize all $x_{j,l,i,z}^{(\len)}[k] = 0$.

{\bf (ii) Set infinite costs:} For each $\conMat_{j,l} \neq
\nullTree$, each pair $(j,l)$, $1 \leq j \leq \Jrow{K}$, $1 \leq l
\leq \Lcol{K}$, each level $k$, $1 \leq k \leq K$, and each level-$k$
edge index $i$, $1 \leq i \leq |\conMat_{j,l}[k]|$, let ${\cal E}$ be
the set of labels of sibling edges.  For each $z \in {\cal E}$ set
\begin{equation*}
x_{j,l,i,z}^{(\len)}[k] =  \infty.
\end{equation*}

{\em Remark:} Recall from the algorithms described in
Section~\ref{sec.girthMax} that our approach to code optimization is
to identify the change in the single edge label that most reduces a
weighted sum of cycle counts.  In the special case of weight-I QC LDPC
codes there was a one-to-one mapping between code coefficients and
tree edges (since each tree has only a single edge).  In the
generalized setting we are now considering we seek to identify the
change in a single edge of one of the trees that will most reduce the
cost.  By setting certain costs to infinity, certain changes in code
structure will never be made. The changes thus barred are those that
would change the tree topology.  By setting those costs to infinity we
ensure that the unlabeled trees that describe our code remains an
invariant under our algorithm.

{\bf (iii) Iterate through path lengths, paths, and path elements:}
Consider in turn: (a) each path length $\len$ where $2 \leq \len \leq
\girth/2-1$; (b) each path of length $\len$, $\path \in \path_{\len}$
where $\path = \{\ordSet, \pathCoeff\}$ and $|\ordSet| = |\pathCoeff|
= 2 \Lambda$; (c) the first occurrence of each path element (indexed
by $t, 1 \leq t \leq 2 \len$) in $\path$ that has non-zero
multiplicity.

{\bf (iv) Determine whether a particular path element can have
  ``guilty'' vales:} Let $\bfInd{j_\tau}{l_\tau}$ be the first
occurrence of a path element of multiplicity $\kappa \neq 0$.  Recall
that $\ind{k}[j_\tau,l_\tau]$ corresponds to the label of an edge of
tree $\conMat_{j_\tau, l_\tau}$ at level $k$.  Now, for the
coefficient $\bfInd{j_\tau}{l_\tau}$ under consideration, iterate
through each level of the code.  For each level $k$, $1 \leq k \leq K$
compute
\begin{eqnarray*}
\alpha_k = \sum_{t=1}^{2 \Lambda} (-1)^t \ind{k}[j_t,l_t] \;
\textmd{mod} \; \p{k}.
\end{eqnarray*}
Unless $\alpha_{k} = 0$ {\em for all but one} value of $k$, there are
no guilty values.  If there are no guilty values, proceed to the next
path element.  If there is a single level $k'$ such that $\alpha_{k'}
\neq 0$ proceed to step (v).

{\em Remark:} The reason for the all-but-one condition is that we
change at most one edge label per iteration.  Therefore, unless
$\alpha_k = 0$ for all but one value of $k$ there is no single change
in an edge label that would result in a cycle in this iteration.

{\bf (v) Calculate guilty values and adjust helper variables:} Now
consider coordinate $k'$ of the path element $\bfInd{j_\tau}{l_\tau}$
whose multiplicity $\kappa \neq 0$.  The same logic as led
to~(\ref{eq.condOnBeta}) can again be used to identify the guilty
values, now at level $k$.  That is, compute the set of values of
$\beta$, $0 \leq \beta \leq \p{k}-1$ such that
\begin{equation}
\kappa \beta \equiv \kappa \ind{k}[j_\tau,l_\tau] - \sum_{t=1}^{2
  \Lambda} (-1)^{t} \ind{k}[j_t,l_t]. \label{eq.condOnBetaHier}
\end{equation}

For each value of $\beta$, $0 \leq \beta \leq \p{k} - 1$
satisfying~(\ref{eq.condOnBetaHier}) we increment
$x_{j_t,l_t,i,\beta}^{(\len)}[k]$ as
\begin{equation*}
x_{j_t,l_t,i,\beta}^{(\len)}[k] =
x_{j_t,l_t,i,\beta}^{(\len)}[k]+1.  
\end{equation*}
where $i$, $1 \leq i \leq |\conMat_{j_t, l_t}[k]|$, is the index of
the level-$k$ edge in $\conMat_{j_t, l_t}$ whose label is
$\ind{k}[j_\tau,l_\tau]$.

{\em Remarks:} One of the added complications of the generalized
algorithm is that there is not a one-to-one mapping between the code
parameters that we are adjusting (the tree edge values) and the code
coefficients (each of which is associated with one leaf of the tree).
When an edge value is adjusted there is a ripple effect, changing the
coefficients associated with all descendent leaves.  However, each
change in a edge label effects only one of the $K$
sums~(\ref{def.pathSum}), all of which Theorem~\ref{thm.girthCond}
requires to be equal to zero for a cycle to exist.  Thus, although
there is a ripple effect on the code coefficient when adjusting edge
labels, the values of the $\Sigma[k]$ at other levels is not effected.
Thus, considering the tree structure of the code nicely decouples the
question of girth and the search for high-girths from the algebraic
structure of the code.

\subsection{Proof of Lemma~\ref{lemma.squash}}
\label{app.squash}

To prove part (i) of the lemma consider the ordered set of
coefficients~(\ref{eq.sixCycleInevPath}) that describes the inevitable
cycle.  Note that the first and last coefficient must be in the same
row of the base matrix since the path defines a cycle.  The second and
third and the fourth and fifth coefficients must also each be in the
same row.  Since, when viewed at the first level of the code,
successive rows in a path must be distinct, three distinct rows are
traversed.  In Fig.~\ref{fig.illustrativePathThroughBaseMat_six} we
illustrate this logic for a matrix corresponding to the polynomial
$h_{j,l}(x,y) = x^{a_1}y^0 + x^{a_2}y^1 + x^{a_3}y^2$, $\ordSet = \{
(j,l), (j,l), (j,l), (j,l), (j,l), (j,l)\}$,and $\pathCoeff = \{ [a_1
  \; 0]^T, [a_2\;1]^T, [a_3\;2]^T, [a_1\;0]^T, [a_2\;1]^T,
[a_3\;2]^T\}$

\begin{figure}
\psfrag{&a}{\LARGE $a_1$}
\psfrag{&b}{\LARGE $a_2$}
\psfrag{&c}{\LARGE $a_3$}
\psfrag{&o}{\LARGE $\hspace{-0.5em}-1$}
\centering
\includegraphics[width=4.25cm]{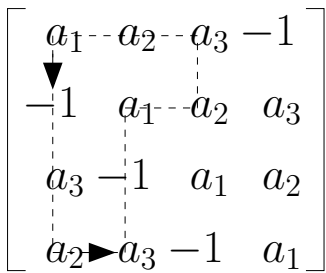}
\caption{Illustrative inevitable six-cycle that traverses three rows
  and three columns.}
\label{fig.illustrativePathThroughBaseMat_six}
\end{figure}

The logic of part (ii) is the same for rows and columns, hence we
provide the proof only for part the former.  Consider the ordered set
of coefficients of~(\ref{eq.eightCycleInevPath}).  We assert that
again the path must traverse at least three rows of the base matrix.
As before the first and last coefficients must be in the same row
since this path defines an inevitable cycle.  Each other sequential
pair of elements -- $([a_2\;A_2]^T, [b_1\;B_1]^T)$, $([b_2\;B_2]^T,
[a_2\;A_2]^T)$, and $([a_1\;A_1]^T, [b_2\;B_2]^T)$ -- must also lie in
the same rows. Consider the pair $([b_2\;B_2]^T, [a_2\;A_2]^T)$.  The
row this pair lies in can either be distinct from the starting row or
it can be the same.  If this row is distinct from the starting row
then, since successive rows are distinct, the row in which
$([a_2\;A_2]^T, [b_1\;B_1]^T)$ lies must be distinct both from this
row and from the starting row and the lemma is proved for this case.
On the other hand, say $([b_2\;B_2]^T, [a_2\;A_2]^T)$ lies in the
starting row.  We assert that in this case $([a_2\;A_2]^T,
[b_1\;B_1]^T)$ and $([a_1\;A_1]^T, [b_2\;B_2]^T)$ must lie in distinct
rows and so the total number of rows again is at least three. To see
this last assertion note first that the first $[a_1 \; A_1]^T$ and the
fifth coefficient $[a_2\;A_2]^T$ are, by assumption, in the same row.
Next observe that the second and sixth coefficients are $[a_2\;A_2]^T$
and $[a_1\;A_1]^T$, respectively, both in distinct rows from the
first.  As long as $\p{2} > 2$ these latter two coefficients (the
second and sixth) must be in distinct rows of the base matrix.  This
follows from the cyclic nature of the code.  The only way a pair of
coefficients could appear in two distinct rows and two distinct
columns in swapped order would be if $\p{2} = 2$, but we have assumed
that $\p{2} = 4$.  

The logic of the second case is illustrated in
Fig.~\ref{fig.illustrativePathThroughBaseMat_eight} for the pair of
polynomials $h_{j,l_1} = x^a + x^b y$ and $h_{j,l_2} = x^c + x^d y^3$.
The path illustrated corresponds to 
\begin{equation*}
\ordSet = \{(j,l_1), (j,l_1),
(j,l_2), (j, l_2), (j,l_1), (j,l_1), (j,l_2), (j, l_2)\}
\end{equation*}
 and
\begin{align*}
\pathCoeff = \left\{\tightArray{a_1}{0}, \tightArray{a_2}{1},
\tightArray{b_1}{0}, \tightArray{b_2}{3}, \tightArray{a_2}{1},
\tightArray{a_1}{0}, \tightArray{b_2}{3}, \tightArray{b_1}{0}\right\}.
\end{align*}

\begin{figure}
\psfrag{&a}{\LARGE $a_1$} \psfrag{&b}{\LARGE $a_2$} \psfrag{&c}{\LARGE
  $b_1$} \psfrag{&d}{\LARGE $b_2$} \psfrag{&o}{\LARGE $\hspace{-0.5em}
  -1$} \centering
\includegraphics[width=8.5cm]{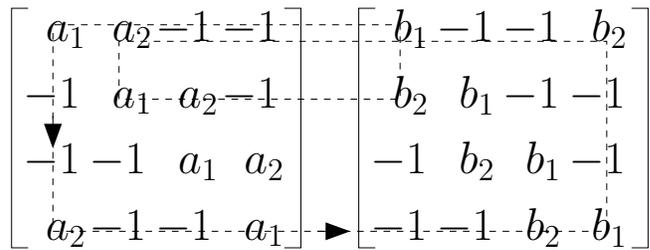}
\caption{Illustrative inevitable eight-cycle that traverses three
  rows.}
\label{fig.illustrativePathThroughBaseMat_eight}
\end{figure}

\subsection{Base matrices}
\label{sec.baseMatrixApp}

In this appendix, the base matrices of the two girth 10 QC LDPC codes
discussed in Sec.~\ref{sec.results} are specified below.  The base
matrix of the first code, $B_1$ is written in the transposed format
due to space.

\newcounter{mytempeqncnt}
\begin{figure*}[!t]
\normalsize
\setcounter{mytempeqncnt}{\value{equation}}
\setcounter{equation}{5} \small $B_1^{T}$ = \\
\begin{equation}
\hspace{-0.5cm} \left[ \begin{smallmatrix}
 &-1 &34 &-1 &-1 &-1 & 1 &-1 &-1 &-1 &-1 &27 &-1 &-1 &-1 &-1 &-1 &-1 &-1 &-1 &-1 &-1 &-1 &-1 &-1 &-1 &-1 &-1 &-1 &-1 &-1 &-1 &-1 &-1 &-1 &-1 &-1 &-1 &-1 &-1 &-1 &-1 &-1 &-1 &-1\\
 &-1 &-1 &34 &-1 &-1 &-1 & 1 &-1 &-1 &-1 &-1 &27 &-1 &-1 &-1 &-1 &-1 &-1 &-1 &-1 &-1 &-1 &-1 &-1 &-1 &-1 &-1 &-1 &-1 &-1 &-1 &-1 &-1 &-1 &-1 &-1 &-1 &-1 &-1 &-1 &-1 &-1 &-1 &-1\\
 &-1 &-1 &-1 &34 &-1 &-1 &-1 & 1 &27 &-1 &-1 &-1 &-1 &-1 &-1 &-1 &-1 &-1 &-1 &-1 &-1 &-1 &-1 &-1 &-1 &-1 &-1 &-1 &-1 &-1 &-1 &-1 &-1 &-1 &-1 &-1 &-1 &-1 &-1 &-1 &-1 &-1 &-1 &-1\\
 &34 &-1 &-1 &-1 & 1 &-1 &-1 &-1 &-1 &27 &-1 &-1 &-1 &-1 &-1 &-1 &-1 &-1 &-1 &-1 &-1 &-1 &-1 &-1 &-1 &-1 &-1 &-1 &-1 &-1 &-1 &-1 &-1 &-1 &-1 &-1 &-1 &-1 &-1 &-1 &-1 &-1 &-1 &-1\\
 &-1 &-1 &-1 & 7 &-1 &54 &-1 &-1 &21 &-1 &-1 &-1 &-1 &-1 &-1 &-1 &-1 &-1 &-1 &-1 &-1 &-1 &-1 &-1 &-1 &-1 &-1 &-1 &-1 &-1 &-1 &-1 &-1 &-1 &-1 &-1 &-1 &-1 &-1 &-1 &-1 &-1 &-1 &-1\\
 & 7 &-1 &-1 &-1 &-1 &-1 &54 &-1 &-1 &21 &-1 &-1 &-1 &-1 &-1 &-1 &-1 &-1 &-1 &-1 &-1 &-1 &-1 &-1 &-1 &-1 &-1 &-1 &-1 &-1 &-1 &-1 &-1 &-1 &-1 &-1 &-1 &-1 &-1 &-1 &-1 &-1 &-1 &-1\\
 &-1 & 7 &-1 &-1 &-1 &-1 &-1 &54 &-1 &-1 &21 &-1 &-1 &-1 &-1 &-1 &-1 &-1 &-1 &-1 &-1 &-1 &-1 &-1 &-1 &-1 &-1 &-1 &-1 &-1 &-1 &-1 &-1 &-1 &-1 &-1 &-1 &-1 &-1 &-1 &-1 &-1 &-1 &-1\\
 &-1 &-1 & 7 &-1 &54 &-1 &-1 &-1 &-1 &-1 &-1 &21 &-1 &-1 &-1 &-1 &-1 &-1 &-1 &-1 &-1 &-1 &-1 &-1 &-1 &-1 &-1 &-1 &-1 &-1 &-1 &-1 &-1 &-1 &-1 &-1 &-1 &-1 &-1 &-1 &-1 &-1 &-1 &-1\\
 &-1 &-1 &-1 &-1 &-1 &-1 &-1 &78 &-1 &-1 &89 &-1 &-1 &79 &-1 &-1 &-1 &-1 &-1 &-1 &-1 &-1 &-1 &-1 &-1 &-1 &-1 &-1 &-1 &-1 &-1 &-1 &-1 &-1 &-1 &-1 &-1 &-1 &-1 &-1 &-1 &-1 &-1 &-1\\
 &-1 &-1 &-1 &-1 &78 &-1 &-1 &-1 &-1 &-1 &-1 &89 &-1 &-1 &79 &-1 &-1 &-1 &-1 &-1 &-1 &-1 &-1 &-1 &-1 &-1 &-1 &-1 &-1 &-1 &-1 &-1 &-1 &-1 &-1 &-1 &-1 &-1 &-1 &-1 &-1 &-1 &-1 &-1\\
 &-1 &-1 &-1 &-1 &-1 &78 &-1 &-1 &89 &-1 &-1 &-1 &-1 &-1 &-1 &79 &-1 &-1 &-1 &-1 &-1 &-1 &-1 &-1 &-1 &-1 &-1 &-1 &-1 &-1 &-1 &-1 &-1 &-1 &-1 &-1 &-1 &-1 &-1 &-1 &-1 &-1 &-1 &-1\\
 &-1 &-1 &-1 &-1 &-1 &-1 &78 &-1 &-1 &89 &-1 &-1 &79 &-1 &-1 &-1 &-1 &-1 &-1 &-1 &-1 &-1 &-1 &-1 &-1 &-1 &-1 &-1 &-1 &-1 &-1 &-1 &-1 &-1 &-1 &-1 &-1 &-1 &-1 &-1 &-1 &-1 &-1 &-1\\
 &-1 &-1 &-1 &-1 &-1 &-1 &42 &-1 &-1 &-1 &79 &-1 &-1 &75 &-1 &-1 &-1 &-1 &-1 &-1 &-1 &-1 &-1 &-1 &-1 &-1 &-1 &-1 &-1 &-1 &-1 &-1 &-1 &-1 &-1 &-1 &-1 &-1 &-1 &-1 &-1 &-1 &-1 &-1\\
 &-1 &-1 &-1 &-1 &-1 &-1 &-1 &42 &-1 &-1 &-1 &79 &-1 &-1 &75 &-1 &-1 &-1 &-1 &-1 &-1 &-1 &-1 &-1 &-1 &-1 &-1 &-1 &-1 &-1 &-1 &-1 &-1 &-1 &-1 &-1 &-1 &-1 &-1 &-1 &-1 &-1 &-1 &-1\\
 &-1 &-1 &-1 &-1 &42 &-1 &-1 &-1 &79 &-1 &-1 &-1 &-1 &-1 &-1 &75 &-1 &-1 &-1 &-1 &-1 &-1 &-1 &-1 &-1 &-1 &-1 &-1 &-1 &-1 &-1 &-1 &-1 &-1 &-1 &-1 &-1 &-1 &-1 &-1 &-1 &-1 &-1 &-1\\
 &-1 &-1 &-1 &-1 &-1 &42 &-1 &-1 &-1 &79 &-1 &-1 &75 &-1 &-1 &-1 &-1 &-1 &-1 &-1 &-1 &-1 &-1 &-1 &-1 &-1 &-1 &-1 &-1 &-1 &-1 &-1 &-1 &-1 &-1 &-1 &-1 &-1 &-1 &-1 &-1 &-1 &-1 &-1\\
 &-1 &-1 &-1 &-1 &-1 &-1 &-1 &-1 &-1 &-1 &38 &-1 &93 &-1 &-1 &-1 &-1 &-1 &92 &-1 &-1 &-1 &-1 &-1 &-1 &-1 &-1 &-1 &-1 &-1 &-1 &-1 &-1 &-1 &-1 &-1 &-1 &-1 &-1 &-1 &-1 &-1 &-1 &-1\\
 &-1 &-1 &-1 &-1 &-1 &-1 &-1 &-1 &-1 &-1 &-1 &38 &-1 &93 &-1 &-1 &-1 &-1 &-1 &92 &-1 &-1 &-1 &-1 &-1 &-1 &-1 &-1 &-1 &-1 &-1 &-1 &-1 &-1 &-1 &-1 &-1 &-1 &-1 &-1 &-1 &-1 &-1 &-1\\
 &-1 &-1 &-1 &-1 &-1 &-1 &-1 &-1 &38 &-1 &-1 &-1 &-1 &-1 &93 &-1 &92 &-1 &-1 &-1 &-1 &-1 &-1 &-1 &-1 &-1 &-1 &-1 &-1 &-1 &-1 &-1 &-1 &-1 &-1 &-1 &-1 &-1 &-1 &-1 &-1 &-1 &-1 &-1\\
 &-1 &-1 &-1 &-1 &-1 &-1 &-1 &-1 &-1 &38 &-1 &-1 &-1 &-1 &-1 &93 &-1 &92 &-1 &-1 &-1 &-1 &-1 &-1 &-1 &-1 &-1 &-1 &-1 &-1 &-1 &-1 &-1 &-1 &-1 &-1 &-1 &-1 &-1 &-1 &-1 &-1 &-1 &-1\\
 &-1 &-1 &-1 &-1 &-1 &-1 &-1 &-1 &-1 &40 &-1 &-1 &-1 &-1 &76 &-1 &15 &-1 &-1 &-1 &-1 &-1 &-1 &-1 &-1 &-1 &-1 &-1 &-1 &-1 &-1 &-1 &-1 &-1 &-1 &-1 &-1 &-1 &-1 &-1 &-1 &-1 &-1 &-1\\
 &-1 &-1 &-1 &-1 &-1 &-1 &-1 &-1 &-1 &-1 &40 &-1 &-1 &-1 &-1 &76 &-1 &15 &-1 &-1 &-1 &-1 &-1 &-1 &-1 &-1 &-1 &-1 &-1 &-1 &-1 &-1 &-1 &-1 &-1 &-1 &-1 &-1 &-1 &-1 &-1 &-1 &-1 &-1\\
 &-1 &-1 &-1 &-1 &-1 &-1 &-1 &-1 &-1 &-1 &-1 &40 &76 &-1 &-1 &-1 &-1 &-1 &15 &-1 &-1 &-1 &-1 &-1 &-1 &-1 &-1 &-1 &-1 &-1 &-1 &-1 &-1 &-1 &-1 &-1 &-1 &-1 &-1 &-1 &-1 &-1 &-1 &-1\\
 &-1 &-1 &-1 &-1 &-1 &-1 &-1 &-1 &40 &-1 &-1 &-1 &-1 &76 &-1 &-1 &-1 &-1 &-1 &15 &-1 &-1 &-1 &-1 &-1 &-1 &-1 &-1 &-1 &-1 &-1 &-1 &-1 &-1 &-1 &-1 &-1 &-1 &-1 &-1 &-1 &-1 &-1 &-1\\
 &-1 &-1 &-1 &-1 &-1 &-1 &-1 &-1 &-1 &-1 &-1 &-1 &-1 &25 &-1 &-1 &-1 &-1 &81 &-1 &-1 &-1 &-1 &60 &-1 &-1 &-1 &-1 &-1 &-1 &-1 &-1 &-1 &-1 &-1 &-1 &-1 &-1 &-1 &-1 &-1 &-1 &-1 &-1\\
 &-1 &-1 &-1 &-1 &-1 &-1 &-1 &-1 &-1 &-1 &-1 &-1 &-1 &-1 &25 &-1 &-1 &-1 &-1 &81 &60 &-1 &-1 &-1 &-1 &-1 &-1 &-1 &-1 &-1 &-1 &-1 &-1 &-1 &-1 &-1 &-1 &-1 &-1 &-1 &-1 &-1 &-1 &-1\\
 &-1 &-1 &-1 &-1 &-1 &-1 &-1 &-1 &-1 &-1 &-1 &-1 &-1 &-1 &-1 &25 &81 &-1 &-1 &-1 &-1 &60 &-1 &-1 &-1 &-1 &-1 &-1 &-1 &-1 &-1 &-1 &-1 &-1 &-1 &-1 &-1 &-1 &-1 &-1 &-1 &-1 &-1 &-1\\
 &-1 &-1 &-1 &-1 &-1 &-1 &-1 &-1 &-1 &-1 &-1 &-1 &25 &-1 &-1 &-1 &-1 &81 &-1 &-1 &-1 &-1 &60 &-1 &-1 &-1 &-1 &-1 &-1 &-1 &-1 &-1 &-1 &-1 &-1 &-1 &-1 &-1 &-1 &-1 &-1 &-1 &-1 &-1\\
 &-1 &-1 &-1 &-1 &-1 &-1 &-1 &-1 &-1 &-1 &-1 &-1 &-1 &-1 &63 &-1 &-1 &-1 &91 &-1 &-1 &-1 & 0 &-1 &-1 &-1 &-1 &-1 &-1 &-1 &-1 &-1 &-1 &-1 &-1 &-1 &-1 &-1 &-1 &-1 &-1 &-1 &-1 &-1\\
 &-1 &-1 &-1 &-1 &-1 &-1 &-1 &-1 &-1 &-1 &-1 &-1 &-1 &-1 &-1 &63 &-1 &-1 &-1 &91 &-1 &-1 &-1 & 0 &-1 &-1 &-1 &-1 &-1 &-1 &-1 &-1 &-1 &-1 &-1 &-1 &-1 &-1 &-1 &-1 &-1 &-1 &-1 &-1\\
 &-1 &-1 &-1 &-1 &-1 &-1 &-1 &-1 &-1 &-1 &-1 &-1 &63 &-1 &-1 &-1 &91 &-1 &-1 &-1 & 0 &-1 &-1 &-1 &-1 &-1 &-1 &-1 &-1 &-1 &-1 &-1 &-1 &-1 &-1 &-1 &-1 &-1 &-1 &-1 &-1 &-1 &-1 &-1\\
 &-1 &-1 &-1 &-1 &-1 &-1 &-1 &-1 &-1 &-1 &-1 &-1 &-1 &63 &-1 &-1 &-1 &91 &-1 &-1 &-1 & 0 &-1 &-1 &-1 &-1 &-1 &-1 &-1 &-1 &-1 &-1 &-1 &-1 &-1 &-1 &-1 &-1 &-1 &-1 &-1 &-1 &-1 &-1\\
 &-1 &-1 &-1 &-1 &-1 &-1 &-1 &-1 &-1 &-1 &-1 &-1 &-1 &-1 &-1 &-1 &-1 &-1 &38 &-1 &90 &-1 &-1 &-1 &-1 &21 &-1 &-1 &-1 &-1 &-1 &-1 &-1 &-1 &-1 &-1 &-1 &-1 &-1 &-1 &-1 &-1 &-1 &-1\\
 &-1 &-1 &-1 &-1 &-1 &-1 &-1 &-1 &-1 &-1 &-1 &-1 &-1 &-1 &-1 &-1 &-1 &-1 &-1 &38 &-1 &90 &-1 &-1 &-1 &-1 &21 &-1 &-1 &-1 &-1 &-1 &-1 &-1 &-1 &-1 &-1 &-1 &-1 &-1 &-1 &-1 &-1 &-1\\
 &-1 &-1 &-1 &-1 &-1 &-1 &-1 &-1 &-1 &-1 &-1 &-1 &-1 &-1 &-1 &-1 &38 &-1 &-1 &-1 &-1 &-1 &90 &-1 &-1 &-1 &-1 &21 &-1 &-1 &-1 &-1 &-1 &-1 &-1 &-1 &-1 &-1 &-1 &-1 &-1 &-1 &-1 &-1\\
 &-1 &-1 &-1 &-1 &-1 &-1 &-1 &-1 &-1 &-1 &-1 &-1 &-1 &-1 &-1 &-1 &-1 &38 &-1 &-1 &-1 &-1 &-1 &90 &21 &-1 &-1 &-1 &-1 &-1 &-1 &-1 &-1 &-1 &-1 &-1 &-1 &-1 &-1 &-1 &-1 &-1 &-1 &-1\\
 &-1 &-1 &-1 &-1 &-1 &-1 &-1 &-1 &-1 &-1 &-1 &-1 &-1 &-1 &-1 &-1 &-1 &-1 &73 &-1 &-1 &-1 &-1 &96 &-1 &-1 &-1 &66 &-1 &-1 &-1 &-1 &-1 &-1 &-1 &-1 &-1 &-1 &-1 &-1 &-1 &-1 &-1 &-1\\
 &-1 &-1 &-1 &-1 &-1 &-1 &-1 &-1 &-1 &-1 &-1 &-1 &-1 &-1 &-1 &-1 &-1 &-1 &-1 &73 &96 &-1 &-1 &-1 &66 &-1 &-1 &-1 &-1 &-1 &-1 &-1 &-1 &-1 &-1 &-1 &-1 &-1 &-1 &-1 &-1 &-1 &-1 &-1\\
 &-1 &-1 &-1 &-1 &-1 &-1 &-1 &-1 &-1 &-1 &-1 &-1 &-1 &-1 &-1 &-1 &73 &-1 &-1 &-1 &-1 &96 &-1 &-1 &-1 &66 &-1 &-1 &-1 &-1 &-1 &-1 &-1 &-1 &-1 &-1 &-1 &-1 &-1 &-1 &-1 &-1 &-1 &-1\\
 &-1 &-1 &-1 &-1 &-1 &-1 &-1 &-1 &-1 &-1 &-1 &-1 &-1 &-1 &-1 &-1 &-1 &73 &-1 &-1 &-1 &-1 &96 &-1 &-1 &-1 &66 &-1 &-1 &-1 &-1 &-1 &-1 &-1 &-1 &-1 &-1 &-1 &-1 &-1 &-1 &-1 &-1 &-1\\
 &-1 &-1 &-1 &-1 &-1 &-1 &-1 &-1 &-1 &-1 &-1 &-1 &-1 &-1 &-1 &-1 &-1 &-1 &-1 &-1 &-1 &85 &-1 &-1 &-1 &66 &-1 &-1 &-1 &-1 & 2 &-1 &-1 &-1 &-1 &-1 &-1 &-1 &-1 &-1 &-1 &-1 &-1 &-1\\
 &-1 &-1 &-1 &-1 &-1 &-1 &-1 &-1 &-1 &-1 &-1 &-1 &-1 &-1 &-1 &-1 &-1 &-1 &-1 &-1 &-1 &-1 &85 &-1 &-1 &-1 &66 &-1 &-1 &-1 &-1 & 2 &-1 &-1 &-1 &-1 &-1 &-1 &-1 &-1 &-1 &-1 &-1 &-1\\
 &-1 &-1 &-1 &-1 &-1 &-1 &-1 &-1 &-1 &-1 &-1 &-1 &-1 &-1 &-1 &-1 &-1 &-1 &-1 &-1 &-1 &-1 &-1 &85 &-1 &-1 &-1 &66 & 2 &-1 &-1 &-1 &-1 &-1 &-1 &-1 &-1 &-1 &-1 &-1 &-1 &-1 &-1 &-1\\
 &-1 &-1 &-1 &-1 &-1 &-1 &-1 &-1 &-1 &-1 &-1 &-1 &-1 &-1 &-1 &-1 &-1 &-1 &-1 &-1 &85 &-1 &-1 &-1 &66 &-1 &-1 &-1 &-1 & 2 &-1 &-1 &-1 &-1 &-1 &-1 &-1 &-1 &-1 &-1 &-1 &-1 &-1 &-1\\
 &-1 &-1 &-1 &-1 &-1 &-1 &-1 &-1 &-1 &-1 &-1 &-1 &-1 &-1 &-1 &-1 &-1 &-1 &-1 &-1 &-1 &90 &-1 &-1 &-1 &-1 &-1 &57 & 2 &-1 &-1 &-1 &-1 &-1 &-1 &-1 &-1 &-1 &-1 &-1 &-1 &-1 &-1 &-1\\
 &-1 &-1 &-1 &-1 &-1 &-1 &-1 &-1 &-1 &-1 &-1 &-1 &-1 &-1 &-1 &-1 &-1 &-1 &-1 &-1 &-1 &-1 &90 &-1 &57 &-1 &-1 &-1 &-1 & 2 &-1 &-1 &-1 &-1 &-1 &-1 &-1 &-1 &-1 &-1 &-1 &-1 &-1 &-1\\
 &-1 &-1 &-1 &-1 &-1 &-1 &-1 &-1 &-1 &-1 &-1 &-1 &-1 &-1 &-1 &-1 &-1 &-1 &-1 &-1 &-1 &-1 &-1 &90 &-1 &57 &-1 &-1 &-1 &-1 & 2 &-1 &-1 &-1 &-1 &-1 &-1 &-1 &-1 &-1 &-1 &-1 &-1 &-1\\
 &-1 &-1 &-1 &-1 &-1 &-1 &-1 &-1 &-1 &-1 &-1 &-1 &-1 &-1 &-1 &-1 &-1 &-1 &-1 &-1 &90 &-1 &-1 &-1 &-1 &-1 &57 &-1 &-1 &-1 &-1 & 2 &-1 &-1 &-1 &-1 &-1 &-1 &-1 &-1 &-1 &-1 &-1 &-1\\
 &-1 &-1 &-1 &-1 &-1 &-1 &-1 &-1 &-1 &-1 &-1 &-1 &-1 &-1 &-1 &-1 &-1 &-1 &-1 &-1 &-1 &-1 &-1 &-1 &-1 &-1 &12 &-1 &83 &-1 &-1 &-1 &73 &-1 &-1 &-1 &-1 &-1 &-1 &-1 &-1 &-1 &-1 &-1\\
 &-1 &-1 &-1 &-1 &-1 &-1 &-1 &-1 &-1 &-1 &-1 &-1 &-1 &-1 &-1 &-1 &-1 &-1 &-1 &-1 &-1 &-1 &-1 &-1 &-1 &-1 &-1 &12 &-1 &83 &-1 &-1 &-1 &73 &-1 &-1 &-1 &-1 &-1 &-1 &-1 &-1 &-1 &-1\\
 &-1 &-1 &-1 &-1 &-1 &-1 &-1 &-1 &-1 &-1 &-1 &-1 &-1 &-1 &-1 &-1 &-1 &-1 &-1 &-1 &-1 &-1 &-1 &-1 &12 &-1 &-1 &-1 &-1 &-1 &83 &-1 &-1 &-1 &73 &-1 &-1 &-1 &-1 &-1 &-1 &-1 &-1 &-1\\
 &-1 &-1 &-1 &-1 &-1 &-1 &-1 &-1 &-1 &-1 &-1 &-1 &-1 &-1 &-1 &-1 &-1 &-1 &-1 &-1 &-1 &-1 &-1 &-1 &-1 &12 &-1 &-1 &-1 &-1 &-1 &83 &-1 &-1 &-1 &73 &-1 &-1 &-1 &-1 &-1 &-1 &-1 &-1\\
 &-1 &-1 &-1 &-1 &-1 &-1 &-1 &-1 &-1 &-1 &-1 &-1 &-1 &-1 &-1 &-1 &-1 &-1 &-1 &-1 &-1 &-1 &-1 &-1 &-1 &-1 &-1 &91 &-1 &17 &-1 &-1 &50 &-1 &-1 &-1 &-1 &-1 &-1 &-1 &-1 &-1 &-1 &-1\\
 &-1 &-1 &-1 &-1 &-1 &-1 &-1 &-1 &-1 &-1 &-1 &-1 &-1 &-1 &-1 &-1 &-1 &-1 &-1 &-1 &-1 &-1 &-1 &-1 &91 &-1 &-1 &-1 &-1 &-1 &17 &-1 &-1 &50 &-1 &-1 &-1 &-1 &-1 &-1 &-1 &-1 &-1 &-1\\
 &-1 &-1 &-1 &-1 &-1 &-1 &-1 &-1 &-1 &-1 &-1 &-1 &-1 &-1 &-1 &-1 &-1 &-1 &-1 &-1 &-1 &-1 &-1 &-1 &-1 &91 &-1 &-1 &-1 &-1 &-1 &17 &-1 &-1 &50 &-1 &-1 &-1 &-1 &-1 &-1 &-1 &-1 &-1\\
 &-1 &-1 &-1 &-1 &-1 &-1 &-1 &-1 &-1 &-1 &-1 &-1 &-1 &-1 &-1 &-1 &-1 &-1 &-1 &-1 &-1 &-1 &-1 &-1 &-1 &-1 &91 &-1 &17 &-1 &-1 &-1 &-1 &-1 &-1 &50 &-1 &-1 &-1 &-1 &-1 &-1 &-1 &-1\\
 &-1 &-1 &-1 &-1 &-1 &-1 &-1 &-1 &-1 &-1 &-1 &-1 &-1 &-1 &-1 &-1 &-1 &-1 &-1 &-1 &-1 &-1 &-1 &-1 &-1 &-1 &-1 &-1 &-1 &-1 &57 &-1 &34 &-1 &-1 &-1 &-1 &-1 &-1 &55 &-1 &-1 &-1 &-1\\
 &-1 &-1 &-1 &-1 &-1 &-1 &-1 &-1 &-1 &-1 &-1 &-1 &-1 &-1 &-1 &-1 &-1 &-1 &-1 &-1 &-1 &-1 &-1 &-1 &-1 &-1 &-1 &-1 &-1 &-1 &-1 &57 &-1 &34 &-1 &-1 &55 &-1 &-1 &-1 &-1 &-1 &-1 &-1\\
 &-1 &-1 &-1 &-1 &-1 &-1 &-1 &-1 &-1 &-1 &-1 &-1 &-1 &-1 &-1 &-1 &-1 &-1 &-1 &-1 &-1 &-1 &-1 &-1 &-1 &-1 &-1 &-1 &57 &-1 &-1 &-1 &-1 &-1 &34 &-1 &-1 &55 &-1 &-1 &-1 &-1 &-1 &-1\\
 &-1 &-1 &-1 &-1 &-1 &-1 &-1 &-1 &-1 &-1 &-1 &-1 &-1 &-1 &-1 &-1 &-1 &-1 &-1 &-1 &-1 &-1 &-1 &-1 &-1 &-1 &-1 &-1 &-1 &57 &-1 &-1 &-1 &-1 &-1 &34 &-1 &-1 &55 &-1 &-1 &-1 &-1 &-1\\
 &-1 &-1 &-1 &-1 &-1 &-1 &-1 &-1 &-1 &-1 &-1 &-1 &-1 &-1 &-1 &-1 &-1 &-1 &-1 &-1 &-1 &-1 &-1 &-1 &-1 &-1 &-1 &-1 &-1 &-1 &-1 &25 &-1 &-1 &-1 &92 &-1 &-1 &78 &-1 &-1 &-1 &-1 &-1\\
 &-1 &-1 &-1 &-1 &-1 &-1 &-1 &-1 &-1 &-1 &-1 &-1 &-1 &-1 &-1 &-1 &-1 &-1 &-1 &-1 &-1 &-1 &-1 &-1 &-1 &-1 &-1 &-1 &25 &-1 &-1 &-1 &92 &-1 &-1 &-1 &-1 &-1 &-1 &78 &-1 &-1 &-1 &-1\\
 &-1 &-1 &-1 &-1 &-1 &-1 &-1 &-1 &-1 &-1 &-1 &-1 &-1 &-1 &-1 &-1 &-1 &-1 &-1 &-1 &-1 &-1 &-1 &-1 &-1 &-1 &-1 &-1 &-1 &25 &-1 &-1 &-1 &92 &-1 &-1 &78 &-1 &-1 &-1 &-1 &-1 &-1 &-1\\
 &-1 &-1 &-1 &-1 &-1 &-1 &-1 &-1 &-1 &-1 &-1 &-1 &-1 &-1 &-1 &-1 &-1 &-1 &-1 &-1 &-1 &-1 &-1 &-1 &-1 &-1 &-1 &-1 &-1 &-1 &25 &-1 &-1 &-1 &92 &-1 &-1 &78 &-1 &-1 &-1 &-1 &-1 &-1\\
 &-1 &-1 &-1 &-1 &-1 &-1 &-1 &-1 &-1 &-1 &-1 &-1 &-1 &-1 &-1 &-1 &-1 &-1 &-1 &-1 &-1 &-1 &-1 &-1 &-1 &-1 &-1 &-1 &-1 &-1 &-1 &-1 &-1 &87 &-1 &-1 &-1 &-1 &-1 &62 &-1 &-1 &-1 &23\\
 &-1 &-1 &-1 &-1 &-1 &-1 &-1 &-1 &-1 &-1 &-1 &-1 &-1 &-1 &-1 &-1 &-1 &-1 &-1 &-1 &-1 &-1 &-1 &-1 &-1 &-1 &-1 &-1 &-1 &-1 &-1 &-1 &-1 &-1 &87 &-1 &62 &-1 &-1 &-1 &19 &-1 &-1 &-1\\
 &-1 &-1 &-1 &-1 &-1 &-1 &-1 &-1 &-1 &-1 &-1 &-1 &-1 &-1 &-1 &-1 &-1 &-1 &-1 &-1 &-1 &-1 &-1 &-1 &-1 &-1 &-1 &-1 &-1 &-1 &-1 &-1 &-1 &-1 &-1 &87 &-1 &62 &-1 &-1 &-1 &19 &-1 &-1\\
 &-1 &-1 &-1 &-1 &-1 &-1 &-1 &-1 &-1 &-1 &-1 &-1 &-1 &-1 &-1 &-1 &-1 &-1 &-1 &-1 &-1 &-1 &-1 &-1 &-1 &-1 &-1 &-1 &-1 &-1 &-1 &-1 &87 &-1 &-1 &-1 &-1 &-1 &62 &-1 &-1 &-1 &23 &-1\\
 &-1 &-1 &-1 &-1 &-1 &-1 &-1 &-1 &-1 &-1 &-1 &-1 &-1 &-1 &-1 &-1 &-1 &-1 &-1 &-1 &-1 &-1 &-1 &-1 &-1 &-1 &-1 &-1 &-1 &-1 &-1 &-1 &-1 &-1 &33 &-1 &50 &-1 &-1 &-1 &-1 &-1 &-1 &71\\
 &-1 &-1 &-1 &-1 &-1 &-1 &-1 &-1 &-1 &-1 &-1 &-1 &-1 &-1 &-1 &-1 &-1 &-1 &-1 &-1 &-1 &-1 &-1 &-1 &-1 &-1 &-1 &-1 &-1 &-1 &-1 &-1 &-1 &-1 &-1 &33 &-1 &50 &-1 &-1 &29 &-1 &-1 &-1\\
 &-1 &-1 &-1 &-1 &-1 &-1 &-1 &-1 &-1 &-1 &-1 &-1 &-1 &-1 &-1 &-1 &-1 &-1 &-1 &-1 &-1 &-1 &-1 &-1 &-1 &-1 &-1 &-1 &-1 &-1 &-1 &-1 &33 &-1 &-1 &-1 &-1 &-1 &50 &-1 &-1 &29 &-1 &-1\\
 &-1 &-1 &-1 &-1 &-1 &-1 &-1 &-1 &-1 &-1 &-1 &-1 &-1 &-1 &-1 &-1 &-1 &-1 &-1 &-1 &-1 &-1 &-1 &-1 &-1 &-1 &-1 &-1 &-1 &-1 &-1 &-1 &-1 &33 &-1 &-1 &-1 &-1 &-1 &50 &-1 &-1 &71 &-1\\
 &-1 &-1 &-1 &-1 &-1 &-1 &-1 &-1 &-1 &-1 &-1 &-1 &-1 &-1 &-1 &-1 &-1 &-1 &-1 &-1 &-1 &-1 &-1 &-1 &-1 &-1 &-1 &-1 &-1 &-1 &-1 &-1 &-1 &-1 &-1 &-1 &-1 &-1 &-1 &31 &-1 &66 &82 &-1\\
 &-1 &-1 &-1 &-1 &-1 &-1 &-1 &-1 &-1 &-1 &-1 &-1 &-1 &-1 &-1 &-1 &-1 &-1 &-1 &-1 &-1 &-1 &-1 &-1 &-1 &-1 &-1 &-1 &-1 &-1 &-1 &-1 &-1 &-1 &-1 &-1 &31 &-1 &-1 &-1 &-1 &-1 &35 &82\\
 &-1 &-1 &-1 &-1 &-1 &-1 &-1 &-1 &-1 &-1 &-1 &-1 &-1 &-1 &-1 &-1 &-1 &-1 &-1 &-1 &-1 &-1 &-1 &-1 &-1 &-1 &-1 &-1 &-1 &-1 &-1 &-1 &-1 &-1 &-1 &-1 &-1 &31 &-1 &-1 &25 &-1 &-1 &35\\
 &-1 &-1 &-1 &-1 &-1 &-1 &-1 &-1 &-1 &-1 &-1 &-1 &-1 &-1 &-1 &-1 &-1 &-1 &-1 &-1 &-1 &-1 &-1 &-1 &-1 &-1 &-1 &-1 &-1 &-1 &-1 &-1 &-1 &-1 &-1 &-1 &-1 &-1 &31 &-1 &66 &25 &-1 &-1\\
 &-1 &-1 &-1 &-1 &-1 &-1 &-1 &-1 &-1 &-1 &-1 &-1 &-1 &-1 &-1 &-1 &-1 &-1 &-1 &-1 &-1 &-1 &-1 &-1 &-1 &-1 &-1 &-1 &-1 &-1 &-1 &-1 &-1 &-1 &-1 &-1 &-1 &-1 &-1 &19 &-1 &95 &47 &-1\\
 &-1 &-1 &-1 &-1 &-1 &-1 &-1 &-1 &-1 &-1 &-1 &-1 &-1 &-1 &-1 &-1 &-1 &-1 &-1 &-1 &-1 &-1 &-1 &-1 &-1 &-1 &-1 &-1 &-1 &-1 &-1 &-1 &-1 &-1 &-1 &-1 &19 &-1 &-1 &-1 &-1 &-1 &17 &47\\
 &-1 &-1 &-1 &-1 &-1 &-1 &-1 &-1 &-1 &-1 &-1 &-1 &-1 &-1 &-1 &-1 &-1 &-1 &-1 &-1 &-1 &-1 &-1 &-1 &-1 &-1 &-1 &-1 &-1 &-1 &-1 &-1 &-1 &-1 &-1 &-1 &-1 &19 &-1 &-1 & 3 &-1 &-1 &17\\
 &-1 &-1 &-1 &-1 &-1 &-1 &-1 &-1 &-1 &-1 &-1 &-1 &-1 &-1 &-1 &-1 &-1 &-1 &-1 &-1 &-1 &-1 &-1 &-1 &-1 &-1 &-1 &-1 &-1 &-1 &-1 &-1 &-1 &-1 &-1 &-1 &-1 &-1 &19 &-1 &95 & 3 &-1 &-1
 \end{smallmatrix} \right]. \nonumber
\end{equation}
\setcounter{equation}{\value{mytempeqncnt}}
\vspace*{4pt}
\end{figure*}

\begin{figure*}[!t]
\normalsize
\setcounter{mytempeqncnt}{\value{equation}}
\setcounter{equation}{5}
\begin{equation}
B_2 = \left[ \begin{smallmatrix}
&-1&-1&748&-1&-1&3&-1&-1&-1&-1&-1&-1&-1&-1&-1&-1&-1&-1&-1&-1&-1&-1&-1&-1\\
&-1&-1&-1&748&-1&-1&3&-1&-1&-1&-1&-1&-1&-1&-1&-1&-1&-1&-1&-1&-1&-1&-1&-1\\
&748&-1&-1&-1&-1&-1&-1&3&-1&-1&-1&-1&-1&-1&-1&-1&-1&-1&-1&-1&-1&-1&-1&-1\\
&-1&748&-1&-1&3&-1&-1&-1&-1&-1&-1&-1&-1&-1&-1&-1&-1&-1&-1&-1&-1&-1&-1&-1\\
&-1&127&-1&-1&-1&941&-1&-1&-1&855&-1&-1&808&-1&-1&-1&-1&-1&-1&-1&-1&-1&-1&-1\\
&-1&-1&127&-1&-1&-1&941&-1&-1&-1&855&-1&-1&808&-1&-1&-1&-1&-1&-1&-1&-1&-1&-1\\
&-1&-1&-1&127&-1&-1&-1&941&-1&-1&-1&855&-1&-1&808&-1&-1&-1&-1&-1&-1&-1&-1&-1\\
&127&-1&-1&-1&941&-1&-1&-1&855&-1&-1&-1&-1&-1&-1&808&-1&-1&-1&-1&-1&-1&-1&-1\\
&382&-1&-1&-1&-1&-1&-1&5&-1&228&-1&-1&366&-1&-1&-1&-1&242&-1&-1&-1&-1&-1&683\\
&-1&382&-1&-1&5&-1&-1&-1&-1&-1&228&-1&-1&366&-1&-1&-1&-1&242&-1&683&-1&-1&-1\\
&-1&-1&382&-1&-1&5&-1&-1&-1&-1&-1&228&-1&-1&366&-1&-1&-1&-1&242&-1&683&-1&-1\\
&-1&-1&-1&382&-1&-1&5&-1&228&-1&-1&-1&-1&-1&-1&366&242&-1&-1&-1&-1&-1&683&-1\\
&-1&-1&391&-1&-1&-1&-1&246&-1&920&728&-1&952&-1&602&-1&295&621&847&-1&506&884&-1&302\\
&-1&-1&-1&391&246&-1&-1&-1&-1&-1&920&728&-1&952&-1&602&-1&295&621&847&302&506&884&-1\\
&22&-1&-1&-1&-1&624&-1&-1&884&-1&-1&473&622&-1&982&-1&644&-1&794&598&-1&614&511&213\\
&-1&22&-1&-1&-1&-1&624&-1&473&884&-1&-1&-1&622&-1&982&598&644&-1&794&213&-1&614&511
 \end{smallmatrix} \right].\nonumber
\label{eq.code2BM}
\end{equation}
\setcounter{equation}{\value{mytempeqncnt}}
\hrulefill
\vspace*{4pt}
\end{figure*}

\bibliographystyle{IEEE}

%

%

\end{document}